\documentclass[article]{article}
\usepackage{color}
\usepackage{graphicx}
\usepackage{amsmath}
\usepackage{amssymb}
\usepackage{mathrsfs}
\usepackage[numbers,sort&compress]{natbib}
\usepackage{amsthm}
\usepackage{extarrows}
\usepackage{tikz}
\usepackage{float}
\usepackage{appendix}
\usepackage{hyperref}
\hypersetup{hypertex=true,
            colorlinks=true,
            linkcolor=blue,
            anchorcolor=blue,
            citecolor=red}
\usepackage{indentfirst}\setlength{\parindent}{2em}
\newtheorem{theorem}{Theorem}[section]
\newtheorem{lemma}{Lemma}[section]

\newtheorem{proposition}{Proposition}[section]
\newtheorem{Assumption}{Assumption}[section]
\newtheorem{remark}{Remark}[section]
\newtheorem{RHP}{RH problem}
\DeclareMathOperator*{\im}{Im}
\DeclareMathOperator*{\re}{Re}
\usepackage{caption}
\usepackage{float}
\usepackage{subfigure}
\usepackage{geometry}
\geometry{a4paper,scale=0.8}
\geometry{a4paper,left=2.5cm,right=2.5cm,top=2.5cm,bottom=2.5cm}
\parindent=17pt
\makeatletter
\@addtoreset{equation}{section}
\makeatother
\renewcommand{\theequation}{\arabic{section}.\arabic{equation}}
\begin{document}
\title{Large-time asymptotics to the focusing nonlocal modified Kortweg-de Vries equation with step-like boundary conditions}
\author{Taiyang Xu$^1$\thanks{\ Taiyang's email: tyxu19@fudan.edu.cn} \ \ \ \ Engui Fan$^1$\thanks{\ Corresponding author and email address: faneg@fudan.edu.cn } }
\footnotetext[1]{\ School of Mathematical Sciences and Key Laboratory for Nonlinear Science, Fudan University,
Shanghai 200433, P.R. China.}

\date{\today}
\baselineskip=16pt
\maketitle
\begin{abstract}
\baselineskip=16pt
We investigate the large-time asymptotics of solution for the Cauchy problem of the   nonlocal focusing
modified Kortweg-de Vries (MKdV) equation with step-like initial data, i.e.,
$u_0(x)\rightarrow 0$ as $x\rightarrow-\infty$, $u_0(x)\rightarrow A$ as $x\rightarrow+\infty$,
where $A$ is an arbitrary positive real number. We firstly develop the direct scattering theory
to establish the basic Riemann-Hilbert (RH) problem associated with step-like initial data. Thanks to the symmetries $x\rightarrow-x$, $t\rightarrow-t$
of nonlocal MKdV equation, we investigate the asymptotics for $t\rightarrow-\infty$ and $t\rightarrow+\infty$ respectively.
Our main technique is to use the steepest descent analysis to deform the original matrix-valued RH problem to
corresponded regular RH problem, which could be explicitly solved. Finally we obtain the different large-time asymptotic behaviors
of the solution of the Cauchy problem for focusing nonlocal MKdV equation in different
space-time sectors $\mathcal{R}_{I}$,  $\mathcal{R}_{II}$,  $\mathcal{R}_{III}$ and $\mathcal{R}_{IV}$ on the whole $(x,t)$-plane.
\\
\\
{\bf Keywords:} Nonlocal focusing MKdV equation, Cauchy problem with step-like initial data, Riemann-Hilbert problem, Nonlinear steepest method, Large-time asymptotics.
\\
\\
{\bf Mathematics Subject Classification:} 35Q51; 35Q15; 35C20; 37K15; 37K40.
\end{abstract}
\baselineskip=16pt

\tableofcontents

\section{Introduction and state of results}
The pioneering work for the nonlocal integrable systems was introduced by M. Ablowitz and Z. Musslimani to study the nonlinear nonlocal Schr\"odinger equation (NNLS) with PT symmetry \cite{MZPRL2013}.
As the first nonlocal integrable system, nonlocal NLS equation is a reduction of a member of the AKNS hierarchy \cite{AKNS}, namely, of the coupled Schr\"odinger equations
\begin{align*}
    iq_t+q_{xx}+2q^2r=0, \quad -ir_t+r_{xx}+2r^2q=0,
\end{align*}
corresponding to $r(x,t)=\bar{q}(-x,t)$.

Besides the nonlocal NLS equation, researchers apply the PT symmetric reduction to AKNS and the other hierarchies to derive the other types of integrable nonlocal PDEs, which include
the space-time Sine-Gordon/Sinh-Gordon equation \cite{ZFLMnonlocalSine-Gordon}, the complex/real MKdV equation \cite{MZNon2016,MZSAPM2017}, the nonlocal derivative NLS equation \cite{ZXZhou},
as well as the multidimensional nonlocal Davey-Stewartson equation \cite{nonlocalDS}.

The real nonlocal (also called reverse-space-time) MKdV (NMKdV) equation mentioned in reference \cite{MZNon2016,MZSAPM2017} is as follows:
\begin{equation}\label{FDNMKdV}
    u_{t}(x,t)+6\sigma u(x,t)u(-x,-t)u_{x}(x,t)+u_{xxx}(x,t)=0,
\end{equation}
where $\sigma=\pm 1$. For $\sigma=1$ and $\sigma=-1$, we call the corresponding equation \eqref{FDNMKdV} the focusing NMKdV equation, defocusing NMKdV equation respectively.

In the present paper, we study the following Cauchy problem for the focusing NMKdV equation (which corresponds to $\sigma=1$) with a step-like initial data
\begin{subequations}\label{Cauchy Problem}
    \begin{align}
        &u_{t}(x,t)+6u(x,t)u(-x,-t)u_{x}(x,t)+u_{xxx}(x,t)=0, \ \ x\in\mathbb{R}, \ t\in\mathbb{R},  \label{FNMKdV} \\
        &u(x,t=0)=u_{0}(x) \rightarrow\left\{
            \begin{aligned}
            &0, \ \ x\rightarrow-\infty,\\
            &A, \ \ x\rightarrow +\infty,
            \end{aligned}
            \right.,  \label{initialcon}
    \end{align}
where sufficiently fast with some real constant $A>0$.
\end{subequations}

Like the local MKdV equation, the nonlinear nonlocal MKdV equation also attracted much attention because of its
rich physical and mathematical properties. Some mathematical theories for nonlocal MKdV equation have been constructed by a lot of
investigators. For instance, some exact solutions of nonlocal MKdV equation including soliton solution, kink solution, rogue-wave and breathers,
which display some new different properties from those of local MKdV equation, are obtained through Darboux transformation \cite{ZhuZN1} as well as IST technique \cite{ZhuZN2}.
In \cite{NMKdVRHNZBCs}, Zhang and Yan use the Riemann-Hilbert approach to construct the soliton solutions under the nonzero boundary conditions with symmetric limiting functions
like $|q_{\pm}|=q_0$. In physical application, the nonlocal MKdV equation possesses the shifted parity and delayed time reveal symmetry, and thus it can be related to the Alice-Bob system \cite{SYLou}.
For the long-time asymptotic analysis of NMKdV equation \eqref{FDNMKdV}, He, Fan and Xu establish the long-time asymptotics for nonlocal defocusing MKdV equation (corresponding to $\sigma=-1$)
with decaying boundary conditions. Under the nonzero boundary conditions ($|q_{\pm}|=q_0$), Zhou and Fan use the uniformization technique to
investigate the long-time asymptotics for defocusing NMKdV equation \cite{NMKDVNBZCslongtimeI,NMKDVNBZCslongtimeII} via Dbar steepest method which was developed by
McLaughlin and his collaborators \cite{Dieng,M&M2006,M&M2008}.

Cauchy problems for nonlinear integrable systems with step-like initial data have a long history and origin from the work of Gurevich and Pitaevsky for KdV equation
\cite{GuPitae}. Since Deift-Zhou nonlinear steepest method \cite{DZAnn} became a powerful tool to investigate the long-time asymptotics of Cauchy problem for integrable PDEs,
researchers apply it to investigate a large number of integrable equations, in particular, local integrable equations with step-like initial data. The typical representative works are as follows.
Monvel, Lenells, Shepelsky and their co-authors investigate the long-time asymptotics for focusing NLS equations (corresponding to non-self-adjoint Lax operator)
\cite{MonKotSheIMRN,MonLenSheCMP,MonLenSheCMP3}. For local focusing MKdV equation, refer Minakov and his collaborators' work
\cite{GraMinakovSIAM,KotMinakovJMP,KotMinakovJMPAG1,KotMinakovJMPAG2,MinakovJPAMaTheor}. For defocusing equations (corresponding to self-adjoint Lax operator), please refer
Fromm, Lenells, Quirchmayr' work on local defocusing NLS equation \cite{LenellsDNLS}, Jenkins' work on local defocusing NLS \cite{JenkinsNon} with pure step initial functions as well as
Xu, Fan's work on local defocusing MKdV \cite{Xudmkdvsteplik}. For the step-like initial valued problem of nonlocal integrable equations, there is little related work except to some work
on nonlocal NLS equations due to Rybalko and Shepelsky \cite{RDJDE2021,RDCMP,RDJMPAG,RDStudies}. Inspired by Rybalko and Shepelsky's work, we generate the motivation to investigate the
long-time asymptotics for the Cauchy problem of the nonlocal MKdV equation \eqref{Cauchy Problem}, of which long-time asymptotics have not been studied to the best of my knowledge.

In the present work, we assume that the solution $u(x,t)$ of Cauchy problem \eqref{FNMKdV}-\eqref{initialcon} satisfies the boundary conditions (consistent with the equation) for all $t$:
\begin{subequations}\label{bdrycondition}
    \begin{align}
        &u(x,t)=A+o(1), &x\rightarrow+\infty,\\
        &u(x,t)=o(1), &x\rightarrow-\infty.
    \end{align}
\end{subequations}
In fact, we will make the sense of $o(1)$ more precise in the following contents. This choice of initial data and boundary conditions is inspired by some works for the classical (local) MKdV equation
(eg. \cite{KotMinakovJMP}),
\begin{equation}\label{localMkdV}
    u(x,t)+6u^2(x,t)u_{x}(x,t)+u_{xxx}(x,t)=0
\end{equation}
with
\begin{subequations}
    \begin{align}
        &u_0(x)=A+o(1), \quad x\rightarrow+\infty,\\
        &u_0(x)=o(1), \quad x\rightarrow-\infty.
    \end{align}
\end{subequations}
where $A$ indeed is an rarefaction wave solution of local MKdV equation \eqref{localMkdV}.

If we make the initial value $u_0(x)$ be the pure step functions (or say ``shifted step'' initial data) as follows
\begin{align}\label{purestepinitial}
    &u_{0A}(x):=\left\{
        \begin{aligned}
        &0, \ \ x<0,\\
        &A, \ \ x>0,
        \end{aligned}
            \right.
\end{align}
then the asymptotic analysis can be simplified. In the present paper, we assume that the initial function $u_0(x)$ is a compact perturbation
of the pure step initial function defined by \eqref{purestepinitial}, i.e., $u_0(x)-u_{0A}(x)=0$ for $|x|>N$ with some $N>0$, which can make our spectral functions
take properties similar to those in the case of the pure step initial functions.

\subsection{Main results}
We  divide the whole $(x,t)$-plane into different space-time cones depicted by Fig \ref{cone}, in which we obtain the different long-time asymptotic behavior of the solution
for nonlocal focusing MKdV equation \eqref{Cauchy Problem} with step-like initial data.
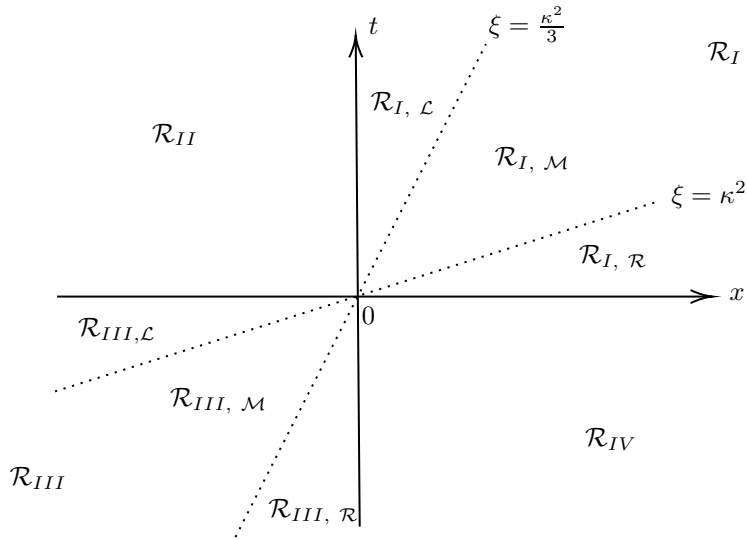
\begin{figure}[htbp]
    \begin{center}
        \tikzset{every picture/.style={line width=0.75pt}} 
        \begin{tikzpicture}[x=0.75pt,y=0.75pt,yscale=-1,xscale=1]
        \draw    (142,162) -- (467,162) ;
        \draw [shift={(469,162)}, rotate = 180] [color={rgb, 255:red, 0; green, 0; blue, 0 }  ][line width=0.75]    (10.93,-3.29) .. controls (6.95,-1.4) and (3.31,-0.3) .. (0,0) .. controls (3.31,0.3) and (6.95,1.4) .. (10.93,3.29)   ;
        \draw    (293,277.71) -- (291.02,32.71) ;
        \draw [shift={(291,30.71)}, rotate = 89.54] [color={rgb, 255:red, 0; green, 0; blue, 0 }  ][line width=0.75]    (10.93,-3.29) .. controls (6.95,-1.4) and (3.31,-0.3) .. (0,0) .. controls (3.31,0.3) and (6.95,1.4) .. (10.93,3.29)   ;
        \draw  [dash pattern={on 0.84pt off 2.51pt}]  (440,114.71) -- (141,209.71) ;
        \draw  [dash pattern={on 0.84pt off 2.51pt}]  (231,283) -- (356,35) ;
        \draw (188,73.4) node [anchor=north west][inner sep=0.75pt]    {$\mathcal{R}_{II}$};
        \draw (404,226.4) node [anchor=north west][inner sep=0.75pt]    {$\mathcal{R}_{IV}$};
        \draw (476,156.4) node [anchor=north west][inner sep=0.75pt]    {$x$};
        \draw (296,20.4) node [anchor=north west][inner sep=0.75pt]    {$t$};
        \draw (465,32.4) node [anchor=north west][inner sep=0.75pt]    {$\mathcal{R}_{I}$};
        \draw (447,102.4) node [anchor=north west][inner sep=0.75pt]    {$\xi =\kappa ^{2}$};
        \draw (356,14.4) node [anchor=north west][inner sep=0.75pt]    {$\xi =\frac{\kappa ^{2}}{3}$};
        \draw (297,57.4) node [anchor=north west][inner sep=0.75pt]    {$\mathcal{R}_{I,\ \mathcal{L}}$};
        \draw (359,85.4) node [anchor=north west][inner sep=0.75pt]    {$\mathcal{R}_{I,\ \mathcal{M}}$};
        \draw (401,134.4) node [anchor=north west][inner sep=0.75pt]    {$\mathcal{R}_{I,\ \mathcal{R}}$};
        \draw (246,262.4) node [anchor=north west][inner sep=0.75pt]    {$\mathcal{R}_{III,\ \mathcal{R}}$};
        \draw (197,206.4) node [anchor=north west][inner sep=0.75pt]    {$\mathcal{R}_{III,\ \mathcal{M}}$};
        \draw (151,171.4) node [anchor=north west][inner sep=0.75pt]    {$\mathcal{R}_{III,\mathcal{L}}$};
        \draw (117,246.4) node [anchor=north west][inner sep=0.75pt]    {$\mathcal{R}_{III}$};
        \draw (292.5,165.61) node [anchor=north west][inner sep=0.75pt]    {$0$};
        \end{tikzpicture}
    \caption{\small The space-time cones with $\xi:=\frac{x}{12t}$}\label{cone}
    \end{center}
\end{figure}

As $t\rightarrow+\infty$ and $t\rightarrow-\infty$ respectively, different asymptotics are presented in different sectors.

The $\mathcal{R}_{II}$, $\mathcal{R}_{IV}$ are typical Zakhrov-Manakov regions, and the asymptotic solution $u(x,t)$ for the Cauchy problem of \eqref{Cauchy Problem} in these regions
could be called the self-similar solution. In $\mathcal{R}_{IV}$, the leading term in the asymptotics is presented by the constant $A$ multiplied by
a slowly varying factor which tends to $1$ as $\xi \rightarrow-\infty$, in compatible with the boundary condition \eqref{bdrycondition} as $x>0$.
In $\mathcal{R}_{II}$, the leading term in the asymptotics is $0$, which is also compatible with the boundary condition \eqref{bdrycondition} as $x<0$.
And the asymptotic solution in both  $\mathcal{R}_{II}$, $\mathcal{R}_{IV}$ admit the sub-leading term as the form $\tau^{-1/2}$, which is obtained by parabolic cylinder functions.
Main difference between $\mathcal{R}_{II}$ and $\mathcal{R}_{IV}$ is that the sub-leading term of the later depends on the parameter $\im\nu(-k_0)$, However the former does not.
Refer the Theorem \ref{mainthm1} as follows.
\begin{theorem}\label{mainthm1}
    Consider the initial-valued problem \eqref{Cauchy Problem} and \eqref{bdrycondition}, where the initial data $u_0(x)$ is a compact perturbation of the pure step initial function \eqref{purestepinitial}:
    $u_0(x)-u_{0A}(x)=0$ for $|x|>N$ with some $N>0$. Assume that the spectral functions $a_j(k)$, $j=1,2$ and $b(k)$ associated to $u_0(x)$ satisfy
    \begin{itemize}
        \item[Con1.]$a_1(k)$ has a single, simple zero in $\overline{\mathbb{C}_{+}}$ located $k=i\kappa$, and $a_2(k)$ either has no zeros in $\overline{\mathbb{C}_-}$ (generic case) or has a
        single, simple zero at $k=0$ (non-generic case);
        \item[Con2.] $\im\nu(-k_0)=-\frac{1}{2\pi}d\int_{-\infty}^{-k_0}d\arg(1+r_1(s)r_2(s))\in(-\frac{1}{2},\frac{1}{2})$, where $r_1(k)=\frac{b(k)}{a_1(k)}$, $r_2(k)=\frac{b(k)}{a_2(k)}$.
    \end{itemize}
    Assuming that the solution of Cauchy problem $u(x,t)$ exists, then the long-time asymptotics of $u(x,t)$ along any line $\xi:=\frac{x}{12t}<0$, $|\xi|=O(1)$ can be described as follows:
    \begin{itemize}
    \item[I.] For $x<0$, $t>0$ (corresponding to $\mathcal{R}_{II}$), as $t\rightarrow+\infty$
    \begin{equation}\label{asyRII}
        u(x,t)=-4\eta(-\tau)^{-\frac{1}{2}-\im\nu(-k_0)}\re\left(\gamma(\xi)e^{t\varphi(\xi,0)}(-\tau)^{i\re\nu(-k_0)}\right)+R_1(\xi,-t).
    \end{equation}
    \item[II.] For $x>0$, $t<0$ (corresponding to $\mathcal{R}_{IV}$), as $t\rightarrow-\infty$, three types of asymptotic forms are possible, depending on the $\im\nu(-k_0)$, in detail,
    \begin{subequations}
    \item[(II.a)]if $\im\nu(-k_0)\in(-\frac{1}{2},-\frac{\alpha}{6}]$,
    \begin{align}\label{asyRIVa}
        u(x,t)=A\delta^2(\xi,0)-\frac{4c_0^2}{k_0^2}\eta\tau^{-\frac{1}{2}-\im\nu(-k_0)}\re\left(i\gamma(\xi)e^{t\varphi(\xi,0)}\tau^{i\re\nu(-k_0)}\right)+R_1(\xi,t),
    \end{align}
    \item[(II.b)]if $\im\nu(-k_0)\in(-\frac{\alpha}{6},\frac{\alpha}{6})$,
    \begin{align}\label{asyRIVb}
        u(x,t)&=A\delta^2(\xi,0)-\frac{4c_0^2}{k_0^2}\eta\tau^{-\frac{1}{2}-\im\nu(-k_0)}\re\left(i\gamma(\xi)e^{t\varphi(\xi,0)}\tau^{i\re\nu(-k_0)}\right)\nonumber\\
        &\hspace{4.5em}+4\eta\tau^{-\frac{1}{2}+\im\nu(-k_0)}\re\left(\beta(\xi)e^{-t\varphi(\xi,0)}\tau^{-i\re\nu(-k_0)}\right)+R_3(\xi,t),
    \end{align}
    \item[(II.c)]if $\im\nu(-k_0)\in[\frac{\alpha}{6},\frac{1}{2})$,
    \begin{align}\label{asyRIVc}
        u(x,t)&=A\delta^2(\xi,0)+4\eta\tau^{-\frac{1}{2}+\im\nu(-k_0)}\re\left(\beta(\xi)e^{-t\varphi(\xi,0)}\tau^{-i\re\nu(-k_0)}\right)+R_2(\xi,t),
    \end{align}
    \end{subequations}
    \end{itemize}
Here
\begin{align*}
    &\delta(0,\xi)=\exp\left\{\frac{1}{2\pi i}\int_{(-\infty,-k_0)\cup(k_0,+\infty)}\frac{{\log \left(1+r_1(s)r_2(s)\right)}}{s}ds\right\}, \\
    &\pm k_0=\pm\sqrt{-\frac{x}{12t}}, \quad \eta:=\frac{k_0}{2}, \quad \rho=\eta\sqrt{48k_0}, \quad \tau:=-t\rho^2=-12tk_0^3, \quad \nu:=\nu(-k_0), \\
    &\varphi(\xi;\zeta):=2i\theta\left(\xi,-k_0+\frac{\eta}{\rho}\right)=16ik_0^3-\frac{i}{2}\zeta^2+\frac{i\zeta^3}{12\rho},\\
    &\alpha\in(\lambda,1),\quad \lambda:=\max\left(1/2, \ \underset{\xi<0, |\xi|=O(1)}{\rm sup}\ 2|\im\nu(k(\xi))|\right),\\
    &\beta(\xi)=\frac{\sqrt{2\pi}e^{\frac{i\pi}{4}}e^{-\frac{\pi \nu(-k_0)}{2}}}{{q}_{1}(-k_0)\Gamma(-i\nu(-k_0))}, \quad q_1(-k_0)=e^{-2\chi(\xi,-k_0)}r_1(-k_0)e^{2i\nu(-k_0)\log 4}, \\
    &\gamma(\xi)=\frac{\sqrt{2\pi}e^{-\frac{i\pi}{4}}e^{-\frac{\pi \nu(-k_0)}{2}}}{{q}_{2}(-k_0)\Gamma(i\nu(-k_0))},\quad q_2(-k_0)=e^{2\chi(\xi,-k_0)}r_2(-k_0)e^{-2i\nu(-k_0)\log 4}.
\end{align*}
And the error estimates are as follows
\begin{align*}
    &R_1(\xi,t)=\left\{
        \begin{aligned}
        &O(\epsilon\tau^{-\frac{1+\alpha}{2}}), &\im\nu(-k_0)\geqslant 0 \\
        &O(\epsilon\tau^{-\frac{1+\alpha}{2}+2\vert\im\nu\vert}), &\im\nu(-k_0)<0
        \end{aligned}
        \right. \\
    &R_2(\xi,t)=\left\{
        \begin{aligned}
        &O(\epsilon\tau^{-\frac{1+\alpha}{2}+2\vert\im\nu\vert}), &\im\nu(-k_0)\geqslant 0 \\
        &O(\epsilon\tau^{-\frac{1+\alpha}{2}}), &\im\nu(-k_0)<0
        \end{aligned}
        \right.
\end{align*}
and
\begin{equation*}
    R_3(\xi,t):=R_1(\xi,t)+R_2(\xi,t)=\left\{
            \begin{aligned}
            &O(\epsilon\tau^{-\frac{1+\alpha}{2}}), &\im\nu(-k_0)=0,\\
            &O(\epsilon\tau^{-\frac{1+\alpha}{2}+2\vert\im\nu\vert}), &\im\nu(-k_0)\neq 0.
            \end{aligned}
            \right.
\end{equation*}
\end{theorem}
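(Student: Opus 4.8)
The plan is to run a Deift--Zhou nonlinear steepest descent analysis on the basic RH problem produced by the direct scattering theory, extracting $u(x,t)$ from the coefficient of $k^{-1}$ in the large-$k$ expansion of the RH solution $M(x,t,k)$, so that it suffices to follow that single coefficient through a finite chain of explicit transformations. The two regimes are linked by the symmetry $(x,t)\mapsto(-x,-t)$ of \eqref{FNMKdV}: the analysis in $\mathcal{R}_{IV}$ ($x>0$, $t\to-\infty$) mirrors that in $\mathcal{R}_{II}$ ($x<0$, $t\to+\infty$) but with the two background values $0$ and $A$ interchanged, which already explains why the leading term in \eqref{asyRIVa}--\eqref{asyRIVc} carries the background factor $A\delta^2(\xi,0)$ while the one in \eqref{asyRII} is merely $0$.

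First I would draw the signature table of $\re(2it\theta(\xi,k))$: for $\xi<0$ there are exactly two real stationary points $\pm k_0=\pm\sqrt{-x/(12t)}$, and the jump on $\mathbb{R}$ admits the two triangular factorizations needed for the standard two-stationary-point geometry. I would then introduce the scalar function $\delta(\xi,k)$ solving the scalar RH problem with jump $1+r_1(k)r_2(k)$ on $(-\infty,-k_0)\cup(k_0,\infty)$, and conjugate $M\mapsto M\,\delta^{-\sigma_3}$; this turns the jump on the outer part of the axis into $I+$(exponentially small) and prepares the correct factorization on the inner part. The local exponent of $\delta$ at $-k_0$ is $\nu(-k_0)$, and since the nonlocal reduction breaks the Hermitian symmetry, $\nu$ is genuinely complex, so the factors $\delta^{\pm1}$ may grow like $|k+k_0|^{\mp\im\nu}$ and have to be tracked quantitatively.

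Next I would open lenses around the relevant subintervals using the factorization $J=(I-w_-)(I+w_+)$, so that all jumps except those on small circles of radius $\varepsilon$ about $\pm k_0$ become exponentially close to $I$ in $t$; solve the outer (global) parametrix with the appropriate constant background (trivial for $\mathcal{R}_{II}$, equal to $A$ up to the $\delta^2$ twist for $\mathcal{R}_{IV}$); and build the local parametrices at $\pm k_0$ by rescaling $k$ by $\rho$, freezing the phase to its quadratic part, and matching to the model problem solved by parabolic cylinder functions $D_{\pm i\nu}$. The local solution supplies $\tau=-12tk_0^3$, the amplitudes $\beta(\xi)$ and $\gamma(\xi)$ built from the conjugated reflection coefficients $q_1,q_2$, and the prefactors $\tau^{-1/2\pm i\re\nu}\tau^{\mp\im\nu}$. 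The matching on the circles $|k\mp k_0|=\varepsilon$ is accurate to $O(\tau^{-1/2})$ but is contaminated by the $\delta$-generated powers $\tau^{\mp\im\nu}$; this is the technical core of the argument and the only place where the estimates truly diverge from the decaying-data theory.

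Finally, setting $E=M\,(M^{\mathrm{out}}M^{\mathrm{loc}})^{-1}$ yields a small-norm RH problem whose jump is $I+O(\epsilon\,\tau^{-(1+\alpha)/2}\tau^{2|\im\nu|})$, with $\alpha\in(\lambda,1)$ chosen to absorb both the lens/outer errors and the $\delta$-growth; solving it by a Neumann series and reading off the $k^{-1}$ coefficients of $M^{\mathrm{out}}$, $M^{\mathrm{loc}}$ and $E$ gives \eqref{asyRII} and \eqref{asyRIVa}--\eqref{asyRIVc}, the remainders being exactly $R_1,R_2,R_3$. The three sub-cases in $\mathcal{R}_{IV}$ arise from comparing the two oscillatory contributions of orders $\tau^{-1/2-\im\nu}$ and $\tau^{-1/2+\im\nu}$: for $\im\nu(-k_0)\le-\alpha/6$ the first dominates and the second is absorbed into the remainder (II.a); for $|\im\nu(-k_0)|<\alpha/6$ both exceed the error and must be kept (II.b); for $\im\nu(-k_0)\ge\alpha/6$ only the second survives above threshold (II.c). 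The non-generic zero of $a_2$ at $k=0$ and the residue condition at $k=i\kappa$ from Con1 need only harmless local modifications of the outer parametrix (a small circle near the origin, and a triangularity trick converting the residue into an exponentially small jump in these two cones). The hardest point, as indicated, is the uniform control of the possibly unbounded factors $\delta^{\pm1}$, equivalently $\tau^{\mp\im\nu}$, over the cone $|\xi|=O(1)$: this is what forces the $\xi$-dependent constant $\lambda$ and the threshold $\alpha/6$, and is the reason a single uniform asymptotic formula cannot hold throughout $\mathcal{R}_{IV}$.
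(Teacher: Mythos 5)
Your overall architecture --- $\delta$-conjugation on $(-\infty,-k_0)\cup(k_0,\infty)$, lens opening, parabolic-cylinder parametrices at $\pm k_0$, a small-norm/Beals--Coifman error analysis with the $\tau^{\mp\im\nu}$ growth tracked through the matching, the $(x,t)\mapsto(-x,-t)$ symmetry to pass between $t\to-\infty$ and $t\to+\infty$, and the $|\im\nu(-k_0)|=\alpha/6$ thresholds obtained by comparing $\tau^{-1/2\mp\im\nu}$ against the remainder $\tau^{-(1+\alpha)/2+2|\im\nu|}$ --- matches the paper's Section 4 closely, and your identification of the complex $\nu$ as the main technical obstruction is exactly right.

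There is, however, one genuine gap: your treatment of the singularity at $k=0$ and the pole at $k=i\kappa$. You propose to handle them by ``harmless local modifications of the outer parametrix (a small circle near the origin, and a triangularity trick converting the residue into an exponentially small jump).'' For the pole at $i\kappa$ this works in these two cones, since $c_1(x,t)\propto e^{t(-24\kappa\xi+8\kappa^3)}\to0$. But the condition at $k=0$ is \emph{not} exponentially small: after lens opening it reduces to a residue-type condition $\mathrm{Res}_{k=0}\breve M^{(2)}=c_0(\xi)\breve M^{(1)}(x,t,0)$ with $c_0(\xi)=\frac{A\delta^2(0,\xi)}{2i}$ of order one, and this is precisely the source of the leading term $A\delta^2(\xi,0)$ in \eqref{asyRIVa}--\eqref{asyRIVc} and of the coefficient $-4c_0^2/k_0^2$ multiplying the parabolic-cylinder contribution there. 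You instead attribute the leading term to an outer parametrix ``with constant background $A$ up to the $\delta^2$ twist,'' without explaining how such a parametrix is constructed or why it carries exactly $\delta^2(0,\xi)$; as stated, a circle about the origin with a near-identity jump would simply lose this term. The paper resolves this by a Blaschke--Potapov factorization $\breve M=B(x,t,k)\,\breve M^r(x,t,k)\,\mathrm{diag}\bigl(1,\tfrac{k-i\kappa}{k}\bigr)$, which removes both the pole at $i\kappa$ and the singularity at $0$ and leaves a genuinely regular RH problem; the matrix $P(x,t)$ in $B$ is then computed algebraically from the values of $\breve M^r$ at $k=0$ and $k=i\kappa$, and it is $-2\kappa P_{12}$ (respectively $-2\kappa P_{21}$) that produces both the leading term $A\delta^2(\xi,0)$ (respectively $0$) and the $c_0^2$-weighted subleading term. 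Without this step, or an equivalent explicit construction of a singular global parametrix at $k=0$, your outline cannot recover the specific constants in (II.a)--(II.c), so this part needs to be made precise.
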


\begin{remark}\rm
    For $x>0$, $t<0$, as $x\rightarrow+\infty$, $k_0=\sqrt{-\frac{x}{12t}}\rightarrow+\infty$, $-k_0=-\sqrt{-\frac{x}{12t}}\rightarrow-\infty$, then $A\delta^2(0,\xi)\rightarrow A$, which is compatible
    to the boundary condition as $x>0$. The compatibilities to the boundary condition of the other sectors are obvious.
\end{remark}

$\mathcal{R}_{I}$ and $\mathcal{R}_{III}$ are similar. For $\mathcal{R}_{I}$, we divide it into three sectors $\mathcal{R}_{I,\mathcal{L}}$, $\mathcal{R}_{I,\mathcal{M}}$
and $\mathcal{R}_{I,\mathcal{R}}$. We name the $\mathcal{R}_{I,\mathcal{L}}$ the solitonic region because it's asymptotics take the leading term with the one-soliton form.
The leading term of $\mathcal{R}_{I,\mathcal{M}}$ is the constant $A$ and admit the same error order with the asymptotics of $\mathcal{R}_{I,\mathcal{L}}$.
As for $\mathcal{R}_{I,\mathcal{R}}$, the asymptotics admit the same leading term as asymptotics in $\mathcal{R}_{I,\mathcal{M}}$, but with the different radiation term which depends on
a small parameter $\kappa_\delta\in(0,\kappa)$.
The asymptotics of $\mathcal{R}_{III}$ are obtained by the symmetry of nonlocal focusing MKdV equation ($x\rightarrow-x$, $t\rightarrow-t$). $\mathcal{R}_{III,\mathcal{L}}$,
$\mathcal{R}_{III,\mathcal{M}}$, $\mathcal{R}_{III,\mathcal{R}}$ are corresponded to $\mathcal{R}_{I,\mathcal{R}}$, $\mathcal{R}_{I,\mathcal{M}}$
and $\mathcal{R}_{I,\mathcal{L}}$ respectively. Some similar claims for $\mathcal{R}_{I}$ are also fitted to $\mathcal{R}_{III}$. For the asymptotics of
$\mathcal{R}_{I}$ and $\mathcal{R}_{III}$, please refer the Theorem \ref{mainthm2} as follows .

\begin{theorem}\label{mainthm2}
    Under the same conditions of Theorem \ref{mainthm1}, then the long-time asymptotics of $u(x,t)$ along any line $\xi:=\frac{x}{12t}>0$, can be described as follows:
\begin{itemize}
    \item[I.] For $x>0$, $t>0$ (corresponding to $\mathcal{R}_{I}$), as $t\rightarrow+\infty$, three asymptotic forms are presented for different $\xi$ as follows
    \begin{subequations}
    \item[(I.a)]if $\xi\in(0,\frac{\kappa^2}{3})$ (corresponding to solitonic region $\mathcal{R}_{I,\mathcal{L}}$)
    \begin{equation}\label{asyRIL}
        u(x,t)=\frac{A}{1-C_1(\kappa)e^{-2\kappa x+8\kappa^3t}}+O\left(t^{-\frac{1}{2}}e^{-16t\xi^{3/2}}\right),
    \end{equation}
    \item[(I.b)]if $\xi\in(\frac{\kappa^2}{3},\kappa^2)$ (corresponding to region $\mathcal{R}_{I,\mathcal{M}}$)
    \begin{equation}\label{asyRIM}
        u(x,t)=A+O\left(t^{-\frac{1}{2}}e^{-16t\xi^{3/2}}\right),
    \end{equation}
    \item[(I.c)]if $\xi\in(\kappa^2,+\infty)$ (corresponding to region $\mathcal{R}_{I,\mathcal{L}}$)
    \begin{equation}\label{asyRIR}
        u(x,t)=A+O\left(t^{-\frac{1}{2}}e^{-8t\kappa_{\delta}(3\xi-\kappa_{\delta}^2)}\right).
    \end{equation}
    \end{subequations}
    \item[II.] For $x<0$, $t<0$ (corresponding to $\mathcal{R}_{III}$), as $t\rightarrow-\infty$, three asymptotic forms are presented for different $\xi$ as follows
    \begin{subequations}
    \item[(II.a)]if $\xi\in(0,\frac{\kappa^2}{3})$ (corresponding to solitonic region $\mathcal{R}_{III,\mathcal{R}}$)
    \begin{equation}\label{asyRIIIR}
        u(x,t)=\frac{4}{C_2(\kappa)e^{-2\kappa x+8\kappa^3t}-A\kappa^{-2}}+O\left((-t)^{-\frac{1}{2}}e^{16t\xi^{3/2}}\right),
    \end{equation}
    \item[(II.b)]if $\xi\in(\frac{\kappa^2}{3},\kappa^2)$ (corresponding to region $\mathcal{R}_{III,\mathcal{M}}$)
    \begin{equation}\label{asyRIIIM}
        u(x,t)=O\left((-t)^{-\frac{1}{2}}e^{16t\xi^{3/2}}\right),
    \end{equation}
    \item[(II.c)]if $\xi\in(\kappa^2,+\infty)$ (corresponding to region $\mathcal{R}_{III,\mathcal{L}}$)
    \begin{equation}\label{asyRIIIL}
        u(x,t)=O\left((-t)^{-\frac{1}{2}}e^{8t\kappa_{\delta}(3\xi-\kappa_{\delta}^2)}\right).
    \end{equation}
    \end{subequations}
\end{itemize}
Here
\begin{align*}
    C_1(\kappa)=\frac{A\gamma_0}{2ia_1'(i\kappa)\kappa^2}, \quad  C_2(\kappa)=\frac{2ia_1'(i\kappa)}{\gamma_0}  , \quad \kappa_{\delta}\in(0,\kappa), \quad \gamma_0^2=1.
\end{align*}
\end{theorem}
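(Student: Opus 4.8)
The plan is to run the Deift--Zhou nonlinear steepest descent method on the basic RH problem for $M(x,t,k)$ built in the direct scattering section, specialised to the rays $\xi:=x/(12t)>0$. First I would introduce the controlling phase $\theta(\xi,k)=4k^{3}+12\xi k$, so that the oscillatory off-diagonal entries of the jump on $\mathbb{R}$ are $e^{\pm 2it\theta}$, and read off the signature $\re(2it\theta)=-8t\,\im k\,\big(3(\re k)^{2}+3\xi-(\im k)^{2}\big)$. Because $\xi>0$, the stationary points $\partial_{k}\theta=0$ are purely imaginary, at $k=\pm i\sqrt{\xi}$, and for $t>0$ one has $\re(2it\theta)<0$ in the part of the upper half-plane \emph{below} the curve $(\im k)^{2}=3(\re k)^{2}+3\xi$ (which meets the imaginary axis at $i\sqrt{3\xi}$) and $\re(2it\theta)>0$ above it, with the mirror picture in the lower half-plane (for $t<0$ the two signs are interchanged). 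The decisive simplification compared with the sectors $\xi<0$ is the absence of real stationary points: no scalar ($\delta$-function) conjugation is needed, the single triangular factorisation $J=\big(\begin{smallmatrix}1&0\\ r_{1}e^{2it\theta}&1\end{smallmatrix}\big)\big(\begin{smallmatrix}1& r_{2}e^{-2it\theta}\\ 0&1\end{smallmatrix}\big)$ of the continuous-spectrum jump is valid on all of $\mathbb{R}$, and one opens it by pushing the lower factor slightly into the upper half-plane and the upper factor slightly into the lower half-plane, onto contours $\Gamma^{\pm}$ where the jumps are $I+O(e^{-ct})$ pointwise. Moreover the compact-support hypothesis $u_{0}-u_{0A}\equiv 0$ for $|x|>N$ makes $a_{1},a_{2},b$ entire, so $r_{1}=b/a_{1}$ and $r_{2}=b/a_{2}$ are meromorphic, their only singularities in the closed half-planes being the pole of $r_{1}$ at $k=i\kappa$ and, in the non-generic case, the pole of $r_{2}$ at $k=0$; hence no analytic/remainder splitting of $r_{j}$ is required and the only obstruction to raising $\Gamma^{+}$ is the pole at $i\kappa$.

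Second, I would process the discrete data, which is exactly what separates the three sub-sectors. The residue relation at $k=i\kappa$ carries the factor $e^{2it\theta(i\kappa)}=e^{-8t\kappa(3\xi-\kappa^{2})}=e^{-2\kappa x+8\kappa^{3}t}$ (together with its symmetric counterpart). For $\xi\in(0,\kappa^{2}/3)$ this is exponentially large as $t\to+\infty$: $i\kappa$ lies above the sign-change curve, the pole is genuinely active and is kept; for $\xi>\kappa^{2}/3$ it is exponentially small, $i\kappa$ lies in the region $\re(2it\theta)<0$, and the standard ``pole-to-jump'' surgery replaces the residue condition by an $I+O(e^{-ct})$ jump on a small circle about $i\kappa$, effectively deleting it (the simple pole of $r_{2}$ at $0$ in the non-generic case is removed analogously, using the explicit form of its residue, so that it leaves no trace in the final error). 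After the pole has been handled, the attainable height of $\Gamma^{+}$ fixes the error order: for $\xi\in(\kappa^{2}/3,\kappa^{2})$ the saddle $i\sqrt{\xi}$ lies \emph{below} $i\kappa$, so $\Gamma^{+}$ can be routed through the steepest-descent path over $i\sqrt{\xi}$ and the leftover contribution is $O\big(t^{-1/2}e^{2it\theta(i\sqrt{\xi})}\big)=O\big(t^{-1/2}e^{-16t\xi^{3/2}}\big)$, the $t^{-1/2}$ being the Gaussian factor at the saddle; for $\xi\in(\kappa^{2},+\infty)$ the saddle lies \emph{above} $i\kappa$, $\Gamma^{+}$ can only be raised to a height $\kappa_{\delta}\in(0,\kappa)$, and the leftover contribution is $O\big(t^{-1/2}e^{2it\theta(i\kappa_{\delta})}\big)=O\big(t^{-1/2}e^{-8t\kappa_{\delta}(3\xi-\kappa_{\delta}^{2})}\big)$.

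Third, I would solve the model problems, invert, and cover $\mathcal{R}_{III}$ by symmetry. In the solitonic sector $\mathcal{R}_{I,\mathcal{L}}$ ($\xi<\kappa^{2}/3$) the deformed problem reduces, modulo the above exponentially small errors, to the rational $2\times 2$ RH problem carrying the active pole at $k=i\kappa$ on the step background; solving it explicitly and substituting into the reconstruction formula for $u$ in terms of the large-$k$ behaviour of $M$ — in which the nontrivial $x\to+\infty$ normalisation of $M$ enters — gives, after simplification using $\gamma_{0}^{2}=1$, the one-kink profile $A\big(1-C_{1}(\kappa)e^{-2\kappa x+8\kappa^{3}t}\big)^{-1}$ with $C_{1}(\kappa)=A\gamma_{0}/\big(2i a_{1}'(i\kappa)\kappa^{2}\big)$, i.e. \eqref{asyRIL}. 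In $\mathcal{R}_{I,\mathcal{M}}$ and $\mathcal{R}_{I,\mathcal{R}}$, once the pole is deleted the RH problem is the step-background model perturbed only by the $O(e^{-ct})$ jumps on $\Gamma^{\pm}$, and its reconstruction is $A$ plus exactly the stated errors, the two sub-sectors differing solely through the attainable height of $\Gamma^{+}$ ($i\sqrt{\xi}$ versus $i\kappa_{\delta}$), giving \eqref{asyRIM} and \eqref{asyRIR}. Finally, for $\mathcal{R}_{III}$ ($x<0$, $t<0$, $\xi>0$) I would invoke the reverse-space-time symmetry $u(x,t)\mapsto u(-x,-t)$ of \eqref{FNMKdV}; through the induced involution $k\mapsto -k$, $(x,t)\mapsto(-x,-t)$ on the Lax pair and the scattering data it converts the $\xi>0$, $t<0$ analysis into the one just carried out and identifies $\mathcal{R}_{III,\mathcal{L}},\mathcal{R}_{III,\mathcal{M}},\mathcal{R}_{III,\mathcal{R}}$ with $\mathcal{R}_{I,\mathcal{R}},\mathcal{R}_{I,\mathcal{M}},\mathcal{R}_{I,\mathcal{L}}$ respectively; transcribing the answers and tracking the sign flips yields \eqref{asyRIIIR}, \eqref{asyRIIIM}, \eqref{asyRIIIL}, with the soliton constant turning into $C_{2}(\kappa)=2i a_{1}'(i\kappa)/\gamma_{0}$ (compatibly with $C_{1}C_{2}=A\kappa^{-2}$).

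I expect the main obstacle to be the uniform bookkeeping in the second step: making rigorous, uniformly on each ray of the three sub-sectors, the interplay between the location of $i\kappa$, the sign-change curve of $\re(2it\theta)$, and the saddle $i\sqrt{\xi}$ — in particular matching the small-circle parametrix at $i\kappa$ with the outer step-background parametrix and verifying that neither the non-generic pole at $k=0$ nor any real zero of $1+r_{1}r_{2}$ (which can occur because the underlying Lax operator is non-self-adjoint) destroys the factorisation. The hypotheses of Theorem \ref{mainthm1}, and Con1 in particular, are designed to exclude exactly these obstructions, but confirming that they do so across all sub-sectors, and assembling the resulting estimates into the claimed error orders, is where the substantive work lies.
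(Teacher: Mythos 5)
Your proposal is correct and follows essentially the same route as the paper: deformation of the triangular factorisation onto horizontal contours at height $\sqrt{\xi}$ (or $\kappa_\delta<\kappa$ when $\sqrt{\xi}>\kappa$), with no $\delta$-conjugation since the stationary points are imaginary, explicit rational treatment of the discrete data at $i\kappa$ and of the $k=0$ step singularity, the three sub-sectors distinguished by the sign of $8\kappa(\kappa^2-3\xi)$ and by whether the contour can reach $i\sqrt{\xi}$ without crossing $i\kappa$, and $\mathcal{R}_{III}$ recovered via the reconstruction formula for $u(-x,-t)$ from the $(2,1)$-entry. The only notable (and essentially cosmetic) difference is that the paper folds both the residue at $i\kappa$ and the $k=0$ singularity into a single Blaschke--Potapov factor rather than performing small-circle pole surgery, and you should note that the $k=0$ condition is not ``removed without trace'': it is converted into a formal residue with coefficient $A/(2i)$ and is precisely what produces the leading constant $A$ (and the denominator $A\kappa^{-2}$ in $\mathcal{R}_{III,\mathcal{R}}$).
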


\subsection{Comparison of local/nonlocal focusing MKdV equation}
We make a short comparison of step-like Cauchy problem \eqref{Cauchy Problem} between local and nonlocal focusing MKdV equation as follows.
\begin{itemize}
    \item[(i)]Under the view of Riemann-Hilbert problem formalism, there exists a cut $(-iA,iA)$ belongs to imaginary axis for the local focusing MKdV equation, refer \cite{KotMinakovJMP}.
    For the nonlocal focusing MKdV equation, we have a singular point at $k=0$ as well as a discrete spectrum located at the pure imaginary axis instead of the cut.
    \item[(ii)]Under the view of analysis technique, we shall introduce the related $g$ function to deal with the cut for local MKdV equation with step-like initial data \eqref{Cauchy Problem}.
    However, in nonlocal case, our technique is to convert the singularity conditions of $k=0$ into residue condition so that we can construct a regular Riemann-Hilbert problem without
    residue conditions.
    \item[(iii)] Under the view of the asymptotic results, for local focusing MKdV equation, Minakov and Kotlyrov present a modulated sector between two straight line boundaries
    $x/t=Const.1$ and $x/t=Const.2$, where the leading asymptotic term is described in terms of modulated elliptic functions \cite{KotMinakovJMP}.
    However, with the Cauchy problem \eqref{Cauchy Problem}, we lack the corresponded modulated sectors and  all asymptotics of different sectors
    are described by genus-$0$ solution for nonlocal focusing MKdV equation.
\end{itemize}

\subsection{Outline of this paper}
The structure of the rest part is as follows.

In Section \ref{sectionISTandBasicRHP}, we firstly perform spectral analysis and set up the direct scattering theory for both
nonlocal focusing and defocusing MKdV equation associated to the Cauchy problem \eqref{Cauchy Problem} in subsection \ref{DirectScattering}.
Then by inverse scattering theory, we construct the basic RH problem which is suitable for the asymptotic analysis.

In Section \ref{section1-soliton}, we construct the one-soliton solution under the Assumption \ref{assforsoliton} for the nonlocal focusing MKdV equation
with Cauchy problem \eqref{Cauchy Problem}.

In Section \ref{sectionAANO1}, we mainly construct the asymptotics for $\xi<0$ as $t\rightarrow-\infty$. In Subsection \ref{subsectionFirstdeform}, we
introduce the $\delta$ function for the first RH transformation. In Subsection \ref{subsectionSeconddeform}, we execute the so-called ``opening lens" to
make the second RH deformation. In Subsection \ref{subsectionregularRHP}, we import Blaschke-Potapov factors to reduce the RH problem in \ref{subsectionSeconddeform}
to a regular RH problem without residue conditions. In Subsection \ref{subsectionlocal}, we construct the local parametrix for the regular RH problem. In Subsection \ref{subsectionAANO1erroranalysis},
we do the error analysis via Beals-Cofiman theory for regular RH problem. Finally, we review the deformations path and the symmetries of nonlocal focusing MKdV equation to form the asymptotics for
$\mathcal{R}_{II}$ and $\mathcal{R}_{IV}$ as in
the Theorem \ref{mainthm1}.

In Section \ref{sectionAANO2}, we take the similar technique mentioned in Section \ref{sectionAANO1} to construct the asymptotics for $\xi>0$ as $t\rightarrow+\infty$.
According to different asymptotic forms, we divide both  $\mathcal{R}_{I}$ and $\mathcal{R}_{III}$ into three sectors and form the corresponded asymptotics as presented
in Theorem \ref{mainthm2}.

In the last Section \ref{sectionFinalRemark}, we give a brief conclusion for the present work and provide some further discussions for the subsequent work.

\section{Inverse scattering transform and the basic RH problem}\label{sectionISTandBasicRHP}
The concrete ways to deal with RH approach to the step-like problem for classical MKdV equations substantially differ in the focusing case and the defocusing case. For the focusing case of classical MKdV equation
, the structure of the spectrum is associated to the non-self-adjoint Lax operator and a part of spectrum is outside the real axis. For the defocusing case of classical MKdV equation, the structure of the spectrum
is associated to the self-adjoint Lax operator and the whole spectrum is located on the real axis. However, we notice that the focusing case and defocusing case for the step-like Cauchy problem of NMKdV are close
to each other, in particular, there is a point singularity on the real axis. Owe to this observation, we will present some results of the direct scattering theory for the both cases (focusing and defocusing),
see the following Subsection \ref{DirectScattering}.

\subsection{Eigenfunctions and direct scattering}\label{DirectScattering}
The nonlinear nonlocal MKdV equation admits compatibility condition of the following two linear equations (Lax pairs)
\begin{subequations}\label{Laxpair}
    \begin{align}
        &\phi_x+ik\sigma_3\phi=U(x,t)\phi,\label{lpspace}\\
        &\phi_t+4ik^3\sigma_3\phi=V(x,t,k)\phi, \label{lptime}
    \end{align}
\end{subequations}
where $\phi(x,t,k)$ is a $2\times 2$ matrix-valued function, $\sigma_3=diag(1,-1)$, $k$ is a spectral parameter and
\begin{equation}
    U(x,t)=\begin{pmatrix}0 & u(x,t) \\  -\sigma u(-x,-t)& 0\end{pmatrix},
    \quad V(x,t)=\begin{pmatrix}  V_{11} & V_{12} \\ V_{21} & -V_{11}\end{pmatrix}
\end{equation}
with
\begin{align*}
    &V_{11}=2ik\sigma u(x,t)u(-x,-t)-\sigma u(-x,-t)u_{x}(x,t)-\sigma u(x,t)u_{x}(-x,-t),\\
    &V_{12}=4k^2u(x,t)+2iku_{x}(x,t)-2\sigma u^2(x,t)u(-x,-t)-u_{xx}(x,t),\\
    &V_{21}=-4k^2\sigma u(-x,-t)-2ik\sigma u_{x}(-x,-t)+2u^2(-x,-t)u(x,t)+\sigma u_{xx}(-x,-t).
\end{align*}

Consider the boundary conditions \eqref{bdrycondition}, we obtain that the matrices $U(x,t)$ and $V(x,t,k)$ converge to the following matrices
\begin{equation}
    U(x,t)\rightarrow U_{\pm}, \quad V(x,t,k)\rightarrow V_{\pm}(k), \quad x\rightarrow\pm\infty,
\end{equation}
with
\begin{subequations}
    \begin{align}
        &U_{+}=\begin{pmatrix} 0 & A \\ 0 & 0 \end{pmatrix}, \quad U_{-}=\begin{pmatrix} 0 & 0 \\ -\sigma A & 0 \end{pmatrix}, \\
        &V_{+}(k)=\begin{pmatrix} 0 & 4k^2A \\ 0 & 0 \end{pmatrix}, \quad V_{-}(k)=\begin{pmatrix} 0 & 0 \\ -4\sigma k^2A & 0 \end{pmatrix}.
    \end{align}
\end{subequations}
It's easy to notice that the systems
\begin{equation}\label{Laxpairreplace1}
    \phi_x+ik\sigma_3\phi=U_{+}\phi, \quad  \phi_t+4ik^3\sigma_3\phi=V_{+}(k)\phi,
\end{equation}
and
\begin{equation}\label{Laxpairreplace2}
    \phi_x+ik\sigma_3\phi=U_{-}\phi, \quad  \phi_t+4ik^3\sigma_3\phi=V_{-}(k)\phi,
\end{equation}
are still compatible. Then we could solve the ``background solutions" $\phi_{\pm}$ of \eqref{Laxpairreplace1} and \eqref{Laxpairreplace2} respectively
\begin{equation}
    \phi_{\pm}(x,t,k)=N_{\pm}(k)e^{-(ikx+4ik^3t)\sigma_3},
\end{equation}
where
\begin{equation}
    N_{+}(k)=\begin{pmatrix} 1 & \frac{A}{2ik} \\ 0 & 1 \end{pmatrix}, \quad N_{-}(k)=\begin{pmatrix} 1 & 0 \\ \frac{\sigma A}{2ik} & 1 \end{pmatrix}.
\end{equation}
We can see that $N_{\pm}(k)$ have singularities at $k=0$, which play a significant role in our analysis.

Define the Jost solutions as
\begin{equation}\label{moJost}
    \psi_{j}(x,t,k):=\phi_{j}(x,t,k)e^{(ikx+4ik^3t)\sigma_3}, \quad j=1,2,
\end{equation}
where
\begin{equation}
\phi_1\rightarrow\phi_{-}, \ x\rightarrow-\infty \quad \textnormal{and} \quad  \phi_{2}\rightarrow\phi_{+}, \ x\rightarrow+\infty.
\end{equation}

And $\psi_{j}$, $j=1,2$ admit the following lax pairs respectively:
\begin{subequations}\label{Laxpairreplace3}
    \begin{align}
        &(N_{-}^{-1}\psi_1)_{x}-ik[N_{-}^{-1}\psi_1, \sigma_3]=N_{-}^{-1}(U-U_{-})\psi_1,\\
        &(N_{-}^{-1}\psi_1)_{x}-4ik^3[N_{-}^{-1}\psi_1, \sigma_3]=N_{-}^{-1}(V-V_{-})\psi_1,
    \end{align}
\end{subequations}
as well as
\begin{subequations}\label{Laxpairreplace4}
    \begin{align}
        &(N_{+}^{-1}\psi_2)_{x}-ik[N_{+}^{-1}\psi_2, \sigma_3]=N_{+}^{-1}(U-U_{+})\psi_2,\\
        &(N_{+}^{-1}\psi_2)_{x}-4ik^3[N_{+}^{-1}\psi_2, \sigma_3]=N_{+}^{-1}(V-V_{+})\psi_2.
    \end{align}
\end{subequations}
Then we can solve that the Jost solutions $\psi_j$, $j=1,2$ which admit the following Volterra integral equations
\begin{subequations}\label{Volterrapsi}
    \begin{align}
        &\psi_1(x,t,k)=N_{-}(k)+\int_{-\infty}^{x}G_{-}(x,y,t,k)\left(U(y,t)-U_{-}\right)\psi_{1}(y,t,k)e^{ik(x-y)\sigma_3}dy, \label{Volterrapsi1}\\
        &\psi_2(x,t,k)=N_{+}(k)-\int_{x}^{+\infty}G_{+}(x,y,t,k)\left(U(y,t)-U_{+}\right)\psi_{2}(y,t,k)e^{ik(x-y)\sigma_3}dy  \label{Volterrapsi2}
    \end{align}
\end{subequations}
where the $G_{\pm}(x,y,t,k)$ takes the form
\begin{equation}
    G_{\pm}(x,y,t,k)=\phi_{\pm}(x,t,k)\phi^{-1}_{\pm}(y,t,k).
\end{equation}

We summarize some basic properties of $\psi_{j}$, $j=1,2$ as the following proposition
\begin{proposition}\label{propofpsi}
The matrix-valued functions $\psi_1(x,t,k)$ and $\psi_2(x,t,k)$ have the following properties:
\begin{itemize}
    \item [(i)] The columns $\psi_{1}^{(1)}(x,t,k)$ and  $\psi_{2}^{(2)}(x,t,k)$ are analytic for $k\in\mathbb{C}_+$ and continuous in $\overline{\mathbb{C}_+}\backslash\{0\}$.
    The columns $\psi_{1}^{(2)}(x,t,k)$ and  $\psi_{2}^{(1)}(x,t,k)$ are analytic for $k\in\mathbb{C}_-$ and continuous in $\overline{\mathbb{C}_-}$.
    \item [(ii)] As $k\rightarrow\infty$, $\psi_{j}(x,t,k)=I+O(k^{-1})$. More over
    \begin{equation} \label{uintermsofpsi}
    u(x,t)=2i\lim_{k\rightarrow\infty}{k\psi_j(x,t,k)}_{12}, \quad -\sigma u(-x,-t)=2i\lim_{k\rightarrow\infty}{k\psi_j(x,t,k)}_{21}.
    \end{equation}
    \item [(iii)] \textnormal{det}$\psi_j(x,t,k)=1$ for $x,k\in\mathbb{R}$.
    \item [(iv)] The following two symmetric relations hold\\
    \underline{Symmetry Reduction I}:
    \begin{subequations}\label{symall}
    \begin{align}\label{syma}
        \Lambda\overline{\psi_{1}(-x,-t,-\bar{k})}\Lambda^{-1}=\psi_2(x,t,k), \quad k\in\mathbb{C}\backslash\{0\}
    \end{align}
    where $\Lambda=\begin{pmatrix} 0 & \sigma \\ 1 & 0 \end{pmatrix}$. In particular,
    \begin{align}\label{symb}
        \Lambda\overline{\psi_{1}(-x,-t,-k)}\Lambda^{-1}=\psi_2(x,t,k), \quad k\in\mathbb{R}\backslash\{0\}
    \end{align}
    \end{subequations}
    as well as \\
    \underline{Symmetry Reduction II:}
    \begin{align}\label{Supplementary symmetry}
        \Lambda\psi_{1}(-x,-t,k)\Lambda^{-1}=\psi_2(x,t,k), \quad k\in\mathbb{C}\backslash\{0\}.
    \end{align}
    \item[(v)] As $k\rightarrow 0$, $\psi_j(x,t,k)$, $j=1,2$ admit the following singularities
    \begin{subequations}\label{psi1psi2atk=0}
        \begin{align}
        & \psi_1(x,t,k)=\left(
            \begin{array}{cc}
            \frac{1}{k}v_1(x,t)+O(1) & \frac{2i\sigma}{A}v_{1}(x,t)+O(k) \\
            \frac{1}{k}v_2(x,t)+O(1) & \frac{2i\sigma}{A}v_{2}(x,t)+O(k)
            \end{array}
        \right),\\
        & \psi_2(x,t,k)=\left(
            \begin{array}{cc}
            -\frac{2i}{A}\sigma\overline{v_2(-x,-t)}+O(k) & -\frac{1}{k}\sigma\overline{v_{2}(-x,-t)}+O(1) \\
            -\frac{2i}{A}\overline{v_1(-x,-t)}+O(k)  & -\frac{1}{k}\overline{v_{1}(-x,-t)}+O(1)
            \end{array}
        \right),
        \end{align}
    \end{subequations}
    where $v_j(x,t)$, $j=1,2$ are some functions and  satisfy the system of integral equations
    \begin{align}\label{v1v2int}
        \left\{
        \begin{aligned}
         &v_1(x,t)=\int_{-\infty}^{x}u(y,t)v_2(y,t)dy, \\
         &v_2(x,t)=-i\sigma A-\sigma \int_{-\infty}^{x}\left(u(-y,-t)-A\right)v_1\left(y,t\right)dy, \\
        \end{aligned}
    \right.
    \end{align}
\end{itemize}
\end{proposition}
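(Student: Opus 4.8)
The plan is to extract all five assertions from the Volterra integral equations \eqref{Volterrapsi}, which already encode the normalizations at $x\to\mp\infty$; the only feature absent from the decaying-data case is the bookkeeping of the simple pole of $N_\pm(k)$ at $k=0$, and every argument runs uniformly in $\sigma=\pm1$. First I would fix $(x,t)$ and iterate \eqref{Volterrapsi1}, \eqref{Volterrapsi2}, writing $\psi_1=\sum_{n\ge0}\psi_1^{(n)}$ with $\psi_1^{(0)}=N_-(k)$ and $\psi_1^{(n+1)}$ obtained by one application of the integral operator with kernel $G_-(x,y,t,k)(U-U_-)(\,\cdot\,)e^{ik(x-y)\sigma_3}$. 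In this kernel $y<x$, so the relevant scalar exponential decays for $\im k>0$ in the first column and for $\im k<0$ in the second; since $u_0-u_{0A}$ is compactly supported, $U-U_\pm$ decays fast at both ends, a majorant estimate gives absolute and locally uniform convergence of the series on the appropriate closed half-plane with $k=0$ removed, and hence the analyticity and continuity in (i). The only column that fails to extend continuously to $k=0$ is the one whose free term has a pole there, namely $\psi_1^{(1)}$ coming from $N_-^{(1)}$ and $\psi_2^{(2)}$ coming from $N_+^{(2)}$, which is exactly what (i) asserts. For (ii), $N_\pm(k)=I+O(k^{-1})$ forces $\psi_j=I+O(k^{-1})$ uniformly as $k\to\infty$; substituting $\psi_j=I+k^{-1}\psi_j^{(1)}+\cdots$ into the $\psi$-form of \eqref{lpspace} and matching the $O(1)$ term identifies the off-diagonal entries of $\psi_j^{(1)}$ with $u(x,t)$ and $-\sigma u(-x,-t)$, namely \eqref{uintermsofpsi}.

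For (iii), the Lax coefficient $-ik\sigma_3+U$ is trace-free, so Liouville's formula makes $\det\phi_j$ independent of $x$; letting $x\to\mp\infty$, where $\phi_j\to\phi_\mp$, gives $\det\phi_j=\det N_\mp=1$, and since $\det e^{(ikx+4ik^3t)\sigma_3}=1$ we obtain $\det\psi_j=1$ for real $x,k$. For the two symmetries in (iv), I would verify that the transformations $\psi(x,t,k)\mapsto\Lambda\overline{\psi(-x,-t,-\bar k)}\Lambda^{-1}$ and $\psi(x,t,k)\mapsto\Lambda\psi(-x,-t,k)\Lambda^{-1}$ carry any solution of the $\psi_1$-normalized Volterra equation \eqref{Volterrapsi1} into a solution of the $\psi_2$-normalized one \eqref{Volterrapsi2}; the inputs are the elementary identities $\Lambda\sigma_3\Lambda^{-1}=-\sigma_3$, $\Lambda U(-x,-t)\Lambda^{-1}=-U(x,t)$, $\Lambda U_-\Lambda^{-1}=-U_+$ (and the analogue for $V$, $V_\pm$), together with the direct computations $\Lambda N_-(k)\Lambda^{-1}=N_+(k)$ and $\Lambda\overline{N_-(-\bar k)}\Lambda^{-1}=N_+(k)$; a change of variable $y\mapsto-y$ then turns the $\psi_1$-kernel into exactly the $\psi_2$-kernel, sign included, so uniqueness of the Volterra solution gives the identities on $\mathbb{R}\setminus\{0\}$, whence analytic continuation yields \eqref{symall} and \eqref{Supplementary symmetry}.

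Part (v) is where I expect the real work. The key point is that although $N_-(k)$ has a simple pole at $k=0$ with rank-one residue living in its first column, the Green's function $G_-(x,y,t,k)=\phi_-(x,t,k)\phi_-^{-1}(y,t,k)=N_-(k)e^{-ik(x-y)\sigma_3}N_-^{-1}(k)$ is \emph{regular} at $k=0$: the singular parts of $N_-$ and $N_-^{-1}$ cancel because $\sin(k(x-y))=O(k)$, and one computes $G_-(x,y,t,0)$ to be lower triangular with unit diagonal and $(2,1)$-entry $-\sigma A(x-y)$; likewise $e^{ik(x-y)\sigma_3}\to I$. Therefore each Neumann iterate of $\psi_1$ carries precisely the simple pole of the innermost $N_-(k)$, so $\psi_1(x,t,k)=k^{-1}R(x,t)+O(1)$ as $k\to0$, with $R$ having a nonzero first column $(v_1,v_2)^{T}$ only. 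Inserting this Laurent expansion into \eqref{Volterrapsi1} and collecting the $O(k^{-1})$ terms produces a closed Volterra system for $(v_1,v_2)$; an integration by parts using $\partial_x v_1=u\,v_2$ together with the fast decay of $v_1$ at $-\infty$ reduces it to \eqref{v1v2int}. The special shape of the $O(1)$ term of the second column in \eqref{psi1psi2atk=0} reflects that near $k=0$ the first column of $N_-$ is asymptotically proportional to its (regular) second column, $N_-^{(1)}(k)=\frac{\sigma A}{2ik}N_-^{(2)}(0)+O(1)$, and this proportionality is propagated by the $k=0$-regular integral operator, so that $\psi_1^{(1)}(x,t,k)=\frac{\sigma A}{2ik}\psi_1^{(2)}(x,t,0)+O(1)$, which rearranges to the claim; the expansion of $\psi_2$ at $k=0$ then follows by applying Symmetry Reduction I \eqref{syma} to the $\psi_1$ result. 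The two delicate steps are showing the pole at $k=0$ is exactly simple rather than of higher order — this is precisely where the regularity of $G_-$ at $0$ is used — and obtaining uniform control of the Neumann series all the way down to $k=0$, both of which rest on the compact-support (hence fast-decay) hypothesis on $u_0-u_{0A}$.
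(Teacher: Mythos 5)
Your treatment of items (i)--(iv) is essentially the paper's proof with the details written out: the paper likewise invokes the Neumann series for (i), matches powers of $k$ in the Lax pair for (ii), uses tracelessness of $U,V$ for (iii), and cites exactly the identities $\Lambda U(-x,-t)\Lambda^{-1}=-U(x,t)$, $\Lambda N_-(k)\Lambda^{-1}=N_+(k)$, $\Lambda\overline{N_-(-\bar k)}\Lambda^{-1}=N_+(k)$ for (iv). For item (v) your route is genuinely different and more constructive: the paper simply \emph{posits} the Laurent ansatz \eqref{(v)con1} ``by observation,'' obtains \eqref{v1v2int} and \eqref{tildev1v2int} by substitution, and gets $\tilde v_j=\frac{2i\sigma}{A}v_j$ by comparing the two integral systems; you instead justify that the pole at $k=0$ is exactly simple and confined to the first column by noting that $G_-(x,y,t,k)=N_-(k)e^{-ik(x-y)\sigma_3}N_-^{-1}(k)$ is entire in $k$ (with $(2,1)$-entry $-\sigma A\sin(k(x-y))/k$), so the only singular factor in every Neumann iterate is the innermost $N_-(k)$, and you obtain the column proportionality by propagating $N_-^{(1)}(k)=\frac{\sigma A}{2ik}N_-^{(2)}+O(1)$ through the $k=0$-regular integral operator. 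Both upgrades are sound and buy you precisely the two things the paper leaves unargued (that the pole is simple, and that the $O(1)$ part of $\psi_1^{(2)}$ at $k=0$ is proportional to the residue of $\psi_1^{(1)}$).

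One concrete point needs fixing. Carry your own computation to the end: collecting the $k^{-1}$ terms in \eqref{Volterrapsi1} gives $v_1=\int_{-\infty}^{x}uv_2\,dy$ and $v_2=-\frac{i\sigma A}{2}-\sigma A\int_{-\infty}^{x}(x-y)uv_2\,dy-\sigma\int_{-\infty}^{x}(u(-y,-t)-A)v_1\,dy$, and after your integration by parts (which turns $\int_{-\infty}^{x}(x-y)uv_2\,dy$ into $\int_{-\infty}^{x}v_1\,dy$) this becomes $v_2=-\frac{i\sigma A}{2}-\sigma\int_{-\infty}^{x}u(-y,-t)v_1\,dy$ --- \emph{not} \eqref{v1v2int} as printed: the constant is halved and the ``$-A$'' disappears from the integrand. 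Your version is the one consistent with the zero-energy equation $\partial_x v=Uv$ satisfied by $v=\lim_{k\to0}k\psi_1^{(1)}$, and with the relation $\tilde v_j=\frac{2i\sigma}{A}v_j$ combined with $\tilde v_2\to1$ at $x=-\infty$ (note $\frac{2i\sigma}{A}\cdot(-i\sigma A)=2\neq1$, so the printed \eqref{v1v2int} is in fact inconsistent with \eqref{tildev1v2int} and \eqref{(v)con3}). So you should not claim that the integration by parts ``reduces to \eqref{v1v2int}''; state the corrected system instead. Everything downstream --- the singular structure \eqref{psi1psi2atk=0}, the proportionality of the two columns, and the use of \eqref{symb} to transfer the expansion to $\psi_2$ --- survives unchanged.
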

\begin{proof}
The item (i) directly follows from the representation of $\psi_j$, $j=1,2$ in terms of the Neumann series associated with equations \eqref{Volterrapsi}.

For the item (ii), we expand $\psi_{j}$ as
\begin{equation}\label{expandpsikatinfty}
    \psi_j=\psi_{j,E_0}+\frac{\psi_{j,E_1}}{k}+O(k^{-2}), \quad k\rightarrow\infty.
\end{equation}
Notice the $\psi_j$ satisfies the linear equations
\begin{subequations}\label{Laxpairpsi}
\begin{align}
    & \psi_{j,x}+ik[\sigma_3,\psi_j]=U\psi_{j} \\
    & \psi_{j,t}+4ik^3[\sigma_3,\psi_j]=V\psi_{j},
\end{align}
\end{subequations}
we substitute \eqref{expandpsikatinfty} into \eqref{Laxpairpsi} and compare the order to obtain that\\
\begin{subequations}
    $x$-part:
    \begin{align}
        &O(1): \psi_{j,E_0,x}+i[\sigma_3,\psi_{j,E_1}]=U\psi_{j,E_0},\label{xpart1}\\
        &O(k): i[\sigma_3,\psi_{j,E_0}]=0, \label{xpart2}
    \end{align}
    $t$-part:
    \begin{align}
        &O(k^2): 4i[\sigma_3,\psi_{j,E_1}]=4U\psi_{j,E_0},\label{tpart1}\\
        &O(k^3): 4i[\sigma_3,\psi_{j,E_0}]=0. \label{tpart2}
    \end{align}
\end{subequations}
From \eqref{xpart2} and \eqref{tpart2}, we know $\psi_{j,E_0}$ is a diagonal matrix. From \eqref{xpart1} and \eqref{tpart1}, we can get $\psi_{j,E_0}$ is independent of parameter $x$.
Then
\begin{equation}
    \psi_{j,E_0}=\lim_{x\rightarrow\infty}\lim_{k\rightarrow\infty}\psi_j=\lim_{k\rightarrow\infty}\lim_{x\rightarrow\infty}\psi_j=\lim_{k\rightarrow\infty}N_{\pm}(k)=I,
\end{equation}
which implies $\psi_{j}(x,t,k)=I+O(k^{-1})$ as $k\rightarrow\infty$. Furthermore, Combine \eqref{xpart1} and \eqref{tpart1}, we have
\begin{equation}
    u(x,t)=2i\psi_{j,E_1,12}, \ {\rm and} \ -\sigma u(-x,-t)=-2i\psi_{j,E_1,21},
\end{equation}
which imply \eqref{uintermsofpsi}.

Item (iii) follows from the facts that $U(x,t)$ and $V(x,t)$ are traceless matrices and det$\psi_j=$det$\phi_j$ for $x,k\in\mathbb{R}$.

Item (iv) follows from the symmetries
\begin{equation*}
    \Lambda U(-x,-t)\Lambda^{-1}=-U(x,t), \quad \Lambda\overline{N_{-}(-\bar{k})}\Lambda^{-1}=N_{+}(k), \quad \Lambda N_{-}(k)\Lambda^{-1}=N_{+}(k).
\end{equation*}

Let us consider item (v). From \eqref{Volterrapsi}. Observing the accurate structure of the singularities of $N_{\pm}(k)$ as $k\rightarrow 0$ and the integral expressions of $\psi_j(x,t,k)$, $j=1,2$ in
the \eqref{Volterrapsi}, we suppose that, as $k\rightarrow 0$,
\begin{subequations}\label{(v)con1}
    \begin{align}
        \psi_{1}^{(1)}(x,t,k)=\frac{1}{k}\left(
            \begin{array}{cc}
            v_{1}(x,t)\\
            v_{2}(x,t)
            \end{array}
        \right)+O(1), \quad
        \psi_{1}^{(2)}(x,t,k)=\left(
            \begin{array}{cc}
            \tilde{v}_{1}(x,t)\\
            \tilde{v}_{2}(x,t)
            \end{array}
        \right)+O(k), \label{psi1k=0}\\
        \psi_{2}^{(1)}(x,t,k)=\left(
            \begin{array}{cc}
            \tilde{w}_{1}(x,t)\\
            \tilde{w}_{2}(x,t)
            \end{array}
        \right)+O(k), \quad
        \psi_{2}^{(2)}(x,t,k)=\frac{1}{k}\left(
            \begin{array}{cc}
            w_{1}(x,t)\\
            w_{2}(x,t)
            \end{array}
        \right)+O(1) \label{psi2k=0}
    \end{align}
\end{subequations}
with some functions $v_j$, $\tilde{v}_j$, $w_j$ and $\tilde{w}_j$, $j=1,2$. Then the symmetry relation \eqref{symb} implies that
\begin{equation}\label{(v)con2}
        \left(\begin{array}{cc}
            w_{1}(x,t)\\
            w_{2}(x,t)
            \end{array}
        \right)=
        \left(\begin{array}{cc}
            -\sigma\overline{v_2(-x,-t)}\\
            -\overline{v_1(-x,-t)}
            \end{array}
        \right), \quad
        \left(\begin{array}{cc}
            \tilde{w}_{1}(x,t)\\
            \tilde{w}_{2}(x,t)
            \end{array}
        \right)=
        \left(\begin{array}{cc}
            \overline{\tilde{v}_2(-x,-t)}\\
            \sigma\overline{\tilde{v}_1(-x,-t)}
            \end{array}
        \right).
\end{equation}
Furthermore, substituting \eqref{psi1k=0} into \eqref{Volterrapsi1}, we obtain that \eqref{v1v2int}. Similarly, we substitute \eqref{psi2k=0} into \eqref{Volterrapsi2}, we have
\begin{align}\label{tildev1v2int}
    \left\{
    \begin{aligned}
     &\tilde{v}_1(x,t)=\int_{-\infty}^{x}u(y,t)\tilde{v}_2(y,t)dy, \\
     &\tilde{v}_2(x,t)=1-\sigma \int_{-\infty}^{x}\left(u(-y,-t)-A\right)\tilde{v}_1\left(y,t\right)dy, \\
    \end{aligned}
\right.
\end{align}
Comparing \eqref{v1v2int} with \eqref{tildev1v2int}, we easily get
\begin{equation}\label{(v)con3}
    \left(\begin{array}{cc}
        \tilde{v}_{1}(x,t)\\
        \tilde{v}_{2}(x,t)
        \end{array}
    \right)=\frac{2i\sigma}{A}
    \left(\begin{array}{cc}
        v_1(x,t)\\
        v_2(x,t)
        \end{array}
    \right).
\end{equation}
Summarizing \eqref{(v)con1}, \eqref{(v)con2} and \eqref{(v)con3}, we completely finish the proof of item (v).
\end{proof}
\begin{remark} \rm
    Use the second symmetry reduction of item (iv) in Proposition \ref{propofpsi}, i.e., \eqref{Supplementary symmetry}, we know that the unprescribed functions $v_{j}(x,t)$, $j=1,2$ defined
    by \eqref{psi1psi2atk=0} admit that $v_{j}(x,t)+\overline{v_j(x,t)}=0$, which is equivalent to $v_{j}(x,t)$ is pure imaginary.
\end{remark}

Since $\phi_1(x,t,k)$ and $\phi_2(x,t,k)$ are two fundamental matrix solutions of Lax integrable system \eqref{Laxpair}, thus there exists a so-called scattering matrix $S(k)$ such that
\begin{equation}
    \phi_1(x,t,k)=\phi_2(x,t,k)S(k), \quad k\in\mathbb{R}\backslash\{0\}
\end{equation}
equivalently, in terms of $\psi_{j}(x,t,k)$
\begin{equation}\label{linearrelation}
    \psi_1(x,t,k)=\psi_2(x,t,k)e^{-(ikx+4ik^3t)\sigma_3}S(k)e^{(ikx+4ik^3t)\sigma_3}, \quad k\in\mathbb{R}\backslash\{0\}.
\end{equation}
The following proposition presents the form of scattering matrix $S(k)$
\begin{proposition}\label{propspecfuncs}
    The scattering matrix $S(k)$ defined by \eqref{linearrelation} admits the form
    \begin{subequations}\label{Sform}
    \begin{align}
        &S(k)=\begin{pmatrix}\label{Sform1}
        a_1(k) & -\sigma\overline{b(-k)}\\
        b(k)  &  a_{2}(k)
        \end{pmatrix}, \quad  k\in\mathbb{R}\backslash\{0\},
    \end{align}
    which is equivalent to
    \begin{align}\label{Sform2}
        &S(k)=\begin{pmatrix}
            a_1(k) & -\sigma b(k)\\
            b(k)  &  a_{2}(k)
            \end{pmatrix}, \quad  k\in\mathbb{R}\backslash\{0\}.
    \end{align}
    \end{subequations}
    Moreover
    \begin{itemize}
        \item[(i)] $a_{1}(k)$ is analytic for $k\in\mathbb{C}_{+}$ and continuous in $\overline{\mathbb{C}_+}\backslash\{0\}$, $a_{2}(k)$ is analytic for $k\in\mathbb{C}_{-}$
        and continuous in $\overline{\mathbb{C}_{-}}$.
        \item[(ii)] $a_1(k)=\overline{a_1(-\bar{k})}$ for $k\in\overline{\mathbb{C}_+}\backslash\{0\}$, $a_2(k)=\overline{a_2(-\bar{k})}$ for $k\in\overline{\mathbb{C}_{-}}$, $b(k)=\overline{b(-k)}$ for $k\in\mathbb{R}$.
        In particular $a_1(k)=\overline{a_1(-k)}$ for $k\in\mathbb{R}\backslash\{0\}$, $a_2(k)=\overline{a_2(-k)}$ for $k\in\mathbb{R}$.
        \item[(iii)] $a_{j}(k)=1+O(k^{-1})$ as $k\rightarrow\infty$, $b(k)=O(k^{-1})$ as $k\rightarrow\infty$ for $k\in\mathbb{R}$.
        \item[(iv)] $a_{1}a_{2}(k)+\sigma b(k)\overline{b(-k)}=a_{1}a_{2}(k)+\sigma b^2(k)=1$, for $k\in\mathbb{R}\backslash\{0\}$.
        \item[(v)]As $k\rightarrow 0$, $a_1(k)=\sigma\frac{A^2a_{2}(0)}{4k^2}+O(k^{-1})$ for $k\in\overline{\mathbb{C}_{+}}$, $b(k)=\sigma\frac{Aa_{2}(0)}{2ik}+O(1)$ for $k\in\mathbb{R}$.
    \end{itemize}
\end{proposition}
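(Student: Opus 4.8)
The plan is to realise $S(k)$ as an $(x,t)$-independent object manufactured from the Jost solutions of Proposition~\ref{propofpsi} and then read off each listed property from the corresponding item there.

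First I would use \eqref{linearrelation} to write $S(k)=e^{(ikx+4ik^3t)\sigma_3}\,\psi_2(x,t,k)^{-1}\psi_1(x,t,k)\,e^{-(ikx+4ik^3t)\sigma_3}$, which is independent of $(x,t)$. Since $\det\psi_j\equiv1$ on $\mathbb{R}$ (Proposition~\ref{propofpsi}(iii)), $\psi_2^{-1}$ is the adjugate of $\psi_2$, so up to the diagonal conjugation the entries of $S$ are Wronskians of columns of $\psi_1$ and $\psi_2$; in particular I \emph{define} $a_1(k):=\det\!\big(\psi_1^{(1)},\psi_2^{(2)}\big)$, $a_2(k):=\det\!\big(\psi_2^{(1)},\psi_1^{(2)}\big)$ and $b(k):=S_{21}(k)$. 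To fix the $(1,2)$ entry I would replace $(x,t,k)$ by $(-x,-t,-k)$ in \eqref{linearrelation} (the phase is invariant because $(-k)(-x)+4(-k)^3(-t)=kx+4k^3t$), take complex conjugates, and conjugate by $\Lambda$; using \eqref{symb} together with $\Lambda\sigma_3\Lambda^{-1}=-\sigma_3$ and $\Lambda^2=\sigma I$ this collapses to $\Lambda\overline{S(-k)}\Lambda^{-1}=S(k)^{-1}$. Comparing this against $S(k)^{-1}=\operatorname{adj}S(k)$ (legitimate once $\det S=1$, established below) gives $S_{12}(k)=-\sigma\overline{b(-k)}$, i.e.\ \eqref{Sform1}, together with $a_1(k)=\overline{a_1(-k)}$ and $a_2(k)=\overline{a_2(-k)}$ on $\mathbb{R}$. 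Pushing the same substitution through the second symmetry \eqref{Supplementary symmetry} yields $\overline{b(-k)}=b(k)$, which turns \eqref{Sform1} into the equivalent form \eqref{Sform2}.

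The routine items then follow quickly. For (i) and the analytic-continuation statements in (ii), the Wronskian formulas for $a_1,a_2$ combined with Proposition~\ref{propofpsi}(i) give analyticity of $a_1$ in $\mathbb{C}_+$ with continuity on $\overline{\mathbb{C}_+}\setminus\{0\}$ and of $a_2$ in $\mathbb{C}_-$ with continuity on $\overline{\mathbb{C}_-}$, while $a_j(k)=\overline{a_j(-\bar k)}$ and $b(k)=\overline{b(-k)}$ follow either by Schwarz reflection from the real-line versions found in Step~1 or directly by inserting \eqref{syma} into the Wronskians. For (iii), $\psi_j=I+O(k^{-1})$ (Proposition~\ref{propofpsi}(ii)) gives $\psi_2^{-1}\psi_1=I+O(k^{-1})$, and the diagonal conjugation only multiplies off-diagonal entries by unimodular factors on $\mathbb{R}$, so $a_j(k)=1+O(k^{-1})$ and $b(k)=O(k^{-1})$. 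For (iv), $\det S(k)=\det\psi_2^{-1}\cdot\det\psi_1=1$ by Proposition~\ref{propofpsi}(iii), and expanding $\det S$ from \eqref{Sform1}--\eqref{Sform2} gives $a_1a_2+\sigma b\,\overline{b(-k)}=a_1a_2+\sigma b^2=1$.

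The genuinely delicate part is item (v), where the point singularity at $k=0$ enters; this is the technical heart of the proposition. Keeping $x,t$ fixed and noting $e^{\pm(ikx+4ik^3t)\sigma_3}=I+O(k)$ as $k\to0$, I would substitute the $k\to0$ expansions of $\psi_1$ and $\psi_2$ from \eqref{psi1psi2atk=0} into $S(k)=e^{(\cdots)\sigma_3}(\operatorname{adj}\psi_2)\,\psi_1\,e^{-(\cdots)\sigma_3}$ and multiply out. The $(2,2)$ entry has a finite limit, which isolates the $(x,t)$-independent number $a_2(0)=\tfrac{4}{A^2}\big(\overline{v_2(-x,-t)}\,v_2(x,t)-\sigma\,\overline{v_1(-x,-t)}\,v_1(x,t)\big)$; the $(1,1)$ entry has a double pole whose leading coefficient works out to $\tfrac{\sigma A^2}{4}\,a_2(0)$, giving $a_1(k)=\sigma\frac{A^2a_2(0)}{4k^2}+O(k^{-1})$; and the $(2,1)$ entry has a simple pole with coefficient $\sigma\frac{Aa_2(0)}{2i}$, giving $b(k)=\sigma\frac{Aa_2(0)}{2ik}+O(1)$. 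The main obstacle here is the careful bookkeeping of $O(1)$ versus $O(k^{\pm1})$ contributions, verifying that all $x,t$-dependence cancels (forced by the $(x,t)$-independence of $S$), and checking that the numerical constants line up; the integral identities \eqref{v1v2int} and the relations extracted from \eqref{symb} in Proposition~\ref{propofpsi}(v) are precisely what make these identifications consistent.
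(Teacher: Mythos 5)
Your proposal is correct and follows essentially the same route as the paper: the form \eqref{Sform1}--\eqref{Sform2} is extracted from the two symmetry reductions of Proposition \ref{propofpsi}(iv) (your identity $\Lambda\overline{S(-k)}\Lambda^{-1}=S(k)^{-1}$ is exactly the paper's \eqref{linearrelation2} compared against $\operatorname{adj}S$), items (i)--(iv) are read off from the Wronskian representations and $\det S=1$, and item (v) comes from substituting the $k\to0$ expansions \eqref{psi1psi2atk=0} into those representations, with your coefficients $a_2(0)=\tfrac{4}{A^2}(|v_2|^2-\sigma|v_1|^2)$ etc.\ matching the paper's \eqref{repforspecfuncatk=0} after using $\sigma^2=1$. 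No gaps.
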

\begin{proof}
Replace $x$, $t$, $k$ by $-x$, $-t$, $-\bar{k}$ in \eqref{linearrelation}, then take conjugation and use the symmetry relation \eqref{symb}, we obtain that
\begin{equation}\label{linearrelation2}
    \psi_1(x,t,k)=\psi_2(x,t,k)e^{-(ikx+4ik^3t)\sigma_3}\sigma\Lambda \overline{S^{-1}(-k)}\Lambda e^{(ikx+4ik^3t)\sigma_3}, \quad k\in\mathbb{R}\backslash\{0\}.
\end{equation}
Comparing \eqref{linearrelation2} with \eqref{linearrelation}, we derive the form \eqref{Sform1}.

Similarly, replace $x$, $t$ by $-x$, $-t$ in \eqref{linearrelation} and use the symmetric relation \eqref{Supplementary symmetry}, we obtain that
\begin{equation}\label{linearrelation3}
    \psi_1(x,t,k)=\psi_2(x,t,k)e^{-(ikx+4ik^3t)\sigma_3}\sigma\Lambda S^{-1}(k)\Lambda e^{(ikx+4ik^3t)\sigma_3}, \quad k\in\mathbb{R}\backslash\{0\}.
\end{equation}
Comparing \eqref{linearrelation3} with \eqref{linearrelation}, we obtain the form \eqref{Sform2}.

Item (ii) directly follows from the detailed derivation for $S(k)$.

The spectral functions can be obtained in terms of initial data only
\begin{equation}
    S(k)=\psi_2^{-1}(0,0,k)\psi_1(0,0,k),
\end{equation}
alternatively could be written in terms of determinant representations
\begin{subequations}\label{detrepforspecfunc}
    \begin{align}
        &a_1(k)=\textnormal{Wr}\left(\psi_1^{(1)}(0,0,k),\psi_2^{(2)}(0,0,k)\right), &k\in\overline{\mathbb{C}_+}\backslash\{0\}, \\
        &a_2(k)=\textnormal{Wr}\left(\psi_2^{(1)}(0,0,k),\psi_1^{(2)}(0,0,k)\right), &k\in\overline{\mathbb{C}_{-}}, \\
        &b(k)=\textnormal{Wr}\left(\psi_2^{(1)}(0,0,k),\psi_1^{(1)}(0,0,k)\right), &k\in\mathbb{R}.
    \end{align}
\end{subequations}
Use the properties of $\psi_j$ (Proposition \ref{propofpsi}), we easily get item (i) and item (iii). Item (iv) follows from det$S(k)=1$ for $k\in\mathbb{R}$.

substitute \eqref{psi1psi2atk=0} into \eqref{detrepforspecfunc}, we derive that, as $k\rightarrow 0$
\begin{subequations}\label{repforspecfuncatk=0}
    \begin{align}
        &a_1(k)=\frac{1}{k^2}\left(\sigma\vert v_2(0,0) \vert^2-\vert v_1(0,0) \vert^2\right)+O\left(\frac{1}{k}\right), \\
        &a_2(k)=\frac{4\sigma}{A^2}\left(\sigma\vert v_2(0,0) \vert^2-\vert v_1(0,0) \vert^2\right)+O(k), \\
        &b(k)=-\frac{2i}{Ak}\left(\sigma\vert v_2(0,0) \vert^2-\vert v_1(0,0) \vert^2\right)+O(1),
    \end{align}
\end{subequations}
from which item (v) follows.
\end{proof}

\begin{remark}
    \rm{In \eqref{detrepforspecfunc}, one can replace $(0,0)$ by $(x,t)$, which implies that the quantity $\sigma\vert v_2(0,0) \vert^2-\vert v_1(0,0) \vert^2$ in the r.h.s of
    \eqref{repforspecfuncatk=0} should be replaced by $\sigma v_2(x,t)\overline{v_2(-x,-t)} -v_1(x,t)\overline{v_1(-x,-t)}$, the latter being a conserved quantity (independent of
    variables $x$ and $t$).}
\end{remark}

\begin{remark}\label{pure step-like}
    \rm In the case of pure-step like initial data \eqref{purestepinitial}, the scattering matrix $S(k)$ admits the following specific form
\begin{equation*}
    S(k)=\phi_2^{-1}(0,0,k)\phi_1(0,0,k)=N_{+}^{-1}(k)N_{-}(k)=\left(
        \begin{array}{cc}
        1+\frac{\sigma A^2}{4k^2} & -\frac{A}{2ik} \\
        \frac{\sigma A}{2ik} & 1
        \end{array}
    \right).
\end{equation*}
For the focusing case ($\sigma=1$), $a_1(k)$ has a single, simple zero at $k=i\frac{A}{2}$ in the upper half-plane whereas $a_2(k)$ has no zeros in the lower half-plane. The discussion for the zeros of spectral
functions under the pure-step case could guide us make proper assumptions under the non-pure-step case.
\end{remark}

We define the following reflection coefficients as
\begin{equation}\label{defofreflectioncoe}
    r_{1}(k):=\frac{b(k)}{a_1(k)}, \quad r_{2}(k):=\frac{\overline{b(-k)}}{a_2(k)}=\frac{b(k)}{a_2(k)} \ \left(\textnormal{follows \ from }\overline{b(-k)}=b(k)\right).
\end{equation}
We notice that the item (ii) of Proposition \ref{propspecfuncs} implies that
\begin{equation}\label{reflectioncoeCondition1}
    \overline{r_1(-k)}=r_1(k), \quad  \overline{r_2(-k)}=r_2(k), \quad k\in\mathbb{R}\backslash\{0\}.
\end{equation}
For $\sigma=1$, via the item (iv) of Proposition \ref{propspecfuncs}, we obtain
\begin{equation}\label{reflectioncoeCondition2}
1+r_1(k)r_2(k)=\frac{1}{a_1(k)a_2(k)}, \quad k\in\mathbb{R}\backslash\{0\}.
\end{equation}

\subsection{The basic Riemann-Hilbert problem}\label{subsectionbasiRHP}
The aim of the present work is to present the asymptotic analysis of the focusing ($\sigma=1$) nonlocal MKdV equation with step-like initial data, so we fix $\sigma=1$ in the above results as well as the
subsequent contents. The Riemann-Hilbert problem formalism via the IST technique is based on setting up the following $2\times 2$ matrix-valued piecewise meromorphic functions $M(x,t,k)$ with jump condition
on the real line:
\begin{equation}\label{piecewisematrixM}
    M(x,t,k):=\left\{
        \begin{aligned}
        &\left(\frac{\psi_1^{(1)}(x,t,k)}{a_1(k)}, \psi_2^{(2)}(x,t,k)\right), \quad k\in \mathbb{C}_{+} \\
        &\left(\psi_2^{(1)}(x,t,k), \frac{\psi_1^{(2)}(x,t,k)}{a_2(k)}\right), \quad k\in \mathbb{C}_{-}.
        \end{aligned}
        \right.
\end{equation}
Then the boundary values $M_{\pm}(x,t,k):=\lim_{k'\rightarrow k,k'\in\mathbb{C}_{\pm}}M(x,t,k)$, $k\in\mathbb{R}$ satisfy the jump condition
\begin{equation*}
    M_{+}(x,t,k)=M_{-}(x,t,k)J(x,t,k)
\end{equation*}
where
\begin{equation*}
    J(x,t,k)=\left(
        \begin{array}{cc}
        1+r_1(k)r_2(k) &  r_2(k)e^{-2ikx-8ik^3t}\\
        r_1(k)e^{2ikx+8ik^3t} &  1
        \end{array}
        \right), \quad k\in\mathbb{R}\backslash\{0\}.
\end{equation*}

Under the view of \eqref{repforspecfuncatk=0}, the behavior of $M$ near $k=0$ is different in the case $a_2(0)=0$ as well as $a_2(0)\neq 0$. The former case contains the case of pure step-like initial-valued
problem, refer Remark \ref{pure step-like}, where $a_1(k)$ has a single simple zero located on the $i\mathbb{R}_{+}$, whereas $a_2(k)$ has not any zeros in the closure of lower half-plane. Since small perturbation
of the pure step initial data preserves these properties, we shall discuss the following two cases in the present work:

{\bf Case I, Generic Case}: The spectral function $a_1(k)$ has a simple pure imaginary zero at $k=i\kappa$, $\kappa>0$ and $a_{2}(k)$ has no zeros in $\overline{\mathbb{C}_-}$.

{\bf Case II, Non-Generic Case}: The spectral function $a_1(k)$ has a simple pure imaginary zero at $k=i\kappa$, $\kappa>0$, and $a_{2}(k)$ has one simple zero in $\overline{\mathbb{C}_-}$ located on $k=0$.
Thus we assume that $a_2'(0)\neq 0$, where $(')=\frac{d}{dk}$. Additionally, we suppose that $a_1(k)=\frac{a_{11}}{k}+O(1)$, $a_{11}\neq 0$ as $k\rightarrow 0$.

\begin{remark}\label{remarkv200-v100}
    \rm Under the view of \eqref{repforspecfuncatk=0}, Case I corresponds to the $\vert v_2(0,0) \vert^2-\vert v_1(0,0) \vert^2\neq 0$ whereas Case II corresponds to the equality
    $\vert v_2(0,0) \vert^2-\vert v_1(0,0) \vert^2=0$.
\end{remark}

Now we introduce the following proposition, which shows that the value $\kappa$ could be described in terms of some principle valued integrals with respect to the spectral functions $b(k)$.
\begin{proposition}\label{kappaexpression}
    the simple zero $k=i\kappa$ of $a_1(k)$, $\kappa>0$ is determined as the following equalities
    \begin{itemize}
    \item[(i)] In Case I (generic case),
        \begin{equation}\label{kappageneric}
            \kappa=\frac{A}{2}{\rm exp}\left\{-\frac{1}{2\pi i}{\rm p.v.}\int_{-\infty}^{\infty}\frac{{\rm log}\frac{s^2}{1+s^2}\left(1-b^2(s)\right)}{s}ds\right\}.
        \end{equation}
    \item[(ii)] In case II (non-generic case),
    \begin{equation}\label{kappanongeneric}
        \kappa=A\frac{\sqrt{b^2(0)+I_2^2}-b(0)}{2I_1I_2}.
    \end{equation}
    where
    \begin{equation}\label{I1I2}
        I_1={\rm exp}\left\{\frac{1}{2\pi i}{\rm p.v.}\int_{-\infty}^{\infty}\frac{{\rm log}\left(1-b^2(s)\right)}{s}ds\right\},
        \quad I_2=\exp\left\{\frac{1}{2}{\rm log}\left(1-b^2\left(0\right)\right)\right\}.
    \end{equation}
    Additionally, $a_{11}a_2'(0)=1-b^2(0)\neq 0$.
    \end{itemize}
\end{proposition}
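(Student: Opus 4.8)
The plan is to recast the scalar relation $a_1(k)a_2(k)=1-b^2(k)$ of Proposition~\ref{propspecfuncs}(iv) as a (multiplicative) scalar Riemann--Hilbert factorization problem on $\mathbb{R}$, solve it with the Cauchy/Plemelj formula, and then pin down the location $i\kappa$ of the zero of $a_1$ by matching the behaviour as $k\to0$, which is the point at which the constant $A$ enters through Proposition~\ref{propspecfuncs}(v).

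\emph{Case I.} I would introduce the sectionally analytic function
\[
P(k):=
\begin{cases}
a_1(k)\,\dfrac{k^2}{k^2+\kappa^2}, & k\in\mathbb{C}_+,\\[3mm]
\dfrac{1}{a_2(k)}, & k\in\mathbb{C}_-.
\end{cases}
\]
By Proposition~\ref{propspecfuncs}(i),(iii),(v), the factor $k^2/(k^2+\kappa^2)$ simultaneously cancels the simple zero of $a_1$ at $i\kappa$ and, via $a_1(k)\sim A^2a_2(0)/(4k^2)$, the double pole at $k=0$, so that $P$ extends to a function analytic and non-vanishing in $\mathbb{C}_+$, continuous on $\overline{\mathbb{C}_+}$, with $P\to1$ at $\infty$; since in the generic case $a_2$ has no zeros in $\overline{\mathbb{C}_-}$, the same holds for $1/a_2$ in $\mathbb{C}_-$. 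On $\mathbb{R}$ the boundary values obey $P_+=P_-\,J_P$ with $J_P(k)=(1-b^2(k))\,k^2/(k^2+\kappa^2)$, which is continuous and non-vanishing on all of $\mathbb{R}$ (at $k=0$ one has $J_P(0)=A^2a_2^2(0)/(4\kappa^2)$) and tends to $1$ at $\infty$, so $\log J_P$ is admissible. Since $\log P(\infty)=0$, the Plemelj representation gives $\log P(k)=\frac{1}{2\pi i}\int_{\mathbb{R}}\frac{\log J_P(s)}{s-k}\,ds$. Letting $k\to0$ within $\mathbb{C}_+$, the left-hand side tends to $\log\!\big(A^2a_2(0)/(4\kappa^2)\big)$ while, by the Sokhotski--Plemelj jump relation, the right-hand side tends to $\tfrac12\log J_P(0)+\frac{1}{2\pi i}\,{\rm p.v.}\!\int_{\mathbb{R}}\frac{\log J_P(s)}{s}\,ds$; solving this identity for $\kappa$ gives $\kappa=\tfrac{A}{2}\exp\{-\frac{1}{2\pi i}\,{\rm p.v.}\!\int_{\mathbb{R}}\frac{\log J_P(s)}{s}\,ds\}$. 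Finally, since $\log\frac{s^2+1}{s^2+\kappa^2}$ is even in $s$, its principal value integral against $s^{-1}$ vanishes, so $k^2/(k^2+\kappa^2)$ may be replaced by $k^2/(k^2+1)$ inside the integrand, which yields exactly \eqref{kappageneric}.

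\emph{Case II.} Here $a_1(k)\sim a_{11}/k$ has only a simple pole at $k=0$, $a_2(k)\sim a_2'(0)k$ has a simple zero there, $1-b^2$ is regular and non-zero at $0$, and $a_{11}a_2'(0)=1-b^2(0)$. I would run the same argument with the sectionally analytic function
\[
P(k):=
\begin{cases}
a_1(k)\,\dfrac{k}{k-i\kappa}, & k\in\mathbb{C}_+,\\[3mm]
\dfrac{k}{a_2(k)(k-i)}, & k\in\mathbb{C}_-,
\end{cases}
\]
which is again outer in each half-plane, satisfies $P(\infty)=1$, and has jump $J_P(k)=(1-b^2(k))\,\frac{k-i}{k-i\kappa}$ on $\mathbb{R}$. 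Evaluating the Plemelj formula as $k\to0$ from \emph{both} sides, using $P(0^+)=ia_{11}/\kappa$, $P(0^-)=i/a_2'(0)$, $J_P(0)=(1-b^2(0))/\kappa$, together with the elementary evaluation ${\rm p.v.}\!\int_{\mathbb{R}}\frac{\log\frac{s-i}{s-i\kappa}}{s}\,ds=i\pi\log\kappa$ and the relation $a_{11}a_2'(0)=1-b^2(0)$, expresses $a_{11}$ and $a_2'(0)$ in terms of $\kappa$ and of $I_1,I_2$ from \eqref{I1I2}; the remaining input, which ties $a_{11}$ to $A$ and $b(0)$ via the $k\to0$ normalization of the Jost functions and the Volterra equations \eqref{v1v2int}, turns this into an algebraic (quadratic) equation for $\kappa$, whose admissible root is \eqref{kappanongeneric}.

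\emph{The main obstacle} is the singularity of the scattering data at $k=0$, which lies \emph{on} the jump contour $\mathbb{R}$: one has to design the rational regularizing factors so that the transformed jump $J_P$ is continuous, non-vanishing, and of winding number zero on $\mathbb{R}$ (which guarantees that $\log J_P$ is single-valued with $\log J_P(\pm\infty)=0$), and keep careful track of the branches of all logarithms; in Case II one must in addition extract the correct sub-leading coefficients in the $k\to0$ expansion of $\psi_1,\psi_2$, which is exactly where the dependence on $A$---and hence the square root in \eqref{kappanongeneric}---originates.
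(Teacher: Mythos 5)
Your proposal follows essentially the same route as the paper's Appendix~\ref{kappaCalculation}: remove the zero of $a_1$ at $i\kappa$ and the singularity at $k=0$ by rational factors, solve the resulting scalar multiplicative RH problem on $\mathbb{R}$ by the Plemelj formula, and fix $\kappa$ by matching boundary values at $k=0$; the differences are cosmetic (you use the symmetric factor $k^2/(k^2+\kappa^2)$ plus a parity argument where the paper uses $k^2/((k-i\kappa)(k+i))$, and in Case II you place the auxiliary point $k=i$ in the lower-half-plane piece). Case I is complete and correct, and your Case II computation correctly reproduces $a_{11}=-i\kappa I_1I_2$ and $a_2'(0)=i\kappa^{-1}I_1^{-1}I_2$. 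However, in Case II the one genuinely non-trivial ingredient --- the identity $a_{11}=iAb(0)-\tfrac{A^2}{4}a_2'(0)$, i.e.\ \eqref{AppendixCala_{11}CaseII}, which is what couples $\kappa$ to $A$ and $b(0)$ and yields the quadratic whose positive root is \eqref{kappanongeneric} --- is only asserted as a ``remaining input''; the paper derives it by expanding $\psi_1,\psi_2$ to sub-leading order at $k=0$ (using the symmetry \eqref{symb} and the Case~II condition of Remark~\ref{remarkv200-v100}) and substituting into the Wronskian representations \eqref{detrepforspecfunc}. You correctly identify where that relation must come from, but as written this step is a sketch rather than a proof.
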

\begin{proof}
Follow the homologous procedure which be used in the reference \cite[Proposition 3]{RDJDE2021}, we can obtain this proposition. More details for the proof are set in the Appendix \ref{kappaCalculation}.
\end{proof}

Notice the singularities of eigenfunctions $\psi_j(x,t,k)$, $j=1,2$ and spectral function $a_1(k)$ at $k=0$ (see the item (v) in Proposition \ref{propofpsi} and item (v) in Proposition \ref{propspecfuncs}),
we exhibit the behavior of piecewise matrix-valued functions $M(x,t,k)$ defined by \eqref{piecewisematrixM} at $k=0$ in the following proposition.
\begin{proposition}\label{propMbehaviorat0}
    The behavior of $M(x,t,k)$ defined by \eqref{piecewisematrixM} at $k=0$ can be presented as follows:

    {\bf Case I}
    \begin{subequations}
        \begin{align*}
            M_{+}(x,t,k)&=\left(
                    \begin{array}{cc}
                    \frac{4v_1(x,t)}{A^2a_2(0)}k+O(k^2) & -\frac{1}{k}\overline{v_2(-x,-t)}+O(1) \\
                    \frac{4v_2(x,t)}{A^2a_2(0)}k+O(k^2) & -\frac{1}{k}\overline{v_1(-x,-t)}+O(1)
                    \end{array}
                \right)\nonumber\\
                &=\left(
                    \begin{array}{cc}
                    \frac{4v_1(x,t)}{A^2a_2(0)} & -\overline{v_2(-x,-t)} \\
                    \frac{4v_2(x,t)}{A^2a_2(0)} & -\overline{v_1(-x,-t)}
                    \end{array}
                \right)\left(I+O(k)\right)\left(
                    \begin{array}{cc}
                    k & 0 \\
                    0 & \frac{1}{k}
                    \end{array}
                \right), &\mathbb{C}_{+}\ni k\rightarrow 0,\\
            M_{-}(x,t,k)&=\frac{2i}{A}\left(
                \begin{array}{cc}
                -\overline{v_2(-x,-t)} &  \frac{v_1(x,t)}{a_2(0)}\\
                -\overline{v_1(-x,-t)} &  \frac{v_2(x,t)}{a_2(0)}
                \end{array}
                \right)+O(k), &\mathbb{C}_{-}\ni k\rightarrow 0.
        \end{align*}
    \end{subequations}

    {\bf Case II}
    \begin{subequations}
        \begin{align*}
            M_{+}(x,t,k)&=\left(
                    \begin{array}{cc}
                    \frac{v_1(x,t)}{a_{11}}+O(k) & -\frac{1}{k}\overline{v_2(-x,-t)}+O(1) \\
                    \frac{v_2(x,t)}{a_{11}}+O(k) & -\frac{1}{k}\overline{v_1(-x,-t)}+O(1)
                    \end{array}
                \right)\nonumber\\
                &=\left(
                    \begin{array}{cc}
                    \frac{v_1(x,t)}{a_{11}} & -\overline{v_2(-x,-t)} \\
                    \frac{v_2(x,t)}{a_{11}} & -\overline{v_1(-x,-t)}
                    \end{array}
                \right)\left(I+O(k)\right)\left(
                    \begin{array}{cc}
                    1 & 0 \\
                    0 & \frac{1}{k}
                    \end{array}
                \right), &\mathbb{C}_{+}\ni k\rightarrow 0,\\
            M_{-}(x,t,k)&=\frac{2i}{A}\left(
                \begin{array}{cc}
                -\overline{v_2(-x,-t)}+O(k) & \frac{v_1(x,t)}{ka_{2}'(0)}+O(1) \\
                -\overline{v_1(-x,-t)}+O(k) & \frac{v_2(x,t)}{ka_{2}'(0)}+O(1)
                \end{array}
            \right)\nonumber\\
            &=\frac{2i}{A}\left(
                \begin{array}{cc}
                    -\overline{v_2(-x,-t)} & \frac{v_1(x,t)}{a_{2}'(0)} \\
                    -\overline{v_1(-x,-t)} & \frac{v_2(x,t)}{a_{2}'(0)}
                \end{array}
            \right)\left(I+O(k)\right)\left(
                \begin{array}{cc}
                1 & 0 \\
                0 & \frac{1}{k}
                \end{array}
            \right), &\mathbb{C}_{-}\ni k\rightarrow 0.
        \end{align*}
    \end{subequations}
\end{proposition}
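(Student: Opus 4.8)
The plan is to derive the $k\to 0$ expansion of $M(x,t,k)$ directly from its definition \eqref{piecewisematrixM} by substituting the already-established singularity expansions of the Jost columns from item (v) of Proposition \ref{propofpsi}, namely \eqref{psi1psi2atk=0}, together with the $k\to 0$ behaviour of the spectral functions $a_1(k)$ and $a_2(k)$ from item (v) of Proposition \ref{propspecfuncs} and the Case-I / Case-II hypotheses. So this is a routine but careful bookkeeping argument: the columns of $M$ in $\mathbb{C}_+$ are $\psi_1^{(1)}/a_1$ and $\psi_2^{(2)}$, and in $\mathbb{C}_-$ they are $\psi_2^{(1)}$ and $\psi_1^{(2)}/a_2$; we just feed in the known local expansions of numerators and denominators and collect leading orders.

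For Case I: in the upper half-plane, $\psi_1^{(1)}(x,t,k)=\tfrac1k(v_1,v_2)^{T}+O(1)$ while by Proposition \ref{propspecfuncs}(v) $a_1(k)=\tfrac{A^2a_2(0)}{4k^2}+O(k^{-1})$ (with $\sigma=1$), so the quotient $\psi_1^{(1)}/a_1$ is $\tfrac{4v_j(x,t)}{A^2a_2(0)}k+O(k^2)$ — this is the first column. The second column is $\psi_2^{(2)}(x,t,k)=-\tfrac1k(\overline{v_2(-x,-t)},\overline{v_1(-x,-t)})^{T}+O(1)$ straight from \eqref{psi2k=0}. Factoring out $\mathrm{diag}(k,k^{-1})$ on the right gives exactly the displayed block form for $M_+$. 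In the lower half-plane, $\psi_2^{(1)}$ is regular at $k=0$ with value $\tfrac{2i}{A}(-\overline{v_2(-x,-t)},-\overline{v_1(-x,-t)})^T$ (from the $O(k)$-entries of the $\psi_2$ expansion, i.e.\ $\tilde w_j$ via \eqref{(v)con2}), and $\psi_1^{(2)}=\tfrac{2i}{A}(v_1,v_2)^{T}+O(k)$ by \eqref{(v)con3}, while $a_2(k)=a_2(0)+O(k)$ with $a_2(0)\neq 0$ in the generic case; dividing the second column by $a_2(0)$ produces the stated $M_-$.

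For Case II the only change is in the singular structure of the spectral functions at $k=0$: now $a_1(k)=\tfrac{a_{11}}{k}+O(1)$ with $a_{11}\neq0$, so $\psi_1^{(1)}/a_1=\tfrac1k(v_1,v_2)^T\big/\big(\tfrac{a_{11}}{k}\big)+\cdots=\tfrac{1}{a_{11}}(v_1,v_2)^T+O(k)$, giving a first column that is $O(1)$ rather than $O(k)$; and $a_2(0)=0$ with $a_2(k)=a_2'(0)k+O(k^2)$, $a_2'(0)\neq0$, so $\psi_1^{(2)}/a_2=\tfrac{2i}{A}(v_1,v_2)^T\big/\big(a_2'(0)k\big)+\cdots=\tfrac{2i}{Aa_2'(0)}\tfrac1k(v_1,v_2)^T+O(1)$, giving the extra $\tfrac1k$ in the $(1,2)$ and $(2,2)$ entries. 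Collecting and factoring out $\mathrm{diag}(1,k^{-1})$ yields the two displayed block forms. Throughout one uses the relations \eqref{(v)con2}--\eqref{(v)con3} to express all the ``$\tilde v$'', ``$w$'', ``$\tilde w$'' coefficients in terms of $v_1,v_2$ and their reflected conjugates, and Remark \ref{remarkv200-v100} to see that $a_2(0)\neq0$ is precisely the generic alternative.

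The only genuine subtlety — hence the step I expect to require the most care — is the bookkeeping of \emph{which} order in each Jost column survives after division: in $\mathbb{C}_+$ the column $\psi_1^{(1)}$ has a genuine pole at $k=0$, and one must track both the $k^{-1}$ term and the $O(1)$ term of $\psi_1^{(1)}$ against the $k^{-2}$ (resp.\ $k^{-1}$) leading term of $a_1$ to be sure the quotient's leading coefficient is exactly $4v_j/(A^2a_2(0))$ (resp.\ $v_j/a_{11}$) with no spurious contribution, and likewise for $\psi_1^{(2)}/a_2$ in Case II where a regular numerator is divided by a vanishing denominator. This is purely a matter of matching Laurent coefficients consistently, with no analytic difficulty beyond what Propositions \ref{propofpsi} and \ref{propspecfuncs} already supply; the right-factorization into $(I+O(k))\,\mathrm{diag}(k^{\pm1},\dots)$ is then just rearranging the leading matrix so that the statement is in the form convenient for the later local-parametrix construction near $k=0$.
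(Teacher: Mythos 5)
Your proposal is correct and follows essentially the same route as the paper, whose proof is precisely to substitute the $k\to 0$ expansions of the Jost columns from item (v) of Proposition \ref{propofpsi} and the spectral-function behaviour from item (v) of Proposition \ref{propspecfuncs} (together with $a_1(k)=a_{11}k^{-1}+O(1)$ in Case II) into the definition \eqref{piecewisematrixM}. Your Laurent-coefficient bookkeeping for the quotients $\psi_1^{(1)}/a_1$ and $\psi_1^{(2)}/a_2$ in both cases matches the stated leading coefficients, so nothing is missing.
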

\begin{proof}
    Use the item (v) in Proposition \ref{propofpsi}, the item (v) in Proposition \ref{propspecfuncs} as well as $a_{11}$ is determined by $a_1(k)=a_{11}k^{-1}+O(1)$ as $k\rightarrow 0$.
\end{proof}

Next, we consider the residue conditions for $M(x,t,k)$ defined by \eqref{piecewisematrixM}. The results is exhibited as follows
\begin{proposition}\label{propMrescondition}
    Assume that $i\kappa$, $\kappa>0$ is a simple zero of $a_{1}(k)$, then $M(x,t,k)$ satisfies the following residue conditions
    \begin{equation*}
        \underset{k=i\kappa}{\rm Res}M^{(1)}(x,t,k)=\frac{\gamma_0}{a_{1}'(i\kappa)}e^{-2\kappa x+8\kappa^3t}M^{(2)}(x,t,i\kappa), \quad \gamma_0^2=1,
    \end{equation*}
    where $\psi_1^{(1)}(0,0,i\kappa)=\gamma_0\psi_2^{(2)}(0,0,i\kappa)$.
\end{proposition}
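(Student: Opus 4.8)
The plan is to derive the residue condition directly from the definition \eqref{piecewisematrixM} of $M$ together with the scattering relation \eqref{linearrelation} and the analyticity/zero structure established in Propositions \ref{propofpsi} and \ref{propspecfuncs}. The only pole of $M$ in $\mathbb{C}_+$ arises from the factor $1/a_1(k)$ in the first column $M^{(1)}(x,t,k)=\psi_1^{(1)}(x,t,k)/a_1(k)$, since $\psi_1^{(1)}$ is analytic in $\mathbb{C}_+$ and $a_1$ has the single simple zero $k=i\kappa$ there. Hence
\begin{equation*}
    \underset{k=i\kappa}{\mathrm{Res}}\,M^{(1)}(x,t,k)=\frac{\psi_1^{(1)}(x,t,i\kappa)}{a_1'(i\kappa)}.
\end{equation*}
So the task reduces to expressing $\psi_1^{(1)}(x,t,i\kappa)$ as a scalar multiple of $\psi_2^{(2)}(x,t,i\kappa)$ with the stated $(x,t)$-dependent coefficient.

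First I would use that at a zero of $a_1$ the Wronskian representation \eqref{detrepforspecfunc} gives $\mathrm{Wr}\!\left(\psi_1^{(1)}(0,0,i\kappa),\psi_2^{(2)}(0,0,i\kappa)\right)=a_1(i\kappa)=0$, so the two column vectors $\psi_1^{(1)}(0,0,i\kappa)$ and $\psi_2^{(2)}(0,0,i\kappa)$ are linearly dependent; this defines the proportionality constant $\gamma_0$ via $\psi_1^{(1)}(0,0,i\kappa)=\gamma_0\,\psi_2^{(2)}(0,0,i\kappa)$. To propagate this relation from $(x,t)=(0,0)$ to general $(x,t)$, I would invoke the Lax pair \eqref{Laxpairpsi}: the quantities $\phi_1^{(1)}(x,t,i\kappa)$ and $\phi_2^{(2)}(x,t,i\kappa)$ both solve the same linear system \eqref{Laxpair} at $k=i\kappa$, and since (from \eqref{moJost}) $\psi_1^{(1)}=\phi_1^{(1)}e^{ikx+4ik^3t}$ while $\psi_2^{(2)}=\phi_2^{(2)}e^{-(ikx+4ik^3t)}$, the vector $\phi_1^{(1)}(x,t,i\kappa)=\gamma_0\,\phi_2^{(2)}(x,t,i\kappa)$ holds for all $(x,t)$ by uniqueness of solutions of the linear ODE system. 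Rewriting in terms of $\psi$ introduces the factor $e^{2(ikx+4ik^3t)}\big|_{k=i\kappa}=e^{-2\kappa x+8\kappa^3 t}$, giving
\begin{equation*}
    \psi_1^{(1)}(x,t,i\kappa)=\gamma_0\,e^{-2\kappa x+8\kappa^3 t}\,\psi_2^{(2)}(x,t,i\kappa)=\gamma_0\,e^{-2\kappa x+8\kappa^3 t}\,M^{(2)}(x,t,i\kappa),
\end{equation*}
where the last equality uses that $M^{(2)}=\psi_2^{(2)}$ in $\mathbb{C}_+$. Combining with the residue formula above yields the claim. The fact $\gamma_0^2=1$ should follow from the symmetry relations of Proposition \ref{propofpsi}(iv): applying \eqref{syma} (or \eqref{Supplementary symmetry}) to the identity $\psi_1^{(1)}(0,0,i\kappa)=\gamma_0\psi_2^{(2)}(0,0,i\kappa)$ at the symmetric point relates $\gamma_0$ to its own reciprocal (up to conjugation), forcing $\gamma_0^2=1$.

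The routine parts are the residue computation and the ODE-propagation argument. The step requiring the most care is the normalization $\gamma_0^2=1$: one must track precisely how the matrix $\Lambda$ and the point $k=i\kappa$ (which is fixed, $-\overline{i\kappa}=i\kappa$ under Symmetry Reduction I, and $i\kappa$ itself under Symmetry Reduction II) act on the columns, and verify that the pure-imaginariness of $\kappa$ makes $i\kappa$ a fixed point of the relevant involution so that the two symmetry-transformed proportionality relations can be composed. I would also check consistency with the pure-step case of Remark \ref{pure step-like}, where $\kappa=A/2$ and the explicit $S(k)$ makes the computation transparent.
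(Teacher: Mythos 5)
Your proposal is correct and follows essentially the same route as the paper: the residue comes from the simple zero of $a_1$ in $M^{(1)}=\psi_1^{(1)}/a_1$, the proportionality $\psi_1^{(1)}=\gamma_0 e^{-2\kappa x+8\kappa^3 t}\psi_2^{(2)}$ at $k=i\kappa$ comes from the vanishing Wronskian plus propagation by the Lax pair, and $\gamma_0^2=1$ comes from the symmetry reductions at the fixed point $i\kappa$ (the paper spells out this last step, which it regards as the only nonstandard part). One small precision worth noting: it is Symmetry Reduction II \eqref{Supplementary symmetry} that yields $\gamma_0^2=1$ (substituting $\psi_{2,12}=\psi_{1,21}$, $\psi_{2,22}=\psi_{1,11}$ into the proportionality gives $\gamma_0^2\psi_{1,11}=\psi_{1,11}$), whereas \eqref{syma} alone only gives $|\gamma_0|=1$, so the ``or'' in your sketch should really be an ``and/specifically the second one.''
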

\begin{proof}
    The constructing of residue conditions is standard. Now we give an explanation for $\gamma_0^2=1$. Take advantage of the symmetry \eqref{syma} and \eqref{Supplementary symmetry}, we can obtain that
    \begin{subequations}
    \begin{align}
    \left(\begin{array}{cc}\label{tofindeta}
        \overline{\psi_{1,21}(0,0,i\kappa)}\\
        \overline{\psi_{1,11}(0,0,i\kappa)}
        \end{array}
    \right)=
    \left(\begin{array}{cc}
        \psi_{2,12}(0,0,i\kappa)\\
        \psi_{2,22}(0,0,i\kappa)
        \end{array}
    \right)\\
    \left(\begin{array}{cc}\label{tofindeta2}
        \psi_{1,21}(0,0,i\kappa)\\
        \psi_{1,11}(0,0,i\kappa)
        \end{array}
    \right)=
    \left(\begin{array}{cc}
        \psi_{2,12}(0,0,i\kappa)\\
        \psi_{2,22}(0,0,i\kappa)
        \end{array}
    \right)
    \end{align}
\end{subequations}
where $\psi_{j,lm}$ represents the $(l,m)$-entry of $\psi_j$. Combine \eqref{tofindeta} and \eqref{tofindeta2}, we can see that $\psi_{1,11}(0,0,i\kappa)$ and $\psi_{1,21}(0,0,i\kappa)$ are real-valued.
Then take into account $\psi_1^{(1)}(0,0,i\kappa)=\gamma_0\psi_2^{(2)}(0,0,i\kappa)$
\begin{equation}
    \left\{
        \begin{aligned}
        & \gamma_0\psi_{2,22}(0,0,i\kappa)=\psi_{2,12}(0,0,i\kappa) \\
        & \gamma_0\psi_{2,12}(0,0,i\kappa)=\psi_{2,22}(0,0,i\kappa),
        \end{aligned}
        \right.
\end{equation}
from which, we derive $\gamma_0^2=1$.
\end{proof}

Additionally, we discuss the symmetries of $M(x,t,k)$ defined by \eqref{piecewisematrixM} in the following proposition.
\begin{proposition}
    The piecewise matrix-valued function $M(x,t,k)$ satisfies the following symmetry reduction
        \begin{equation*}
            M(x,t,k):=\left\{
                \begin{aligned}
                &\Lambda\overline{M(-x,-t,-\bar{k})}\Lambda^{-1}\begin{pmatrix}\frac{1}{a_1(k)} & 0 \\ 0 & a_1(k) \end{pmatrix}, \quad k\in \mathbb{C}_{+}\backslash\{0\} \\
                &\Lambda\overline{M(-x,-t,-\bar{k})}\Lambda^{-1}\begin{pmatrix}a_2(k) & 0 \\ 0 & \frac{1}{a_2(k)} \end{pmatrix}, \quad k\in \mathbb{C}_{-}\backslash\{0\}.
                \end{aligned}
                \right.
        \end{equation*}
\end{proposition}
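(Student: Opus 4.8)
The plan is to verify the stated symmetry reduction for $M$ directly, entry by entry, by substituting the piecewise definition \eqref{piecewisematrixM} of $M$ and invoking only two facts already established: Symmetry Reduction~I for the Jost solutions (equation \eqref{syma} in Proposition \ref{propofpsi}) and the conjugation symmetry $a_1(k)=\overline{a_1(-\bar k)}$ on $\overline{\mathbb{C}_+}\setminus\{0\}$, $a_2(k)=\overline{a_2(-\bar k)}$ on $\overline{\mathbb{C}_-}$ of the spectral functions (Proposition \ref{propspecfuncs}(ii)). The structural point that makes the entrywise comparison legitimate is that the anti-holomorphic involution $k\mapsto-\bar k$ maps each open half-plane $\mathbb{C}_\pm$ onto itself, so that in $M(-x,-t,-\bar k)$ one uses exactly the same branch of \eqref{piecewisematrixM} as in $M(x,t,k)$.

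First I would record the two forms of \eqref{syma} that will be used. Since $\sigma=1$ here, $\Lambda=\begin{pmatrix}0&1\\1&0\end{pmatrix}$ is real and equals $\Lambda^{-1}$, and conjugation by $\Lambda$ interchanges the two rows and the two columns of a $2\times2$ matrix, $(\Lambda A\Lambda)_{ij}=A_{3-i,\,3-j}$. From \eqref{syma} one obtains
\begin{equation*}
\overline{\psi_1(-x,-t,-\bar k)}=\Lambda\,\psi_2(x,t,k)\,\Lambda,\qquad \psi_1(x,t,k)=\Lambda\,\overline{\psi_2(-x,-t,-\bar k)}\,\Lambda,
\end{equation*}
the second obtained from the first by the substitution $(x,t,k)\mapsto(-x,-t,-\bar k)$ (recall $-\overline{(-\bar k)}=k$) followed by complex conjugation. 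In entries: the $(i,j)$ entry of $\overline{\psi_1(-x,-t,-\bar k)}$ equals the $(3-i,3-j)$ entry of $\psi_2(x,t,k)$, and the $(i,j)$ entry of $\psi_1(x,t,k)$ equals the conjugate of the $(3-i,3-j)$ entry of $\psi_2(-x,-t,-\bar k)$. These scalar identities, together with the definitions, are all the input.

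Next I would treat the two half-planes separately. For $k\in\mathbb{C}_+\setminus\{0\}$ (so $-\bar k\in\mathbb{C}_+$), take the upper branch of \eqref{piecewisematrixM} at $(-x,-t,-\bar k)$, conjugate, use $\overline{a_1(-\bar k)}=a_1(k)$ to replace the prefactor $1/\overline{a_1(-\bar k)}$ by $1/a_1(k)$, apply conjugation by $\Lambda$, and finally multiply on the right by $\operatorname{diag}(1/a_1(k),a_1(k))$. A one-line bookkeeping computation then shows that this diagonal factor cancels the leftover factors $a_1(k)$ exactly so that each of the four resulting entries coincides with the corresponding entry of the upper branch of $M(x,t,k)$, by the scalar identities above. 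The case $k\in\mathbb{C}_-\setminus\{0\}$ is identical in structure: one uses the lower branch of \eqref{piecewisematrixM}, the relation $\overline{a_2(-\bar k)}=a_2(k)$, and the right factor $\operatorname{diag}(a_2(k),1/a_2(k))$. The identity is to be read between meromorphic functions on $\mathbb{C}_\pm\setminus\{0\}$; at $k=i\kappa$, for instance, the simple pole of $M$ in its first column is matched on the right-hand side once conjugation by $\Lambda$ has transferred it to the second column and the factor $a_1(k)$, with its simple zero there, has removed it, so no inconsistency appears.

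I do not anticipate any genuine obstacle: the statement is a purely algebraic consequence of \eqref{syma} and Proposition \ref{propspecfuncs}(ii). The only delicate point, and the only place a computation of this type can go wrong, is the index bookkeeping: one must match the permutation of rows and columns induced by conjugation by $\Lambda$ against the placement of the diagonal factors $\operatorname{diag}(1/a_1,a_1)$ and $\operatorname{diag}(a_2,1/a_2)$, so that the factor $a_j$ arising from $\overline{a_j(-\bar k)}=a_j(k)$ cancels the factor $a_j$ already present in the definition \eqref{piecewisematrixM} of $M$.
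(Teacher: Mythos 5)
Your proposal is correct, but it proves the proposition by a genuinely different route than the paper. The paper's proof works at the level of the Riemann--Hilbert data: using the symmetries \eqref{reflectioncoeCondition1}--\eqref{reflectioncoeCondition2} of the reflection coefficients it establishes the conjugation identity
\begin{equation*}
\begin{pmatrix}a_{2}(k) & 0 \\ 0 & \tfrac{1}{a_2(k)}\end{pmatrix}J(x,t,k)\begin{pmatrix}a_{1}(k) & 0 \\ 0 & \tfrac{1}{a_1(k)}\end{pmatrix}=\Lambda\overline{J(-x,-t,-k)}\Lambda^{-1},\qquad k\in\mathbb{R}\setminus\{0\},
\end{equation*}
checks that the right-hand side of the claimed symmetry has the same behavior at $k=0$ (Proposition \ref{propMbehaviorat0}) and the same residue condition (Proposition \ref{propMrescondition}) as $M$, and then concludes by the (implicit) uniqueness of the solution of the RH problem. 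You instead verify the identity directly from the definition \eqref{piecewisematrixM}, using only Symmetry Reduction I \eqref{syma} for the Jost solutions and the relations $a_1(k)=\overline{a_1(-\bar k)}$, $a_2(k)=\overline{a_2(-\bar k)}$; the key observations that $k\mapsto-\bar k$ preserves each half-plane (so the same branch of \eqref{piecewisematrixM} is used on both sides) and that $\overline{\psi_1^{(1)}(-x,-t,-\bar k)}=\Lambda\psi_2^{(2)}(x,t,k)$, $\overline{\psi_2^{(2)}(-x,-t,-\bar k)}=\Lambda\psi_1^{(1)}(x,t,k)$ (and their $\mathbb{C}_-$ counterparts) make the column bookkeeping come out exactly as you describe, with the diagonal factors cancelling the leftover $a_j$'s. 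Your argument is more elementary and self-contained --- it does not appeal to uniqueness of the RH problem, which in the paper is only discussed after RH problem \ref{basicRHP} is formulated --- while the paper's jump-matrix computation has the advantage of remaining valid when $M$ is known only as a solution of the RH problem rather than through its explicit construction from eigenfunctions. One small caveat: the matrix identity \eqref{syma} should be read column by column in the appropriate half-planes of analyticity (it holds as a full matrix identity only on $\mathbb{R}\setminus\{0\}$ and extends columnwise by analytic continuation), but this is exactly how you use it, so no gap results.
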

\begin{proof}
    Due to the symmetries of reflection coefficients $r_1(k)$ and $r_2(k)$ (see \eqref{reflectioncoeCondition1}-\eqref{reflectioncoeCondition2}), we have the symmetry for $J(x,t,k)$ with
    \begin{equation*}
        \begin{pmatrix}a_{2}(k) & 0 \\ 0 & \frac{1}{a_2(k)}\end{pmatrix}J(x,t,k)\begin{pmatrix}a_{1}(k) & 0 \\ 0 & \frac{1}{a_1(k)}\end{pmatrix}=\Lambda\overline{J(-x,-t,-k)}\Lambda^{-1}, \quad k\in\mathbb{R}\backslash\{0\}.
    \end{equation*}
    To match consistency with the the behavior of $M(x,t,k)$ at $k=0$ (see Proposition \ref{propMbehaviorat0}) and residue condition (see Proposition \ref{propMrescondition}), we reach the finale of this proposition.
\end{proof}

Summarize all the above contents in this subsection, we construct the following basic Riemann-Hilbert problem, which is the basis to construct the soliton solution and long-time asymptotics for the
focusing nonlocal MKdV equation.

\begin{RHP}[{\bf Basic Riemann-Hilbert Problem}]\label{basicRHP}
    Find a $2\times2$ matrix-valued function $M(x,t,k)$ such that
    \begin{itemize}
        \item[(i)] $M(x,t,k)$ is meromorphic for $k\in\mathbb{C}\backslash\mathbb{R}$ and has a simple pole located at $k=i\kappa$, $\kappa>0$.
        \item[(ii)] Jump conditions. The  The non-tangential limits $M_{\pm}(x,t,k)=\underset{k'\rightarrow k, k'\in\mathbb{C}_{\pm}}{\lim}M(x,t,k')$ exist for $k\in\mathbb{R}$ and $M_{\pm}(x,t,k)$ satisfy
        the jump condition $M_{+}(x,t,k)=M_{-}(x,t,k)J(x,t,k)$ for $k\in\mathbb{R}\backslash\{0\}$, where
        \begin{equation}\label{jumpforM}
            J(x,t,k)=\left(
        \begin{array}{cc}
        1+r_1(k)r_2(k) &  r_2(k)e^{-2ikx-8ik^3t}\\
        r_1(k)e^{2ikx+8ik^3t} &  1
        \end{array}
        \right), \quad k\in\mathbb{R}\backslash\{0\}.
        \end{equation}
        with $r_1$ and $r_2$ given in terms of $b$ by \eqref{defofreflectioncoe} with \eqref{a1a2CaseI} (Case I) or \eqref{a1a2CaseII} (Case II).
        \item[(iii)] Normalization condition at $k=\infty$. $M(x,t,k)=I+O(k^{-1})$ uniformly as $k\rightarrow\infty$.
        \item[(iv)] Residue condition at $k=i\kappa$.
        \begin{equation}\label{rescondition of M}
            \underset{k=i\kappa}{\rm Res}M^{(1)}(x,t,k)=\frac{\gamma_0}{a_{1}'(i\kappa)}e^{-2\kappa x+8\kappa^3t}M^{(2)}(x,t,i\kappa), \quad \gamma_0^2=1,
        \end{equation}
        where $\kappa$ could be exhibited in terms of $b$ using \eqref{kappageneric} (Case I) or \eqref{kappanongeneric} (Case II).
        \item[(v)] Singularities at the origin. As $k\rightarrow 0$, $M(x,t,k)$ satisfies

        For Case I
    \begin{subequations}\label{singularity of M at k=0 CaseI}
        \begin{align}
            &M_{+}(x,t,k)=\left(
                    \begin{array}{cc}
                    \frac{4v_1(x,t)}{A^2a_2(0)} & -\overline{v_2(-x,-t)} \\
                    \frac{4v_2(x,t)}{A^2a_2(0)} & -\overline{v_1(-x,-t)}
                    \end{array}
                \right)\left(I+O(k)\right)\left(
                    \begin{array}{cc}
                    k & 0 \\
                    0 & \frac{1}{k}
                    \end{array}
                \right), &\mathbb{C}_{+}\ni k\rightarrow 0,\label{singularity of M at k=+i0 CaseI}\\
            &M_{-}(x,t,k)=\frac{2i}{A}\left(
                \begin{array}{cc}
                -\overline{v_2(-x,-t)} &  \frac{v_1(x,t)}{a_2(0)}\\
                -\overline{v_1(-x,-t)} &  \frac{v_2(x,t)}{a_2(0)}
                \end{array}
                \right)+O(k), &\mathbb{C}_{-}\ni k\rightarrow 0.\label{singularity of M at k=-i0 CaseI}
        \end{align}
    \end{subequations}

        For Case II
        \begin{subequations}\label{singularity of M at k=0 CaseII}
            \begin{align}
                &M_{+}(x,t,k)=\left(
                        \begin{array}{cc}
                        \frac{v_1(x,t)}{a_{11}} & -\overline{v_2(-x,-t)} \\
                        \frac{v_2(x,t)}{a_{11}} & -\overline{v_1(-x,-t)}
                        \end{array}
                    \right)\left(I+O(k)\right)\left(
                        \begin{array}{cc}
                        1 & 0 \\
                        0 & \frac{1}{k}
                        \end{array}
                    \right), &\mathbb{C}_{+}\ni k\rightarrow 0,\label{singularity of M at k=+i0 CaseII}\\
                &M_{-}(x,t,k)=\frac{2i}{A}\left(
                    \begin{array}{cc}
                        -\overline{v_2(-x,-t)} & \frac{v_1(x,t)}{a_{2}'(0)} \\
                        -\overline{v_1(-x,-t)} & \frac{v_2(x,t)}{a_{2}'(0)}
                    \end{array}
                \right)\left(I+O(k)\right)\left(
                    \begin{array}{cc}
                    1 & 0 \\
                    0 & \frac{1}{k}
                    \end{array}
                \right), &\mathbb{C}_{-}\ni k\rightarrow 0.\label{singularity of M at k=-i0 CaseII}
            \end{align}
        \end{subequations}
        where $v_j(x,t)$, $j=1,2$ are some functions.
    \item[(vi)] Symmetry. $M(x,t,k)$ satisfies the following symmetry reduction
    \begin{equation}
        M(x,t,k):=\left\{
            \begin{aligned}
            &\Lambda\overline{M(-x,-t,-\bar{k})}\Lambda^{-1}\begin{pmatrix}\frac{1}{a_1(k)} & 0 \\ 0 & a_1(k) \end{pmatrix}, \quad k\in \mathbb{C}_{+}\backslash\{0\} \\
            &\Lambda\overline{M(-x,-t,-\bar{k})}\Lambda^{-1}\begin{pmatrix}a_2(k) & 0 \\ 0 & \frac{1}{a_2(k)} \end{pmatrix}, \quad k\in \mathbb{C}_{-}\backslash\{0\}.
            \end{aligned}
            \right.
    \end{equation}
    \end{itemize}
\end{RHP}
Assume that the RH problem \ref{basicRHP} exists a solution $M(x,t,k)$, then the Cauchy problem for nonlinear nonlocal MKdV equation \eqref{Cauchy Problem} with initial data $u_0$ can be
expressed in terms of the $(1,2)$-entry and $(2,1)$-entry of $M$ as the following equalities
\begin{equation}\label{uintermsofM}
    u(x,t)=2i\lim_{k\rightarrow\infty}kM_{12}(x,t,k), \quad u(-x,-t)=2i\lim_{k\rightarrow\infty}kM_{21}(x,t,k),
\end{equation}
which are direct results following from \eqref{uintermsofpsi}.

\begin{remark}
    \rm The solution of the RH problem \ref{basicRHP} is unique, if exists. In fact, if we assume there exist two solutions $M$ and $\tilde{M}$, the behaviors of $M$ at $k=0$ show that $M\tilde{M}^{-1}$
    is bounded at the origin. To reach $M\tilde{M}^{-1}\equiv I$, we just need to obey the standard  procedure  based on the Liouville theorem.
\end{remark}

\begin{remark}
    \rm As long as we derive the long-time ($0<t\rightarrow\infty$) asymptotics for the nonlocal MKdV equation with $x\in\mathbb{R}$, we can obtain the negative long-time ($0>t\rightarrow-\infty$) asymptotics
    and vice versa. This a typical characteristic of nonlocal MKdV equation.
\end{remark}

\subsection{Jump factorizations}\label{subsectionjumpfac}
To investigate the long-time asymptotics via nonlinear descent method, factorizations of jump matrix \eqref{jumpforM} play an important role in our analysis.

Introduction the phase function
\begin{equation}
    \theta(k,\xi):=4k^3+12k\xi,
\end{equation}
where $\xi:=\frac{x}{12t}$. For $\xi<0$, we define that
\begin{equation}
    \pm k_{0}=\pm\sqrt{-\frac{x}{12t}},
\end{equation}
which are the saddle points of phase function $\theta(k,\xi)$ and locate on the real axis. For $\xi>0$, we solve the saddle points
of $\theta(k,\xi)$ as follows
\begin{equation}
    \pm k_{0}=\pm i\sqrt{\frac{x}{12t}}.
\end{equation}

\begin{figure}[htbp]
\begin{center}
\tikzset{every picture/.style={line width=0.75pt}} 
\begin{tikzpicture}[x=0.75pt,y=0.75pt,yscale=-1,xscale=1]
\draw    (125,173.5) -- (502,173.5) ;
\draw   (407,58.5) .. controls (358.85,135.86) and (359.83,212.6) .. (409.93,288.71) ;
\draw   (197,290.49) .. controls (246.95,212.31) and (245.86,134.82) .. (193.71,58.03) ;
\draw (381.62,179.48) node [anchor=north west][inner sep=0.75pt]   {$k_{0}$};
\draw (166,113.4) node [anchor=north west][inner sep=0.75pt]    {$+$};
\draw (294,97.4) node [anchor=north west][inner sep=0.75pt]    {$-$};
\draw (420,117.4) node [anchor=north west][inner sep=0.75pt]    {$+$};
\draw (423,230.4) node [anchor=north west][inner sep=0.75pt]    {$-$};
\draw (169,232.4) node [anchor=north west][inner sep=0.75pt]    {$-$};
\draw (297,222.4) node [anchor=north west][inner sep=0.75pt]    {$+$};
\draw (206.62,179.48) node [anchor=north west][inner sep=0.75pt]    {$-k_{0}$};
\end{tikzpicture}
\caption{\small Signature table of the function $\im \theta(k,\xi)$ ($\xi<0$).}\label{FigCasexi<0}
\end{center}
\end{figure}

\begin{figure}[htbp]
\begin{center}
\tikzset{every picture/.style={line width=0.75pt}} 
\begin{tikzpicture}[x=0.75pt,y=0.75pt,yscale=-1,xscale=1]
\draw    (179,160) -- (440.01,159.3) ;
\draw   (235,49) .. controls (283,102.33) and (331,102.33) .. (379.01,49) ;
\draw   (378.98,281.09) .. controls (331.38,227.7) and (283.71,227.65) .. (235.98,280.92) ;
\draw (281,62.4) node [anchor=north west][inner sep=0.75pt]  [font=\small]  {$i\sqrt{3\xi }$};
\draw (278,242.4) node [anchor=north west][inner sep=0.75pt]  [font=\small]  {$-i\sqrt{3\xi }$};
\draw (405,131.4) node [anchor=north west][inner sep=0.75pt]    {$+$};
\draw (407,172.4) node [anchor=north west][inner sep=0.75pt]    {$-$};
\draw (333,271.4) node [anchor=north west][inner sep=0.75pt]    {$+$};
\draw (343,46.4) node [anchor=north west][inner sep=0.75pt]    {$-$};
\draw (317.62,110.48) node [anchor=north west][inner sep=0.75pt]  [font=\small]  {$k_{0} =i\sqrt{\xi }$};
\draw (318.62,199.48) node [anchor=north west][inner sep=0.75pt]  [font=\small]  {$-k_{0} =-i\sqrt{\xi }$};
\draw (306,106.4) node [anchor=north west][inner sep=0.75pt]    {$.$};
\draw (307,203.4) node [anchor=north west][inner sep=0.75pt]    {$.$};
\end{tikzpicture}
\caption{\small Signature table of the function $\im \theta(k,\xi)$ ($\xi>0$).}\label{FigCasexi>0}
\end{center}
\end{figure}

Since the phase function $\theta(\xi,k)$ is the same as in the case of local MKdV equation, its signature tables (see Fig \ref{FigCasexi<0} and Fig \ref{FigCasexi>0}) suggest us to follow
the standard procedure (see \cite{DZAnn}) to make factorizations of the jump matrix \eqref{jumpforM} along the real axis to deform the contours onto those
on which the oscillatory jump on the real axis is traded for exponential decay as $t\rightarrow+\infty$ or $t\rightarrow-\infty$. This step is aided by two well known factorizations
of the jump matrix $J(x,t,k)$:
\begin{subequations}
    \begin{align}
        &J(x,t,k)=\begin{pmatrix} 1 & r_2(k)e^{-2it\theta} \\ 0 & 1 \end{pmatrix}\begin{pmatrix} 1 & 0 \\r_1(k)e^{2it\theta} & 1\end{pmatrix},
    \end{align}
    as well as
    \begin{align}
        &J(x,t,k)=\begin{pmatrix}1 & 0 \\ \frac{r_1(k)}{1+r_1(k)r_2(k)}e^{2it\theta} & 1\end{pmatrix}
        \begin{pmatrix}1+r_1(k)r_2(k) & 0 \\ 0 & \frac{1}{1+r_1(k)r_2(k)}\end{pmatrix}
        \begin{pmatrix}1 &  \frac{r_2(k)}{1+r_1(k)r_2(k)}e^{-2it\theta} \\ 0 & 1\end{pmatrix}.
    \end{align}
\end{subequations}
Additionally, we should present more details for the behavior at the  singularity point $k=0$ when we do the RH deformations
in the long-time asymptotic analysis.

\section{One-soliton solution}\label{section1-soliton}
Before studying the long-time asymptotics for the solution of NMKdV equation, we firstly construct the one-soliton solution corresponding to the discrete spectrum $k=i\kappa$ under the reflectionless environment via
the RH problem \ref{basicRHP}. Some assumptions are given as
follows
\begin{Assumption}\label{assforsoliton}
    Suppose the spectral functions $a_1(k)$, $a_2(k)$ and $b(k)$ satisfy
    \begin{itemize}
        \item $a_1(k)$ has a single, simple zero $k=i\kappa$ with $\kappa>0$ in $\overline{\mathbb{C}_{+}}$.
        \item $a_2(k)$ has a single, simple zero located at $k=0$ in $\overline{\mathbb{C}_{-}}$.
        \item $b(k)=0$ for $k\in\mathbb{R}$.
    \end{itemize}
\end{Assumption}
Then, the following proposition gives the one-soliton solution associated with the Cauchy problem \eqref{Cauchy Problem}.
\begin{proposition}\label{one-soliton}
    Let $a_1(k)$, $a_2(k)$ and $b(k)$ be the spectral functions associated with $u_0(x)$ and admit the assumptions in Assumption \ref{assforsoliton}.  Then we have
    \begin{itemize}
        \item[(i)] $\kappa$ is uniquely determined as $\kappa=\frac{A}{2}$.
        \item[(ii)]The RH problem \ref{basicRHP} has a unique solution for all $(x,t)$ with $x\in\mathbb{R}$, $t\in\mathbb{R}$ except $t=x/A^2$ and $\gamma_0=1$.
        \item[(iii)] The associated exact one-soliton solution $u(x,t)$ of Cauchy problem \eqref{Cauchy Problem} is exhibited as
        \begin{equation}\label{onesolitonexpression}
            u(x,t)=\frac{A}{1-\gamma_0 e^{-Ax+A^3t}}.
        \end{equation}
    \end{itemize}
\end{proposition}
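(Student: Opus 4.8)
The plan is to solve the Riemann--Hilbert problem~\ref{basicRHP} explicitly under the reflectionless Assumption~\ref{assforsoliton}, that is, with $b\equiv 0$, so that the jump matrix $J(x,t,k)$ in \eqref{jumpforM} reduces to the identity and $M(x,t,k)$ becomes a rational function of $k$ whose only possible singularities are the prescribed pole at $k=i\kappa$ in $\mathbb{C}_{+}$, the pole at $k=0$ forced by item~(v) of the RH problem, and the normalization $M=I+O(k^{-1})$ at infinity. First I would establish part~(i): with $b\equiv 0$, Proposition~\ref{propspecfuncs}(iv) gives $a_1(k)a_2(k)=1$, and since $a_2$ has a simple zero at $k=0$ while $a_1(k)=a_{11}k^{-1}+O(1)$, the non-generic relation $a_{11}a_2'(0)=1-b^2(0)=1$ from Proposition~\ref{kappaexpression}(ii) holds trivially; feeding $b\equiv 0$ (hence $I_1=I_2=1$, $b(0)=0$) into \eqref{kappanongeneric} yields $\kappa=A\cdot\frac{\sqrt{1}-0}{2\cdot 1\cdot 1}=\frac{A}{2}$, which is the claimed value.

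Next I would write down the explicit ansatz. Because there is no jump, $M$ is meromorphic on $\mathbb{C}$; combining the normalization at $\infty$, the simple pole at $k=i\kappa$ from \eqref{rescondition of M}, and the $k=0$ behavior \eqref{singularity of M at k=0 CaseII}, the natural ansatz is
\begin{equation*}
    M(x,t,k)=I+\frac{B(x,t)}{k-i\kappa}+\frac{C(x,t)}{k},
\end{equation*}
for $2\times 2$ matrices $B$, $C$ to be determined. The residue condition at $i\kappa$ says $B$ has rank one with its columns proportional to $M^{(2)}(x,t,i\kappa)$ in the precise way dictated by \eqref{rescondition of M} (with the exponential $e^{-2\kappa x+8\kappa^3 t}=e^{-Ax+A^3t}$ after substituting $\kappa=A/2$); the $k=0$ singularity condition \eqref{singularity of M at k=+i0 CaseII}--\eqref{singularity of M at k=-i0 CaseII} says the first column of $M$ is regular at $k=0$ while the second column has a simple pole, so $C$ is also rank one with a prescribed column structure, and moreover the residues from the two half-planes must be compatible with the single meromorphic formula above. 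Plugging the ansatz into these algebraic constraints gives a closed finite-dimensional linear system for the entries of $B$ and $C$; solving it and reading off $u(x,t)=2i\lim_{k\to\infty}kM_{12}=2i(B_{12}+C_{12})$ via \eqref{uintermsofM} should produce \eqref{onesolitonexpression}. Along the way the symmetry item~(vi) of RH problem~\ref{basicRHP}, together with $\gamma_0^2=1$ from Proposition~\ref{propMrescondition}, must be used to pin down $\gamma_0=1$ (as opposed to $\gamma_0=-1$) and to cut down the number of free parameters; this is the same mechanism as in Proposition~\ref{propMrescondition} but now also constrains the $k=0$ data. The non-existence locus $t=x/A^2$ will appear precisely as the vanishing of the determinant of the linear system (equivalently the denominator $1-\gamma_0 e^{-Ax+A^3t}$ hitting zero, i.e. $-Ax+A^3t=0$), and uniqueness follows from the Liouville-type argument already invoked in the remark after RH problem~\ref{basicRHP}, since $M\widetilde M^{-1}$ is entire and bounded.

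The main obstacle I anticipate is bookkeeping the $k=0$ singularity correctly: unlike the standard reflectionless soliton computation, here $M$ genuinely blows up like $1/k$ in its second column, and the two one-sided expansions in \eqref{singularity of M at k=0 CaseII} must be shown to be the boundary values of one and the same meromorphic matrix, which forces nontrivial relations among $v_1,v_2,a_{11},a_2'(0)$ and $\kappa$. Verifying that these relations are consistent (and in particular that they force $\kappa=A/2$ a second time, cross-checking part~(i)) is the delicate point; once the algebraic system is seen to be consistent and of full rank off the line $t=x/A^2$, the rest is a short explicit computation. I would also double-check the $t$-dependence: the exponent $-2\kappa x+8\kappa^3 t$ with $\kappa=A/2$ gives $-Ax+A^3t$, matching \eqref{onesolitonexpression}, and the compatibility with the Lax time-evolution is automatic since the residue condition in RH problem~\ref{basicRHP} already encodes it.
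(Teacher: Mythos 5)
Your proposal is correct and follows essentially the same route as the paper: with $b\equiv 0$ the jump is trivial, $\kappa=A/2$ drops out of \eqref{kappanongeneric}, the two one‑sided expansions at $k=0$ are matched to convert the origin singularity into a residue condition $\underset{k=0}{\rm Res}\,M^{(2)}=\tfrac{A}{2i}M^{(1)}(x,t,0)$, and the rational ansatz $M=I+B/(k-i\kappa)+C/k$ is solved algebraically to give \eqref{onesolitonexpression}, with the exceptional set coming from the vanishing denominator. The only small caveat is your reading of item (ii): the paper does not force $\gamma_0=1$ — the formula retains $\gamma_0$ with $\gamma_0^2=1$, and the clause ``except $t=x/A^2$ and $\gamma_0=1$'' describes the locus where the denominator vanishes (which only occurs when $\gamma_0=1$), not a constraint derived from the symmetry.
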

\begin{proof}
    Since $b(k)=0$ for $k\in\mathbb{R}$, thus $b(0)=0$. Due to $b(0)=0$, we are in Case II (non-generic case) to start our discussion for one-soliton solution. Use \eqref{kappanongeneric}, we arrive
    at item (i) immediately. Furthermore, from \eqref{a1a2CaseII} (see Case II in Appendix \ref{kappaCalculation}), we obtain that
    \begin{equation}
        a_1(k)=\frac{k-i\frac{A}{2}}{k}, \quad a_2(k)=\frac{k}{k-\frac{iA}{2}},
    \end{equation}
    and
    \begin{equation}
        a_{11}=\lim_{k\rightarrow\infty}ka_1(k)=\frac{A}{2i}, \quad a_2'(0)=\frac{2i}{A}.
    \end{equation}
\end{proof}
Observing that $b(k)=0$ for $k\in\mathbb{R}$ (i.e., $J(x,t,k)=1$ for $k\in\mathbb{R}$), we find that $M(x,t,k)$ is a meromorphic function in $k$ complex plane with the only pole $k=i\frac{A}{2}$.
Due to this observation, then we compare that \eqref{singularity of M at k=+i0 CaseII} and \eqref{singularity of M at k=-i0 CaseII}, we conclude that $v_1(x,t)=-\overline{v_2(-x,-t)}$ and convert
the singularity condition \eqref{singularity of M at k=0 CaseII} into a formal residue condition
\begin{equation}\label{resk=0forsoliton}
    \underset{k=0}{\rm Res}M^{(2)}(x,t,k)=\frac{A}{2i}M^{(1)}(x,t,0).
\end{equation}
Moreover, take into account the normalization condition of $M(x,t,k)$ at $k=\infty$ (see item (iii) of RH problem \ref{basicRHP}), we express
the $M(x,t,k)$ as follows:
\begin{equation}
    M(x,t,k)=\left(
        \begin{array}{cc}
            \frac{k+v_1(x,t)}{k-\frac{iA}{2}} & \frac{v_1(x,t)}{k} \\
            \frac{-\overline{v_1(-x,-t)}}{k-\frac{iA}{2}} & \frac{k-\overline{v_1(-x,-t)}}{k}.
        \end{array}
    \right)
\end{equation}
Use the residue condition \eqref{rescondition of M} at $k=i\frac{A}{2}$, we obtain that
\begin{equation}
    v_1(x,t)=\frac{A}{2i}\frac{1}{1-\gamma_0 e^{-Ax+A^3t}}
\end{equation}
At the end, use $u(x,t)=2i\underset{k\rightarrow\infty}{\ lim}kM_{12}$, the one-soliton solution \eqref{onesolitonexpression} follows.
\begin{remark}
\rm Use another potential recovering formulae $u(-x,-t)=2i\underset{k\rightarrow\infty}{\rm lim}kM_{21}$, we can solve that
\begin{equation*}
    u(-x,-t)=-2i\overline{v_1(-x,-t)}=\frac{A}{1-\gamma_0 e^{Ax-A^3t}},
\end{equation*}
which is also the direct result to replace $x$, $t$ by $-x$, $-t$ in \eqref{onesolitonexpression}.
\end{remark}

\begin{remark}
    \rm From the one-soliton solution formulae \eqref{onesolitonexpression}, we can find that $u(x,t)\rightarrow A$ as $x\rightarrow+\infty$, $u(x,t)\rightarrow 0$ as $x\rightarrow-\infty$,
    which are consistency to the boundary conditions \eqref{bdrycondition}.
\end{remark}

\section{Asymptotic behavior for $\xi:=x/(12t)<0$, $\vert \xi \vert=O(1)$}\label{sectionAANO1}
In this section, we investigate the long-time asymptotics under the condition $\xi<0$, $\vert \xi\vert=O(1)$, and the signature
table of this case corresponds to Fig \ref{FigCasexi<0}. Before our analysis, we point out that the analysis for $t\rightarrow-\infty$
instead of $t\rightarrow+\infty$ is more convenient to deal with the singularities of RH problem at $k=0$. See Remark \ref{whywedot-infty} below.

\subsection{First RH problem transformation: $\delta$ function}\label{subsectionFirstdeform}
At the beginning, we introduce the $\delta$ function as the solution of the scalar RH problem as follows
\begin{itemize}
    \item $\delta(k,\xi)$ is holomorphic for $k\in\mathbb{C}\backslash((-\infty,-k_0)\cup(k_0,+\infty))$,
    \item $\delta_{+}(k,\xi)=\delta_{-}(k,\xi)(1+r_1(k)r_2(k))$, $k\in\mathbb{R}\backslash[-k_0,k_0]$,
    \item normalization condition: $\delta(k,\xi)\rightarrow 1$ as $k\rightarrow\infty$.
\end{itemize}
And the solution can be given by the Cauchy-type integral
\begin{equation}\label{deltafunc}
    \delta(k,\xi)=\exp\left\{\frac{1}{2\pi i}\int_{(-\infty,-k_0)\cup(k_0,+\infty)}\frac{{\log \left(1+r_1(s)r_2(s)\right)}}{s-k}ds\right\}.
\end{equation}
By \eqref{reflectioncoeCondition1}, $\delta(k,\xi)$ admits the symmetry
\begin{equation}\label{deltasymmetry}
    \delta(k,\xi)=\overline{\delta(-\bar{k},\xi)},  \quad k\in\mathbb{C}\backslash\left\{0\right\}.
\end{equation}

\begin{remark}
    \rm
    In the general case (not the case of pure step initial data), $1+r_1(k)r_2(k)$ is complex-valued, which could make
    $\delta$ function be singular at $k=\pm k_0$.
\end{remark}

\begin{remark}\label{whywedot-infty}
    \rm If we consider the asymptotics as $t\rightarrow+\infty$, we have to use the following $\delta$ function
    \begin{equation*}
        \delta(k,\xi)=\exp\left\{\frac{1}{2\pi i}\int_{-k_0}^{k_0}\frac{{\log \left(1+r_1(s)r_2(s)\right)}}{s-k}ds\right\}.
    \end{equation*}
    In this function, we notice that $0\in(-k_0,k_0)$ thus we have to pay more attention to the behavior of
    $\delta$ at $k=0$. Luckily, owe to the symmetries of integrable nonlocal MKdV equation, we can study the large-negative-$t$ asymptotics firstly.
\end{remark}

Moreover, integrations by parts in formulae \eqref{deltafunc} yields
\begin{equation}\label{deltachi}
    \delta(k,\xi)=\frac{\left(k+k_0\right)^{i\nu(-k_0)}}{\left(k-k_0\right)^{i\overline{\nu(-k_0)}}}e^{\hat{\chi}(\xi,k)},
\end{equation}
where $\hat{\chi}(\xi,k)$ is a uniformly bounded function with respect to $|\xi|=O(1)$ and $k\in\mathbb{C}\backslash\mathbb{R}$ defined by
\begin{equation}
    \hat{\chi}(\xi,k)=-\frac{1}{2\pi i}\int_{(-\infty,-k_0)\cup(k_0,+\infty)}\log\left(k-s\right)d_{s}\log\left(1+r_1\left(s\right)r_2\left(s\right)\right).
\end{equation}
And $\nu(-k_0)$ can be expressed in terms of
\begin{equation}
    \nu(-k_0):=-\frac{1}{2\pi}\log(1+r_1(-k_0)r_2(-k_0))=-\frac{1}{2\pi}\log\vert1+r_1(-k_0)r_2(-k_0)\vert-\frac{i}{2\pi}\Delta(-k_0).
\end{equation}
with
\begin{align*}
    \Delta(-k_0):=\int_{-\infty}^{-k_0}d\arg\left(1+r_1(s)r_2(s)\right).
\end{align*}
Furthermore, we assume that
\begin{equation}
    -\pi<\Delta(k)<\pi, \quad \xi<0, \ |\xi|=O(1)
\end{equation}
and thus $-\frac{1}{2}<\im \nu(k)<\frac{1}{2}$. With this assumption, $\log(1+r_1(k)r_2(k))$ is single-valued,
consequently the singularities $k=\pm k_0$ of $\delta(k,\xi)$ is square integrable.

On the other hand, we can rewrite $\delta(k,\xi)$ as
\begin{equation}\label{deltahatchi}
    \delta(k,\xi)=\left(\frac{k+k_0}{k-k_0}\right)^{^{i\nu(-k_0)}}e^{\chi(\xi,k)},
\end{equation}
where
\begin{equation}
    \chi(\xi,k)=\hat{\chi}(\xi,k)+\frac{1}{2\pi i}\log\left(\frac{1+r_1(-k_0)r_2(-k_0)}{1+\overline{r_1(-k_0)r_2(-k_0)}}\right)\log(k-k_0).
\end{equation}
\begin{remark}\rm
    In the local MKdV equation, (see \cite{DZAnn,lenellsmkdv}), we can obtain that $\chi(k,\xi)$ is equivalent to $\hat{\chi}(k,\xi)$ by $r_1(k)=r(k)$, $r_2(k)=\overline{r(\bar{k})}$.
\end{remark}

With the help of $\delta(k,\xi)$, we define a new RH problem as follows
\begin{equation}
    \tilde{M}(x,t,k)=M(x,t,k)\delta^{-\sigma_3}(\xi,k).
\end{equation}
Then we have properties for $\tilde{M}(x,t,k)$ as follows:
\begin{RHP}\label{RHPtildeM}
    Find a $2\times2$ matrix-valued function $\tilde{M}(x,t,k)$ such that
    \begin{itemize}
        \item[(i)] $\tilde{M}(x,t,k)$ is meromorphic for $k\in\mathbb{C}\backslash\mathbb{R}$ and has a simple pole located at $k=i\kappa$, $\kappa>0$.
        \item[(ii)] Jump conditions. The  The non-tangential limits $\tilde{M}_{\pm}(x,t,k)=\underset{k'\rightarrow k, k'\in\mathbb{C}_{\pm}}{\lim}\tilde{M}(x,t,k')$ exist for $k\in\mathbb{R}$ and $\tilde{M}_{\pm}(x,t,k)$ satisfy
        the jump condition $\tilde{M}_{+}(x,t,k)=\tilde{M}_{-}(x,t,k)\tilde{J}(x,t,k)$ for $k\in\mathbb{R}\backslash\{0\}$, where
        \begin{equation}\label{jumpfortildeM}
            \tilde{J}(x,t,k)=\left\{
                \begin{aligned}
                &\begin{pmatrix}
                    1 & r_2(k)\delta^2(\xi,k)e^{-2it\theta}\\
                    0 & 1
                \end{pmatrix}
                \begin{pmatrix}
                    1 & 0\\
                    r_1(k)\delta^{-2}(\xi,k)e^{2it\theta} & 1
                \end{pmatrix}  \quad k\in(-k_0,k_0)\backslash\{0\} \\
                &\begin{pmatrix}
                    1 & 0\\
                    \frac{r_1(k)\delta_{-}^{-2}(k,\xi)}{1+r_1(k)r_2(k)}e^{2it\theta} & 1
                \end{pmatrix}
                \begin{pmatrix}
                    1 & \frac{r_2(k)\delta_{+}^{2}(k,\xi)}{1+r_1(k)r_2(k)}e^{-2it\theta}\\
                    0 & 1
                \end{pmatrix}, \quad k\in(-\infty,-k_0)\cup(k_0,+\infty)
                \end{aligned}
                \right.
        \end{equation}
        \item[(iii)] Normalization condition at $k=\infty$. $\tilde{M}(x,t,k)=I+O(k^{-1})$ uniformly as $k\rightarrow\infty$.
        \item[(iv)] Residue condition at $k=i\kappa$.
        \begin{equation}\label{rescondition of tildeM}
            \underset{k=i\kappa}{\rm Res}\tilde{M}^{(1)}(x,t,k)=\frac{\gamma_0}{a_{1}'(i\kappa)\delta^2(i\kappa,\xi)}e^{-2\kappa x+8\kappa^3t}\tilde{M}^{(2)}(x,t,i\kappa), \quad \gamma_0^2=1,
        \end{equation}
        \item[(v)] Singularities at the origin. As $k\rightarrow 0$, $M(x,t,k)$ satisfies

        For Case I
    \begin{subequations}\label{singularity of tildeM at k=0 CaseI}
        \begin{align}
            &\tilde{M}_{+}(x,t,k)=\left(
                    \begin{array}{cc}
                    \frac{4v_1(x,t)}{A^2a_2(0)\delta(0,\xi)} & -\delta(0,\xi)\bar{v}_2(-x,-t) \\
                    \frac{4v_2(x,t)}{A^2a_2(0)\delta(0,\xi)} & -\delta(0,\xi)\bar{v}_1(-x,-t)
                    \end{array}
                \right)\left(I+O(k)\right)\left(
                    \begin{array}{cc}
                    k & 0 \\
                    0 & \frac{1}{k}
                    \end{array}
                \right), &\mathbb{C}_{+}\ni k\rightarrow 0,\label{singularity of tildeM at k=+i0 CaseI}\\
            &\tilde{M}_{-}(x,t,k)=\frac{2i}{A}\left(
                \begin{array}{cc}
                    \frac{\bar{v}_2(-x,-t)}{\delta(0,\xi)} &  \frac{v_1(x,t)}{a_2(0)}\delta(0,\xi)\\
                    \frac{\bar{v}_1(-x,-t)}{\delta(0,\xi)} &  \frac{v_2(x,t)}{a_2(0)}\delta(0,\xi)
                \end{array}
                \right)+O(k), &\mathbb{C}_{-}\ni k\rightarrow 0.\label{singularity of tildeM at k=-i0 CaseI}
        \end{align}
    \end{subequations}

        For Case II
        \begin{subequations}\label{singularity of tildeM at k=0 CaseII}
            \begin{align}
                &\tilde{M}_{+}(x,t,k)=\left(
                        \begin{array}{cc}
                        \frac{v_1(x,t)}{a_{11}\delta(0,\xi)} & -\bar{v}_2(-x,-t)\delta(0,\xi) \\
                        \frac{v_2(x,t)}{a_{11}\delta(0,\xi)} & -\bar{v}_1(-x,-t)\delta(0,\xi)
                        \end{array}
                    \right)\left(I+O(k)\right)\left(
                        \begin{array}{cc}
                        1 & 0 \\
                        0 & \frac{1}{k}
                        \end{array}
                    \right), &\mathbb{C}_{+}\ni k\rightarrow 0,\label{singularity of tildeM at k=+i0 CaseII}\\
                &\tilde{M}_{-}(x,t,k)=\frac{2i}{A}\left(
                    \begin{array}{cc}
                        -\frac{\bar{v}_2(-x,-t)}{\delta(0,\xi)} & \frac{v_1(x,t)}{a_{2}'(0)}\delta(0,\xi) \\
                        -\frac{\bar{v}_1(-x,-t)}{\delta(0,\xi)} & \frac{v_2(x,t)}{a_{2}'(0)}\delta(0,\xi)
                    \end{array}
                \right)\left(I+O(k)\right)\left(
                    \begin{array}{cc}
                    1 & 0 \\
                    0 & \frac{1}{k}
                    \end{array}
                \right), &\mathbb{C}_{-}\ni k\rightarrow 0.\label{singularity of tildeM at k=-i0 CaseII}
            \end{align}
        \end{subequations}
        where $v_j(x,t)$, $j=1,2$ are some unprescribed functions.
    \end{itemize}
\end{RHP}

\subsection{Second RH problem transformation: opening lens}\label{subsectionSeconddeform}
In this subsection, we execute the standard procedure so called ``opening lens" to make some jump matrices decay to identity $I$ as the parameter $t\rightarrow-\infty$.
In general, the RH deformations for $\tilde{M}(x,t,k)$ depend on the properties of reflection coefficients $r_{j}(k)$. In classical Deift-Zhou method, $r_{j}(k)$ and
$\frac{r_j(k)}{1+r_1(k)r_2(k)}$ have to be analytically approximated by some rational functions with well-behaved errors (see \cite{DZAnn}). Besides, $r_{j}(k)$ and
$\frac{r_j(k)}{1+r_1(k)r_2(k)}$ could be continuously extended via the $\bar{\partial}$ technique, which is developed by McLaughlin and his collaborators (see \cite{M&M2006,M&M2008,Dieng}).

For the sake of clarity, we assume that the initial data $u_0(x)$ are a compact perturbation of pure step initial function $u_{0A}(x)$ defined by \eqref{purestepinitial}, which
ensures that eigenfunctions $\psi_j^{l}(x,0,k)$, $j,l=1,2$ and thus $r_j(k)$ are meromorphic in $\mathbb{C}$ (see \cite{RDJDE2021}). Then we open lens (see Fig \ref{Figxi<0openlens})
via defining $\breve{M}(x,t,k)$ by
\begin{equation}\label{breveM}
    \breve{M}(x,t,k)=\left\{
        \begin{aligned}
        &\tilde{M}(x,t,k)
        \begin{pmatrix}
            1 & -\frac{r_2(k)\delta^{2}(\xi,k)}{1+r_1(k)r_2(k)}e^{-2it\theta}\\
            0 & 1
        \end{pmatrix}, & k\in U_1\cup U_3, \\
        &\tilde{M}(x,t,k)
        \begin{pmatrix}
            1 & 0\\
            \frac{r_1(k)\delta^{-2}(\xi,k)}{1+r_1(k)r_2(k)}e^{2it\theta} & 1
        \end{pmatrix}, & k\in U^*_1\cup U_3^*, \\
        &\tilde{M}(x,t,k)
        \begin{pmatrix}
            1 & 0\\
            -r_1(k)\delta^{-2}(\xi,k)e^{2it\theta} & 1
        \end{pmatrix}, & k\in U_2, \\
        &\tilde{M}(x,t,k)
        \begin{pmatrix}
            1 & r_2(k)\delta^{2}(\xi,k)e^{-2it\theta}\\
            0 & 1
        \end{pmatrix}, & k\in U^{*}_2, \\
        &\tilde{M}(x,t,k), & k\in U_0\cup U_0^*.
        \end{aligned}
        \right.
\end{equation}
Here tha angles between the real axis and rays $\Gamma_j=\Gamma_j(\xi)$ are given such that the discrete spectrum $i\kappa$ is located on the sector $U_0$.
The RH problem $\breve{M}(x,t,k)$ have properties as follows

\begin{figure}[htbp]
\begin{center}
    \tikzset{every picture/.style={line width=0.75pt}} 
    \begin{tikzpicture}[x=0.75pt,y=0.75pt,yscale=-1,xscale=1]
    \draw  [dash pattern={on 0.84pt off 2.51pt}]  (51.99,161.5) -- (168.99,161.21) ;
    \draw  [dash pattern={on 0.84pt off 2.51pt}]  (168.99,161.21) -- (450.99,162.93) ;
    \draw  [dash pattern={on 0.84pt off 2.51pt}]  (450.99,162.93) -- (567.99,162.63) ;
    \draw    (450.99,162.93) -- (599.29,94.32) ;
    \draw [shift={(530.59,126.1)}, rotate = 155.17] [color={rgb, 255:red, 0; green, 0; blue, 0 }  ][line width=0.75]    (10.93,-3.29) .. controls (6.95,-1.4) and (3.31,-0.3) .. (0,0) .. controls (3.31,0.3) and (6.95,1.4) .. (10.93,3.29)   ;
    \draw    (450.99,162.93) -- (309.29,90.32) ;
    \draw [shift={(386.37,129.82)}, rotate = 207.13] [color={rgb, 255:red, 0; green, 0; blue, 0 }  ][line width=0.75]    (10.93,-3.29) .. controls (6.95,-1.4) and (3.31,-0.3) .. (0,0) .. controls (3.31,0.3) and (6.95,1.4) .. (10.93,3.29)   ;
    \draw    (168.99,161.21) -- (309.29,90.32) ;
    \draw [shift={(244.5,123.06)}, rotate = 153.19] [color={rgb, 255:red, 0; green, 0; blue, 0 }  ][line width=0.75]    (10.93,-3.29) .. controls (6.95,-1.4) and (3.31,-0.3) .. (0,0) .. controls (3.31,0.3) and (6.95,1.4) .. (10.93,3.29)   ;
    \draw    (450.99,162.93) -- (311.01,229.02) ;
    \draw [shift={(375.57,198.54)}, rotate = 334.73] [color={rgb, 255:red, 0; green, 0; blue, 0 }  ][line width=0.75]    (10.93,-3.29) .. controls (6.95,-1.4) and (3.31,-0.3) .. (0,0) .. controls (3.31,0.3) and (6.95,1.4) .. (10.93,3.29)   ;
    \draw    (597.29,239.32) -- (450.99,162.93) ;
    \draw [shift={(518.82,198.35)}, rotate = 27.57] [color={rgb, 255:red, 0; green, 0; blue, 0 }  ][line width=0.75]    (10.93,-3.29) .. controls (6.95,-1.4) and (3.31,-0.3) .. (0,0) .. controls (3.31,0.3) and (6.95,1.4) .. (10.93,3.29)   ;
    \draw    (311.01,229.02) -- (168.99,161.21) ;
    \draw [shift={(234.58,192.53)}, rotate = 25.52] [color={rgb, 255:red, 0; green, 0; blue, 0 }  ][line width=0.75]    (10.93,-3.29) .. controls (6.95,-1.4) and (3.31,-0.3) .. (0,0) .. controls (3.31,0.3) and (6.95,1.4) .. (10.93,3.29)   ;
    \draw    (28.29,85.32) -- (168.99,161.21) ;
    \draw [shift={(103.92,126.11)}, rotate = 208.34] [color={rgb, 255:red, 0; green, 0; blue, 0 }  ][line width=0.75]    (10.93,-3.29) .. controls (6.95,-1.4) and (3.31,-0.3) .. (0,0) .. controls (3.31,0.3) and (6.95,1.4) .. (10.93,3.29)   ;
    \draw    (168.99,161.21) -- (44.29,223.32) ;
    \draw [shift={(101.27,194.94)}, rotate = 333.52] [color={rgb, 255:red, 0; green, 0; blue, 0 }  ][line width=0.75]    (10.93,-3.29) .. controls (6.95,-1.4) and (3.31,-0.3) .. (0,0) .. controls (3.31,0.3) and (6.95,1.4) .. (10.93,3.29)   ;
    \draw  [dash pattern={on 0.84pt off 2.51pt}]  (309.29,50.32) -- (309.99,162.07) ;
    \draw (159.99,169.92) node [anchor=north west][inner sep=0.75pt]  [rotate=-0.03]  {$-k_{0}$};
    \draw (445.99,173.07) node [anchor=north west][inner sep=0.75pt]  [rotate=-0.03]  {$k_{0}$};
    \draw (517,97.4) node [anchor=north west][inner sep=0.75pt]   {$\Gamma _{1}$};
    \draw (518,211.4) node [anchor=north west][inner sep=0.75pt]    {$\Gamma _{1}^{*}$};
    \draw (536,135.4) node [anchor=north west][inner sep=0.75pt]    {$U_{1}$};
    \draw (537,177.4) node [anchor=north west][inner sep=0.75pt]    {$U_{1}^{*}$};
    \draw (345,126.4) node [anchor=north west][inner sep=0.75pt]    {$U_{2}$};
    \draw (343,177.4) node [anchor=north west][inner sep=0.75pt]    {$U_{2}^{*}$};
    \draw (67,134.4) node [anchor=north west][inner sep=0.75pt]    {$U_{3}$};
    \draw (68,173.4) node [anchor=north west][inner sep=0.75pt]    {$U_{3}^{*}$};
    \draw (399,115.4) node [anchor=north west][inner sep=0.75pt]    {$\Gamma _{2}$};
    \draw (206,116.4) node [anchor=north west][inner sep=0.75pt]    {$\Gamma _{3}$};
    \draw (102,100.4) node [anchor=north west][inner sep=0.75pt]   {$\Gamma _{4}$};
    \draw (389,194.4) node [anchor=north west][inner sep=0.75pt]    {$\Gamma _{2}^{*}$};
    \draw (209,196.4) node [anchor=north west][inner sep=0.75pt]  {$\Gamma _{3}^{*}$};
    \draw (105.64,196.66) node [anchor=north west][inner sep=0.75pt]   {$\Gamma _{4}^{*}$};
    \draw (257,44.4) node [anchor=north west][inner sep=0.75pt]   {$U_{0}$};
    \draw (258,252.4) node [anchor=north west][inner sep=0.75pt]   {$U_{0}^{*}$};
    \draw (315,41.4) node [anchor=north west][inner sep=0.75pt]   {$i\kappa $};
    \draw (303,168.4) node [anchor=north west][inner sep=0.75pt]    {$0$};
    \end{tikzpicture}
    \caption{\small The contours $\Gamma:=\Gamma_j\cup\Gamma^{*}_j$, $j=1,2,3,4$ and the regions $U_j$, $U_j^{*}$, $j=0,1,2,3$}\label{Figxi<0openlens}.
\end{center}
\end{figure}
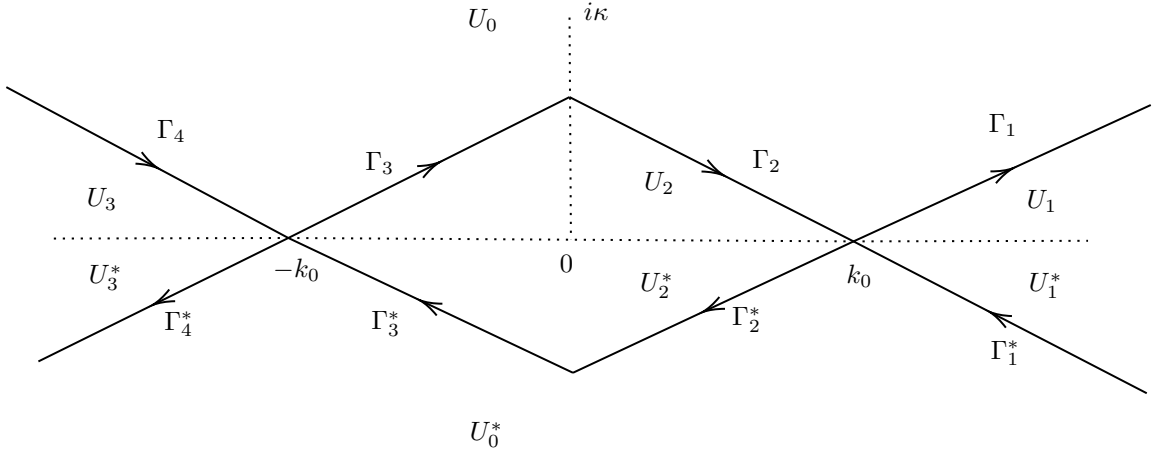

\begin{RHP}\label{RHPbreveM}
    Find a $2\times2$ matrix-valued function $\breve{M}(x,t,k)$ such that
    \begin{itemize}
        \item[(i)] $\breve{M}(x,t,k)$ is meromorphic for $k\in\mathbb{C}\backslash\Gamma$ and has a simple pole located at $k=i\kappa$, $\kappa>0$.
        \item[(ii)] Jump conditions. The  The non-tangential limits $\breve{M}_{\pm}(x,t,k)=\underset{k'\rightarrow k, k'\in\mathbb{C}_{\pm}}{\lim}\breve{M}(x,t,k')$ exist for
        $k\in\Gamma:=(\cup_{j=1}^{4}\Gamma_j)\cup(\cup_{j=1}^{4}\Gamma^*_j)$ and $\breve{M}_{\pm}(x,t,k)$ satisfy
        the jump condition $\breve{M}_{+}(x,t,k)=\breve{M}_{-}(x,t,k)\breve{J}(x,t,k)$ for $k\in\Gamma$, where
        \begin{equation}\label{jumpforbreveM}
            \breve{J}(x,t,k)=\left\{
                \begin{aligned}
                &\begin{pmatrix}
                    1 & \frac{r_2(k)\delta^{2}(k,\xi)}{1+r_1(k)r_2(k)}e^{-2it\theta}\\
                    0 & 1
                \end{pmatrix}  & k\in\Gamma_1\cup\Gamma_4,\\
                &\begin{pmatrix}
                    1 & 0\\
                    -\frac{r_1(k)\delta^{-2}(k,\xi)}{1+r_1(k)r_2(k)}e^{2it\theta} & 1
                \end{pmatrix}, & k\in\Gamma_1^*\cup\Gamma_4^*, \\
                &\begin{pmatrix}
                    1 & 0\\
                    r_1(k)\delta^{-2}(\xi,k)e^{2it\theta} & 1
                \end{pmatrix}, & k\in\Gamma_2\cup\Gamma_3, \\
                &\begin{pmatrix}
                    1 & -r_2(k)\delta^2(\xi,k)e^{-2it\theta}\\
                    0 & 1
                \end{pmatrix} , & k\in\Gamma^*_2\cup\Gamma^*_3,\\
                &I, & elsewhere.
                \end{aligned}
                \right.
        \end{equation}
        \item[(iii)] Normalization condition at $k=\infty$. $\breve{M}(x,t,k)=I+O(k^{-1})$ as $k\rightarrow\infty$.
        \item[(iv)] Residue condition at $k=i\kappa$.
        \begin{equation}\label{rescondition of breveM}
            \underset{k=i\kappa}{\rm Res}\breve{M}^{(1)}(x,t,k)=c_1(x,t)\breve{M}^{(2)}(x,t,i\kappa),
        \end{equation}
        with $c_1(x,t)=\frac{\gamma_0}{a_{1}'(i\kappa)\delta^2(i\kappa,\xi)}e^{-2\kappa x+8\kappa^3t}=\frac{\gamma_0}{a_{1}'(i\kappa)\delta^2(i\kappa,\xi)}e^{t(-24\kappa\xi+8\kappa^3)}$, $\gamma_0^2=1$.
        \item[(v)] Singularities at the origin. In both cases (Case I and II), as $k\rightarrow 0$, $M(x,t,k)$ satisfies

    \begin{subequations}\label{singularity of breveM at k=0 BothCases}
        \begin{align}
            &\breve{M}_{+}(x,t,k)=\left(
                    \begin{array}{cc}
                    -\frac{2i\bar{v}_2(-x,-t)}{A\delta(0,\xi)}+O(k) & -\frac{1}{k}\bar{v}_2(-x,-t)\delta(0,\xi)+O(1) \\
                    -\frac{2i\bar{v}_1(-x,-t)}{A\delta(0,\xi)}+O(k) & -\frac{1}{k}\bar{v}_1(-x,-t)\delta(0,\xi)+O(1)
                    \end{array}
                \right), &\mathbb{C}_{+}\ni k\rightarrow 0,\label{singularity of breveM at k=+i0 BothCases}\\
            &\breve{M}_{-}(x,t,k)=\frac{2i}{A}\left(
                \begin{array}{cc}
                    -\frac{\bar{v}_2(-x,-t)}{\delta(0,\xi)}+O(k) &  -\frac{A}{2ik}\delta(0,\xi)\bar{v}_2(-x,-t)+O(k)\\
                    -\frac{\bar{v}_1(-x,-t)}{\delta(0,\xi)}+O(k) &  -\frac{A}{2ik}\delta(0,\xi)\bar{v}_2(-x,-t)+O(k)
                \end{array}
                \right), &\mathbb{C}_{-}\ni k\rightarrow 0.\label{singularity of breveM at k=-i0 BothCases}
        \end{align}
    Furthermore, we can see that the singularity conditions at origin can be formally reduced to
    \begin{align}\label{singularity of breveM as ResCondition BothCases}
        \underset{k=0}{\rm Res}\breve{M}^{(2)}(x,t,k)=c_0(\xi)\breve{M}^{(1)}(x,t,0),
    \end{align}
    with $c_0(\xi)=\frac{A\delta^2(0,\xi)}{2i}$.
\end{subequations}
\end{itemize}
\end{RHP}

\begin{proof}
    The items (i)-(iv) are not difficult to check. For the singular behavior of $\breve{M}(x,t,k)$ at $k=0$, we exhibit more details below.

    Notice that $0\in \overline{U_2}\cup\overline{U^*_2}$, thus we should investigate the singular behavior of $r_j(k)$ at $k=0$, $j=1,2$.

    For Case I, item (v) of Proposition \ref{propspecfuncs}, i.e., $a_1(k)=\frac{A^2a_2(0)}{4k^2}+O(k^{-1})$, $b(k)=\frac{Aa_2(0)}{2ik}+O(1)$ as $k\rightarrow 0$  imply that
    \begin{subequations}
        \begin{align}
        &-r_1(k)\delta^{-2}(k,\xi)e^{2it\theta}=\frac{2i}{A\delta^2(0,\xi)}k+O(k^2), & U_2\ni k\rightarrow 0, \\
        &r_2(k)\delta^{2}(k,\xi)e^{-2it\theta}=\frac{A\delta^2(0,\xi)}{2ik}+O(1), & U^*_2\ni k\rightarrow 0.
        \end{align}
    \end{subequations}
    Then by \eqref{breveM} for $k\in U_2$ and $k\in U^*_2$ respectively, we derive \eqref{singularity of breveM at k=0 BothCases} in Case I.

    For Case II, review that $a_1(k)=\frac{a_{11}}{k}+O(1)$ and $a_2(k)=ka_2'(0)+O(k^2)$ as $k\rightarrow 0$, we obtain
    \begin{subequations}
        \begin{align}
            &-r_1(k)\delta^{-2}(k,\xi)e^{2it\theta}=-\frac{b(0)}{a_{11}\delta^{2}(0,\xi)}k+O(k^2), & U_2\ni k\rightarrow 0, \\
            &r_2(k)\delta^{2}(k,\xi)e^{-2it\theta}=\frac{b(0)}{ka_2'(0)}\delta(0,\xi)+O(1), & U^*_2\ni k\rightarrow 0.
        \end{align}
    \end{subequations}
    Then by \eqref{breveM} for $k\in U_2$ and $k\in U^*_2$ respectively, we derive
    \begin{subequations}\label{singularity of breveM at k=0 CaseII}
        \begin{align}
            &\breve{M}_{+}(x,t,k)=\left(
                    \begin{array}{cc}
                    \frac{v_1(x,t)+b(0)\bar{v}_2(-x,-t)}{a_{11}\delta(0,\xi)}+O(k) & -\frac{\bar{v}_2(-x,-t)}{k}\delta(0,\xi)+O(1) \\
                    \frac{v_2(x,t)+b(0)\bar{v}_1(-x,-t)}{a_{11}\delta(0,\xi)}+O(k) & -\frac{\bar{v}_1(-x,-t)}{k}\delta(0,\xi)+O(1)
                    \end{array}
                \right), &\mathbb{C}_{+}\ni k\rightarrow 0,\label{singularity of breveM at k=+i0 CaseII}\\
            &\breve{M}_{-}(x,t,k)=\frac{2i}{A}\left(
                \begin{array}{cc}
                    -\frac{\bar{v}_2(-x,-t)}{\delta(0,\xi)}+O(k) & \frac{v_1(x,t)-b(0)\bar{v}_2(-x,-t)}{ka_2'(0)}\delta(0,\xi)+O(1) \\
                    -\frac{\bar{v}_1(-x,-t)}{\delta(0,\xi)}+O(k) & \frac{v_2(x,t)-b(0)\bar{v}_1(-x,-t)}{ka_2'(0)}\delta(0,\xi)+O(1)
                \end{array}
            \right), &\mathbb{C}_{-}\ni k\rightarrow 0.\label{singularity of breveM at k=-i0 CaseII}
        \end{align}
    \end{subequations}
Comparing \eqref{singularity of breveM at k=+i0 CaseII} with \eqref{singularity of breveM at k=-i0 CaseII} (cf.\eqref{resk=0forsoliton}), we have two equations as follows
\begin{subequations}\label{v1v2forresconlinearequs}
    \begin{align}
        &\bar{v}_1(x,t)=-\left(\frac{2i}{A}+b(0)\right)\bar{v}_2(-x,-t)\\
        &\bar{v}_1(x,t)=\left(b(0)-\frac{Aa_2'(0)}{2i}\right)\bar{v}_2(-x,-t),
    \end{align}
\end{subequations}
    Indeed, we can find that the two equations in \eqref{v1v2forresconlinearequs} are the same one by taking into account \eqref{AppendixCala_{11}CaseII},
    from which, we derive that \eqref{singularity of breveM at k=0 BothCases} in Case II.
\end{proof}

\subsection{Regular RH problem: leading term of the asymptotics}\label{subsectionregularRHP}
In this subsection, we convert the RH problem \ref{RHPbreveM} with two formal residue conditions \eqref{rescondition of breveM} and \eqref{singularity of breveM as ResCondition BothCases}
into a regular RH problem without residue conditions via using the Blaschke-Potapov factors (reference, e.g., \cite[Proposition. 6]{RDJDE2021}, \cite{HamilMintheTheoryofSolitons}).

Make the transformation
\begin{equation}\label{transforregularrhp}
    \breve{M}=B(x,t,k)\breve{M}^{r}(x,t,k)\begin{pmatrix}
        1 & 0 \\
        0 & \frac{k-i\kappa}{k}
    \end{pmatrix}, \quad k\in\mathbb{C},
\end{equation}
where $B(x,t,k)=I+\frac{i\kappa}{k-i\kappa}P(x,t)$. And $B(x,t,k)$, $P(x,t)$ are the so called Blaschke-Potapov factors. Next RH problem
reveals the relation of regular RH problem $\breve{M}^{r}$ and the Blaschke-Potapov factors.
\begin{RHP}\label{RHPregular}
    Find a $2\times2$ matrix-valued function $\breve{M}^{r}(x,t,k)$ such that
\begin{itemize}
    \item[(i)] $\breve{M}^{r}(x,t,k)$ is analytic for $k\in\mathbb{C}\backslash\Gamma$.
    \item[(ii)]  $\breve{M}^{r}_{+}(x,t,k)=\breve{M}^{r}_{-}(x,t,k)\breve{J}^{r}(x,t,k)$ with
    \begin{equation}
        \breve{J}^{r}(x,t,k)
        \begin{pmatrix}
            1 & 0\\
            0 & \frac{k-i\kappa}{k}
        \end{pmatrix}
        \breve{J}(x,t,k)
        \begin{pmatrix}
            1 & 0\\
            0 & \frac{k}{k-i\kappa}
        \end{pmatrix}.
    \end{equation}
    More specifically,
    \begin{equation}\label{jumpforbreveMr}
        \breve{J}^r(x,t,k)=\left\{
            \begin{aligned}
            &\begin{pmatrix}
                1 & \frac{{r}^{r}_2(k)\delta^{2}(k,\xi)}{1+{r}^{r}_1(k){r}^{r}_2(k)}e^{-2it\theta}\\
                0 & 1
            \end{pmatrix}  & k\in\Gamma_1\cup\Gamma_4,\\
            &\begin{pmatrix}
                1 & 0\\
                -\frac{{r}^{r}_1(k)\delta^{-2}(k,\xi)}{1+{r}^{r}_1(k){r}^{r}_2(k)}e^{2it\theta} & 1
            \end{pmatrix}, & k\in\Gamma_1^*\cup\Gamma_4^*, \\
            &\begin{pmatrix}
                1 & 0\\
                {r}^{r}_1(k)\delta^{-2}(\xi,k)e^{2it\theta} & 1
            \end{pmatrix}, & k\in\Gamma_2\cup\Gamma_3, \\
            &\begin{pmatrix}
                1 & -{r}^{r}_2(k)\delta^2(\xi,k)e^{-2it\theta}\\
                0 & 1
            \end{pmatrix} , & k\in\Gamma^*_2\cup\Gamma^*_3.
            \end{aligned}
            \right.
    \end{equation}
    with
    \begin{equation}\label{rtor^r}
        {r}^{r}_1(k)=\frac{k-i\kappa}{k}r_1(k), \quad {r}^{r}_2(k)=\frac{k}{k-i\kappa}r_2(k).
    \end{equation}
    Moreover
    \item[(iii)] Normalization at $k=\infty$. $\breve{M}^{r}(x,t,k)\rightarrow I$ as $k\rightarrow\infty$.
    \item[(iv)] Matrix-valued factor $P(x,t)$ are determined in terms of $\breve{M}^{r}(x,t,k)$:
    \begin{subequations}\label{P12andP21}
        \begin{align}
            &P_{12}(x,t)=\frac{g_1(x,t)h_1(x,t)}{g_1(x,t)h_2(x,t)-g_2(x,t)h_1(x,t)},\\
            &P_{21}(x,t)=-\frac{g_2(x,t)h_2(x,t)}{g_1(x,t)h_2(x,t)-g_2(x,t)h_1(x,t)},\\
            &P_{11}(x,t)=-\frac{P_{12}(x,t)g_2(x,t)}{g_1(x,t)}, \quad P_{22}(x,t)=-\frac{P_{21}(x,t)g_1(x,t)}{g_2(x,t)}
        \end{align}
    \end{subequations}
   where $g(x,t)=(g_1(x,t), g_2(x,t))^{\rm T}$ and $h(x,t)=(h_1(x,t), h_2(x,t))^{\rm T}$ are given by
   \begin{subequations}\label{gandhdefine}
    \begin{align}
        &g(x,t)=i\kappa\breve{M}^{r(1)}(x,t,i\kappa)-c_1(x,t)\breve{M}^{r(2)}(x,t,i\kappa), \label{gdefine}\\
        &h(x,t)=i\kappa\breve{M}^{r(2)}(x,t,0)+c_0(\xi)\breve{M}^{r(1)}(x,t,0), \label{hdefine}
    \end{align}
    where $\breve{M}^{r(j)}$, $j=1,2$ represent the $j$-th column of $\breve{M}^{r}$.
   \end{subequations}
\end{itemize}
\end{RHP}
\begin{proof}
    The proof of this RH problem is similar to \cite[Proposition. 6]{RDJDE2021}.
\end{proof}

Moreover, by \eqref{reflectioncoeCondition1}, we can see that
\begin{equation*}
    {r}^{r}_1(k)=\overline{{r}^{r}_{1}(-\bar{k})}, \quad {r}^{r}_2(k)=\overline{{r}^{r}_{2}(-\bar{k})}.
\end{equation*}
Combine the symmetry of $\delta(\xi,k)$, we obtain that
\begin{equation}\label{breveJrsym}
\breve{J}^r(x,t,k)=\overline{\breve{J}^r(x,t,-\bar{k})}
\end{equation}
With these symmetries, we conclude that the
regular RH problem $\breve{M}^{r}(x,t,k)$ admits the symmetry
\begin{equation}\label{breveMsym}
    \breve{M}^{r}(x,t,k)=\overline{\breve{M}^{r}(x,t,-\bar{k})}.
\end{equation}

Next, we introduce next proposition, which establishes the relation between the solution $u(x,t)$ of \eqref{Cauchy Problem}, \eqref{bdrycondition} and the regular RH problem $\breve{M}^{r}$.
From the relation, we can obtain a rough result of large-time asymptotics for $u(x,t)$.
\begin{proposition}
    The solution $u(x,t)$ of the Cauchy problem \eqref{Cauchy Problem}, \eqref{bdrycondition} can be expressed in terms of
    \begin{subequations}\label{uintermofregularrhp}
    \begin{align}
        &u(x,t)=-2\kappa P_{12}(x,t)+2i\lim_{k\rightarrow\infty}k\breve{M}^{r}_{12}(x,t,k), \quad x>0, \ t<0, \label{uintermofregularrhp1}\\
        &u(x,t)=-2\kappa P_{21}(-x,-t)+2i\lim_{k\rightarrow\infty}k\breve{M}^{r}_{21}(-x,-t,k), \quad x<0, \ t>0, \label{uintermofregularrhp2}
    \end{align}
\end{subequations}
Moreover, as $t\rightarrow-\infty$, we have a rough estimation
\begin{equation}\label{roughresultforuxi<01}
    u(x,t)=A\delta^2(0,\xi)+o(1), \quad x>0, \ t<0
\end{equation}
along any ray $\xi=\frac{x}{12t}=const<0$. And as $t\rightarrow+\infty$,
\begin{equation}\label{roughresultforuxi<02}
    u(x,t)=o(1), \quad x<0, \ t>0
\end{equation}
along any ray $\xi=\frac{x}{12t}=const<0$.
\end{proposition}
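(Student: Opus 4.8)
The plan is first to establish the reconstruction formulas \eqref{uintermofregularrhp} by tracking the chain of transformations at $k=\infty$, and then to read off the rough asymptotics from the explicit expressions \eqref{P12andP21}--\eqref{gandhdefine}, letting the residue coefficient $c_1$ die and the regular problem $\breve{M}^r$ collapse to the identity.

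\textbf{Step 1 (reconstruction formulas).} The starting point is \eqref{uintermsofM}, $u(x,t)=2i\lim_{k\to\infty}kM_{12}(x,t,k)$ and $u(-x,-t)=2i\lim_{k\to\infty}kM_{21}(x,t,k)$. I trace $M\mapsto\tilde{M}=M\delta^{-\sigma_3}\mapsto\breve{M}\mapsto\breve{M}^r$ near $k=\infty$. Since $\delta(k,\xi)\to1$, the twist $\delta^{-\sigma_3}$ does not alter the $k^{-1}$-coefficient of the off-diagonal entries; the lens factors in \eqref{breveM} are attached only on the rays $\Gamma_1,\Gamma_4$ running to infinity, where they carry $e^{\pm2it\theta}$ with $\theta\sim 4k^3$ and hence decay super-exponentially in $|k|$, so they too leave the $k^{-1}$-coefficient untouched; and by \eqref{transforregularrhp}, $\breve{M}=B\,\breve{M}^r\,\mathrm{diag}\!\big(1,\tfrac{k-i\kappa}{k}\big)$ with $B=I+\tfrac{i\kappa}{k-i\kappa}P$. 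Expanding at $k=\infty$, using $\breve{M}^r\to I$ and $\breve{M}^r_{12},\breve{M}^r_{21}=O(k^{-1})$, a short computation gives $\lim_{k\to\infty}k\breve{M}_{12}=\lim_{k\to\infty}k\breve{M}^r_{12}+i\kappa P_{12}$ and $\lim_{k\to\infty}k\breve{M}_{21}=\lim_{k\to\infty}k\breve{M}^r_{21}+i\kappa P_{21}$. Substituting into \eqref{uintermsofM} yields \eqref{uintermofregularrhp1} directly; it is stated for $x>0,\ t<0$ because that is precisely $\xi<0$, $t\to-\infty$, the regime in which all of the preceding deformations were performed. For \eqref{uintermofregularrhp2} I apply the $M_{21}$-identity after the substitution $(x,t)\mapsto(-x,-t)$, so that the evaluation point again lies in $\{\text{first argument}>0,\ \text{second argument}<0\}$ with the same $\xi<0$; this gives $u(x,t)=u(-(-x),-(-t))=-2\kappa P_{21}(-x,-t)+2i\lim_{k\to\infty}k\breve{M}^r_{21}(-x,-t,k)$.

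\textbf{Step 2 (rough asymptotics).} For $\xi<0$ one has $-24\kappa\xi+8\kappa^3>0$, so $c_1(x,t)=\frac{\gamma_0}{a_1'(i\kappa)\delta^2(i\kappa,\xi)}e^{t(-24\kappa\xi+8\kappa^3)}\to0$ as $t\to-\infty$. Moreover the jump $\breve{J}^r$ in \eqref{jumpforbreveMr} tends to $I$ as $t\to-\infty$ away from the saddle points $\pm k_0$ (its off-diagonal entries carry $e^{\pm2it\theta}$, exponentially small on the lens rays), so $\breve{M}^r(x,t,k)\to I$ locally uniformly for $k$ off $\pm k_0$, and its $k^{-1}$-coefficient tends to $0$. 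Hence in \eqref{gdefine}--\eqref{hdefine}, $g(x,t)\to(i\kappa,0)^{\mathrm T}$ and $h(x,t)\to(c_0(\xi),i\kappa)^{\mathrm T}$, so by \eqref{P12andP21}, $P_{12}(x,t)\to\frac{(i\kappa)c_0(\xi)}{(i\kappa)^2}=-\frac{i\,c_0(\xi)}{\kappa}=-\frac{A\delta^2(0,\xi)}{2\kappa}$, using $c_0(\xi)=\frac{A\delta^2(0,\xi)}{2i}$. Plugging into \eqref{uintermofregularrhp1} gives $u(x,t)=-2\kappa P_{12}+2i\lim_{k\to\infty}k\breve{M}^r_{12}=A\delta^2(0,\xi)+o(1)$, i.e. \eqref{roughresultforuxi<01}. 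For $x<0,\ t\to+\infty$ I use \eqref{uintermofregularrhp2}: the data are evaluated at $(-x,-t)$ with second argument $-t\to-\infty$ and the same $\xi<0$, whence $c_1(-x,-t)\to0$, $\breve{M}^r(-x,-t,\cdot)\to I$, so $g_2(-x,-t)\to0$ and therefore $P_{21}(-x,-t)\to0$ by \eqref{P12andP21}; thus $u(x,t)=-2\kappa P_{21}(-x,-t)+o(1)=o(1)$, i.e. \eqref{roughresultforuxi<02}.

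\textbf{Main obstacle.} The only non-routine ingredient is the claim $\breve{M}^r\to I$: the perturbation $\breve{J}^r-I$ does not vanish uniformly near $\pm k_0$, so a pure small-norm argument does not apply directly there. The rigorous route is the full Deift--Zhou steepest-descent analysis of Subsections~\ref{subsectionlocal}--\ref{subsectionAANO1erroranalysis} — local parametrices in terms of parabolic cylinder functions together with Beals--Coifman error estimates — which in fact upgrades all the $o(1)$'s above to $O(|t|^{-1/2})$; for the present proposition only the qualitative statement is needed, and it follows from the contour geometry once the local model at $\pm k_0$ is controlled. The remaining items — the bookkeeping of the transformations at $k=\infty$ and the algebra of $P_{12},P_{21}$ — are elementary.
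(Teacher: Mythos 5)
Your proposal is correct and follows essentially the same route as the paper: expanding the chain $M\mapsto\tilde M\mapsto\breve M\mapsto\breve M^{r}$ at $k=\infty$ to get \eqref{uintermofregularrhp}, then letting $c_1\to0$ and $\breve M^{r}\to I$ in \eqref{P12andP21}--\eqref{gandhdefine} to evaluate $P_{12}$ and $P_{21}$. You are in fact somewhat more explicit than the paper, which justifies $\breve M^{r}\approx I$ only by analogy with the zero-background case; your identification of that step as the one requiring the local parametrix/error analysis of the later subsections is accurate.
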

\begin{proof}
    Firstly, we use \eqref{transforregularrhp} to obtain the large-$k$ behavior as
    \begin{equation}
        \breve{M}(x,t,k)=\begin{pmatrix}
            1 & 0 \\
            0 & \frac{k-i\kappa}{k}
        \end{pmatrix}+\frac{\breve{M}^r_{1}(x,t)}{k}+\frac{i\kappa}{k-i\kappa}P(x,t)+O(k^{-2}), \quad {\rm as} \ k\rightarrow\infty,
    \end{equation}
    where $\breve{M}^r(x,t,k)=I+\frac{\breve{M}^r_{1}(x,t)}{k}+O(\frac{1}{k^2})$, $k\rightarrow\infty$. Then we combine \eqref{uintermsofM} and
    a series of transformation for basic RH problem $M(x,t,k)$ to obtain \eqref{uintermofregularrhp}.

    Because the regular RH problem $\breve{M}^r$ has the same form as in the case of zero background, refer \cite{NMKdVzeroback}, $\breve{M}^{r}\approx I$
    as $t\rightarrow-\infty$. Owe to this, we can give the leading term of $u(x,t)$ as $t\rightarrow-\infty$ before formulating the precise more precise asymptotics.
    Indeed, $\breve{M}^{r}\approx I$ as $t\rightarrow-\infty$ reveals
    $(g_1(x,t), g_2(x,t))^{\rm T}\approx(i\kappa, -c_1(x,t))^{\rm T}\underset{x>0, \ t\rightarrow-\infty}{\approx}(i\kappa, 0)^{\rm T}$ as well as
    $(h_1(x,t), h_2(x,t))^{\rm T}\approx(c_0(\xi), i\kappa)^{\rm T}$. Then for $x>0$, we have
    \begin{equation}
        u(x,t)\approx-2\kappa P_{12}(x,t)\approx\frac{2i\kappa^2c_0(\xi)}{\kappa^2-c_1(x,t)c_0(\xi)}\underset{t\rightarrow-\infty}{\approx}2ic_0(\xi)\approx A\delta^2(0,\xi),
    \end{equation}
    which is the leading term of $u(x,t)$ as $t\rightarrow-\infty$, $x>0$.

    Similarly, for $x<0$, $t\rightarrow+\infty$ we have $(g_1(-x,-t), g_2(-x,-t))^{\rm T}\approx(i\kappa, -c_1(-x,-t))^{\rm T}\underset{t\rightarrow+\infty}{\approx}(i\kappa, 0)^{\rm T}$
    as well as $(h_1(-x,-t), h_2(-x,-t))^{\rm T}\approx(c_0(\xi), i\kappa)^{\rm T}$
    \begin{equation}
        u(x,t)\approx-2\kappa P_{21}(-x,-t)\approx 2\kappa\frac{-c_1(-x,-t)(-i\kappa)}{-\kappa^2+c_0(\xi)c_1(-x,-t)}\approx 0.
    \end{equation}
    which is the leading term of $u(x,t)$ as $t\rightarrow+\infty$, $x<0$.
\end{proof}

\subsection{Local parametrix near saddle points}\label{subsectionlocal}
In this subsection, our goal is to use the parabolic cylinder functions to present a good approximation
of $\breve{M}^{r}(x,t,k)$ locally around the saddle points $-k_0$ and $k_0$. We use $\breve{M}^r_{-k_0}$ and $\breve{M}^r_{k_0}$ to express
the local parametrix of $-k_0$ and $k_0$ respectively. To be brief, we only exhibit the details for
$\breve{M}^r_{-k_0}$, then use symmetry relation to obtain the information of the local neighborhood of $\breve{M}^r_{k_0}$.
The methods we use in this subsection mainly refer to our previous work \cite{NMKdVzeroback} and Lenells' work \cite{lenellsmkdv}.

Let $U_{\epsilon}(-k_0)$ and $U_{\epsilon}(k_0)$ denote the open disc of radius $\epsilon$ around $-k_0$ and $k_0$ respectively. Define $\epsilon$
\begin{equation}
    \epsilon:=\min\left\{\eta:=\frac{k_0}{2},\quad  \frac{1}{2}\vert i\kappa+k_0 \vert\right\},
\end{equation}
which make the $0$ and $i\kappa$ do not belong to the neighborhood of  $U_{\epsilon}(-k_0)$ and $U_{\epsilon}(k_0)$.

And define the contours as follows:
\begin{subequations}
\begin{align}
    &\Gamma_{k_0,\epsilon}:=\Gamma\cap U_{\epsilon}(k_0)=\Gamma_{j,\epsilon}\cup\Gamma_{j,\epsilon}^{*}, & j=1, 2, \\
    &\Gamma_{-k_0,\epsilon}:=\Gamma\cap U_{\epsilon}(-k_0)=\Gamma_{j,\epsilon}\cup\Gamma_{j,\epsilon}^{*}, & j=3, 4, \\
    &\Gamma_\epsilon:=\Gamma_{k_0,\epsilon}\cup\Gamma_{-k_0,\epsilon}.
\end{align}
\end{subequations}
See Fig \ref{FigLocalModel}.

\begin{figure}[htbp]
\begin{center}
\tikzset{every picture/.style={line width=0.75pt}} 
\begin{tikzpicture}[x=0.75pt,y=0.75pt,yscale=-1,xscale=1]
\draw    (480.01,143.16) -- (558.29,74.02) ;
\draw [shift={(523.65,104.62)}, rotate = 138.55] [color={rgb, 255:red, 0; green, 0; blue, 0 }  ][line width=0.75]    (10.93,-3.29) .. controls (6.95,-1.4) and (3.31,-0.3) .. (0,0) .. controls (3.31,0.3) and (6.95,1.4) .. (10.93,3.29)   ;
\draw    (480.01,143.16) -- (403.29,72.02) ;
\draw [shift={(446.78,112.35)}, rotate = 222.84] [color={rgb, 255:red, 0; green, 0; blue, 0 }  ][line width=0.75]    (10.93,-3.29) .. controls (6.95,-1.4) and (3.31,-0.3) .. (0,0) .. controls (3.31,0.3) and (6.95,1.4) .. (10.93,3.29)   ;
\draw    (204.01,148.16) -- (278.29,71.02) ;
\draw [shift={(245.31,105.27)}, rotate = 133.92] [color={rgb, 255:red, 0; green, 0; blue, 0 }  ][line width=0.75]    (10.93,-3.29) .. controls (6.95,-1.4) and (3.31,-0.3) .. (0,0) .. controls (3.31,0.3) and (6.95,1.4) .. (10.93,3.29)   ;
\draw    (480.01,143.16) -- (398.29,219.02) ;
\draw [shift={(434.75,185.17)}, rotate = 317.13] [color={rgb, 255:red, 0; green, 0; blue, 0 }  ][line width=0.75]    (10.93,-3.29) .. controls (6.95,-1.4) and (3.31,-0.3) .. (0,0) .. controls (3.31,0.3) and (6.95,1.4) .. (10.93,3.29)   ;
\draw    (559.29,217.02) -- (479.01,142.16) ;
\draw [shift={(514.76,175.5)}, rotate = 43] [color={rgb, 255:red, 0; green, 0; blue, 0 }  ][line width=0.75]    (10.93,-3.29) .. controls (6.95,-1.4) and (3.31,-0.3) .. (0,0) .. controls (3.31,0.3) and (6.95,1.4) .. (10.93,3.29)   ;
\draw    (277.29,223.02) -- (204.01,148.16) ;
\draw [shift={(236.45,181.3)}, rotate = 45.61] [color={rgb, 255:red, 0; green, 0; blue, 0 }  ][line width=0.75]    (10.93,-3.29) .. controls (6.95,-1.4) and (3.31,-0.3) .. (0,0) .. controls (3.31,0.3) and (6.95,1.4) .. (10.93,3.29)   ;
\draw    (126.29,72.02) -- (204.01,148.16) ;
\draw [shift={(169.43,114.29)}, rotate = 224.41] [color={rgb, 255:red, 0; green, 0; blue, 0 }  ][line width=0.75]    (10.93,-3.29) .. controls (6.95,-1.4) and (3.31,-0.3) .. (0,0) .. controls (3.31,0.3) and (6.95,1.4) .. (10.93,3.29)   ;
\draw    (204.01,148.16) -- (130.29,226.02) ;
\draw [shift={(163.02,191.45)}, rotate = 313.43] [color={rgb, 255:red, 0; green, 0; blue, 0 }  ][line width=0.75]    (10.93,-3.29) .. controls (6.95,-1.4) and (3.31,-0.3) .. (0,0) .. controls (3.31,0.3) and (6.95,1.4) .. (10.93,3.29)   ;
\draw  [dash pattern={on 0.84pt off 2.51pt}] (97.22,148.16) .. controls (97.22,89.18) and (145.03,41.37) .. (204.01,41.37) .. controls (262.98,41.37) and (310.79,89.18) .. (310.79,148.16) .. controls (310.79,207.13) and (262.98,254.94) .. (204.01,254.94) .. controls (145.03,254.94) and (97.22,207.13) .. (97.22,148.16) -- cycle ;
\draw  [dash pattern={on 0.84pt off 2.51pt}] (373.14,148.16) .. controls (373.14,89.14) and (420.98,41.3) .. (480,41.3) .. controls (539.02,41.3) and (586.87,89.14) .. (586.87,148.16) .. controls (586.87,207.18) and (539.02,255.02) .. (480,255.02) .. controls (420.98,255.02) and (373.14,207.18) .. (373.14,148.16) -- cycle ;
\draw (175.99,141.9) node [anchor=north west][inner sep=0.75pt]  [font=\scriptsize,rotate=-0.03]  {$-k_{0}$};
\draw (488.99,139.05) node [anchor=north west][inner sep=0.75pt]  [font=\scriptsize,rotate=-0.03]  {$k_{0}$};
\draw (44,52.4) node [anchor=north west][inner sep=0.75pt]  [font=\scriptsize]  {$U_{\epsilon }( -k_{0})$};
\draw (569,46.4) node [anchor=north west][inner sep=0.75pt]  [font=\scriptsize]  {$U_{\epsilon }( k_{0})$};
\draw (266,91.4) node [anchor=north west][inner sep=0.75pt]  [font=\scriptsize]  {$\Gamma _{3,\ \epsilon }$};
\draw (263,180.4) node [anchor=north west][inner sep=0.75pt]  [font=\scriptsize]  {$\Gamma _{3,\ \epsilon }^{*}$};
\draw (126,96.4) node [anchor=north west][inner sep=0.75pt]  [font=\scriptsize]  {$\Gamma _{4,\ \epsilon }$};
\draw (123,175.4) node [anchor=north west][inner sep=0.75pt]  [font=\scriptsize]  {$\Gamma _{4,\ \epsilon }^{*}$};
\draw (539,96.4) node [anchor=north west][inner sep=0.75pt]  [font=\scriptsize]  {$\Gamma _{1,\ \epsilon }$};
\draw (544,179.4) node [anchor=north west][inner sep=0.75pt]  [font=\scriptsize]  {$\Gamma _{1,\ \epsilon }^{*}$};
\draw (406,94.4) node [anchor=north west][inner sep=0.75pt]  [font=\scriptsize]  {$\Gamma _{2,\ \epsilon }$};
\draw (421,190.4) node [anchor=north west][inner sep=0.75pt]  [font=\scriptsize]  {$\Gamma _{2,\ \epsilon }^{*}$};
\end{tikzpicture}
\caption{\small The jump contours of the local parametrix $\breve{M}^r_{-k_0}$ and $\breve{M}^r_{k_0}$}\label{FigLocalModel}
\end{center}
\end{figure}

Here we consider $r_j(-k_0)\neq 0$, $j=1,2$. For the cases when one of the $r_j(-k_0)$ or the both equal to $0$ and thus $\nu(-k_0)=0$,
we suggest the interested readers to refer \cite[Sec 1.6, Ch 2]{PainleveTrans}. Additionally, we notice that $\breve{M}^{r}(x,t,k)$ is similar to
in which in the case of zero boundary condition, so we follow the idea of \cite{NMKdVzeroback} to complete the rest asymptotic analysis.

At first, we define some quantities which are convenient for our expressions in this subsection.
\begin{equation}
    \eta:=\frac{k_0}{2}, \quad \rho=\eta\sqrt{48k_0}, \quad \tau:=-t\rho^2=-12tk_0^3>0, \quad \nu:=\nu(-k_0).
\end{equation}
And
\begin{equation}
    \varphi(\xi;\zeta):=2i\theta\left(\xi,-k_0+\frac{\eta}{\rho}\right)=16ik_0^3-\frac{i}{2}\zeta^2+\frac{i\zeta^3}{12\rho}.
\end{equation}
\begin{remark}\rm
    We can see that: when $t\rightarrow-\infty$, $\tau\rightarrow+\infty$. Whereas, in \cite{NMKdVzeroback} where $\tau:=t\rho^2=12tk_0^3$ for $t>0$, $\tau\rightarrow+\infty$ as $t\rightarrow+\infty$.
\end{remark}

For $k$ near $-k_0$, we notice that
\begin{equation}
    \theta(k,\xi)=8k_0^3-12k_0(k+k_0)^2+4(k+k_0)^3.
\end{equation}
Thus, for $k\in U_{\epsilon}(-k_0)$, we define the rescaled variable $\zeta$ by
\begin{equation}\label{scaling}
    k=\frac{\zeta}{\sqrt{-48tk_0}}-k_0=\frac{\eta}{\sqrt{\tau}}\zeta-k_0
\end{equation}
(for $k\in U_{\epsilon}(k_0)$, $k=\frac{\zeta}{\sqrt{-48tk_0}}+k_0$). And the scaling operator $N_{-k_0}$ admits the mapping
\begin{align}
    f(\zeta)\longmapsto (N_{-k_0}f)(\zeta)=f\left(\frac{\zeta}{\sqrt{-48tk_0}}-k_0\right).
\end{align}

Now we match the ${q}^r_{j}(-k_0)$, $j=1,2$ appeared in \eqref{jumppc}. For $k\in\Gamma\cap U_{\epsilon}(-k_0)$, for instance $k\in\Gamma_{3,\epsilon}$,
Use formulae \eqref{deltahatchi} and \eqref{scaling} , we have
\begin{equation}
    \delta^{-2}{r}^{r}_1(k)e^{2it\theta}|_{k=-k_0}=e^{16itk_0^3-i\zeta^3(144\tau)^{-\frac{1}{2}}}\tau^{i\nu}e^{-2\chi(\xi,-k_0)}{r}^{r}_1(-k_0)4^{2i\nu}e^{\frac{i}{2}\zeta^2}\zeta^{-2i\nu},
\end{equation}
then compare to the first jump of \eqref{jumppc}, we obtain
\begin{align}
    q_1^r(-k_0)=e^{-2\chi(\xi,-k_0)}{r}^{r}_1(-k_0)e^{2i\nu\log 4}.
\end{align}
Similarly, we derive
\begin{align}
    q^r_2(-k_0)=e^{2\chi(\xi,-k_0)}{r}^{r}_2(-k_0)e^{-2i\nu\log 4}.
\end{align}

The local parametrix $\breve{M}^r_{-k_0}$ determined by
\begin{equation}\label{LCPara-k_0}
    \breve{M}^r_{-k_0}(x,t,k)=\Xi(\xi,t)m^{pc}_{-k_0}\left(\xi, \zeta(k)=\sqrt{-48tk_0}(k+k_0)\right)\Xi^{-1}(x,t)
\end{equation}
where
\begin{equation}\label{Xibiaodashi}
    \Xi(\xi,t)=e^{-\frac{t}{2}\varphi(\xi,0)\sigma_3}\cdot\tau^{-\frac{i\nu\sigma_3}{2}}.
\end{equation}

The RH problem $m^{pc}_{-k_0}(\xi,\zeta)$ appeared in \eqref{LCPara-k_0} can be solved explicitly, in terms of parabolic cylinder functions (see Appendix \ref{Appendixpcmodel}) and admits
the asymptotic behavior as $\zeta\rightarrow\infty$
\begin{equation}\label{mpc-k0biaodashi}
    m^{pc}_{-k_0}\left(\xi, \zeta(k)\right)=I+\frac{i}{\zeta}\begin{pmatrix}
         0  & -\beta^r(\xi) \\ \gamma^r(\xi) & 0
    \end{pmatrix}+ O(\frac{1}{\zeta^2}),\quad \zeta\rightarrow\infty.
\end{equation}
where
\begin{align}\label{betargammar}
    \beta^r(\xi)=\frac{\sqrt{2\pi}e^{\frac{i\pi}{4}}e^{-\frac{\pi \nu(-k_0)}{2}}}{{q}^r_{1}(-k_0)\Gamma(-i\nu(-k_0))}, \quad
    \gamma^r(\xi)=\frac{\sqrt{2\pi}e^{-\frac{i\pi}{4}}e^{-\frac{\pi \nu(-k_0)}{2}}}{{q}^r_{2}(-k_0)\Gamma(i\nu(-k_0))}
\end{align}

By the symmetry condition \eqref{breveMsym}, we extend the local parametrix to a neighborhood of $k_0$ by
\begin{equation}
    \breve{M}^r_{k_0}(x,t,k)=\overline{\breve{M}^r_{-k_0}(x,t,-\bar{k})}, \quad \vert k-k_0 \vert<\epsilon.
\end{equation}

\subsection{Error analysis}\label{subsectionAANO1erroranalysis}
Having constructed the local parametrix $\breve{M}^{r}_{\pm k_0}$, then we define so-called error matrix as
\begin{equation}\label{defineE}
    E(x,t,k)=\left\{
        \begin{aligned}
        & \breve{M}^{r}(x,t,k)\left(\breve{M}^{r}_{-k_0}\right)^{-1}(x,t,k), &k\in U_{\epsilon}(-k_0)=\left\{k:|k+k_0|<\epsilon\right\}, \\
        & \breve{M}^{r}(x,t,k)\left(\breve{M}^{r}_{k_0}\right)^{-1}(x,t,k), &k\in U_{\epsilon}(k_0)=\left\{k:|k-k_0|<\epsilon\right\}, \\
        & \breve{M}^{r}(x,t,k), &{\rm elsewhere.}
        \end{aligned}
        \right.
\end{equation}
The jump contours of $E(x,t,k)$ is depicted as Fig \ref{FigJumpE}.

\begin{figure}[htbp]
\begin{center}
\tikzset{every picture/.style={line width=0.75pt}} 
\begin{tikzpicture}[x=0.75pt,y=0.75pt,yscale=-1,xscale=1]
\draw    (435.36,140.53) -- (553.29,92.86) ;
\draw [shift={(499.89,114.45)}, rotate = 157.99] [color={rgb, 255:red, 0; green, 0; blue, 0 }  ][line width=0.75]    (10.93,-3.29) .. controls (6.95,-1.4) and (3.31,-0.3) .. (0,0) .. controls (3.31,0.3) and (6.95,1.4) .. (10.93,3.29)   ;
\draw    (435.36,140.53) -- (324.29,98.86) ;
\draw [shift={(386.38,122.16)}, rotate = 200.56] [color={rgb, 255:red, 0; green, 0; blue, 0 }  ][line width=0.75]    (10.93,-3.29) .. controls (6.95,-1.4) and (3.31,-0.3) .. (0,0) .. controls (3.31,0.3) and (6.95,1.4) .. (10.93,3.29)   ;
\draw    (324.29,98.86) -- (220.99,140.19) ;
\draw [shift={(267.07,121.75)}, rotate = 338.2] [color={rgb, 255:red, 0; green, 0; blue, 0 }  ][line width=0.75]    (10.93,-3.29) .. controls (6.95,-1.4) and (3.31,-0.3) .. (0,0) .. controls (3.31,0.3) and (6.95,1.4) .. (10.93,3.29)   ;
\draw    (435.36,140.53) -- (325.29,181.86) ;
\draw [shift={(374.71,163.31)}, rotate = 339.42] [color={rgb, 255:red, 0; green, 0; blue, 0 }  ][line width=0.75]    (10.93,-3.29) .. controls (6.95,-1.4) and (3.31,-0.3) .. (0,0) .. controls (3.31,0.3) and (6.95,1.4) .. (10.93,3.29)   ;
\draw    (549.29,190.5) -- (435.36,140.53) ;
\draw [shift={(486.83,163.11)}, rotate = 23.68] [color={rgb, 255:red, 0; green, 0; blue, 0 }  ][line width=0.75]    (10.93,-3.29) .. controls (6.95,-1.4) and (3.31,-0.3) .. (0,0) .. controls (3.31,0.3) and (6.95,1.4) .. (10.93,3.29)   ;
\draw    (325.29,181.86) -- (220.99,140.19) ;
\draw [shift={(267.57,158.8)}, rotate = 21.78] [color={rgb, 255:red, 0; green, 0; blue, 0 }  ][line width=0.75]    (10.93,-3.29) .. controls (6.95,-1.4) and (3.31,-0.3) .. (0,0) .. controls (3.31,0.3) and (6.95,1.4) .. (10.93,3.29)   ;
\draw    (113.29,99.86) -- (220.99,140.19) ;
\draw [shift={(172.76,122.13)}, rotate = 200.53] [color={rgb, 255:red, 0; green, 0; blue, 0 }  ][line width=0.75]    (10.93,-3.29) .. controls (6.95,-1.4) and (3.31,-0.3) .. (0,0) .. controls (3.31,0.3) and (6.95,1.4) .. (10.93,3.29)   ;
\draw   (410.74,140.53) .. controls (410.74,126.93) and (421.76,115.91) .. (435.36,115.91) .. controls (448.96,115.91) and (459.99,126.93) .. (459.99,140.53) .. controls (459.99,154.13) and (448.96,165.16) .. (435.36,165.16) .. controls (421.76,165.16) and (410.74,154.13) .. (410.74,140.53) -- cycle ;
\draw   (196.36,140.19) .. controls (196.36,126.59) and (207.39,115.56) .. (220.99,115.56) .. controls (234.59,115.56) and (245.62,126.59) .. (245.62,140.19) .. controls (245.62,153.79) and (234.59,164.81) .. (220.99,164.81) .. controls (207.39,164.81) and (196.36,153.79) .. (196.36,140.19) -- cycle ;
\draw   (224.9,118.96) -- (213.12,116.17) -- (222.88,109.01) ;
\draw   (438.23,119.35) -- (426.24,117.67) -- (435.3,109.63) ;
\draw  [dash pattern={on 0.84pt off 2.51pt}]  (88.8,141.4) -- (584.8,140.4) ;
\draw    (121,181.14) -- (220.99,140.19) ;
\draw [shift={(176.55,158.39)}, rotate = 157.73] [color={rgb, 255:red, 0; green, 0; blue, 0 }  ][line width=0.75]    (10.93,-3.29) .. controls (6.95,-1.4) and (3.31,-0.3) .. (0,0) .. controls (3.31,0.3) and (6.95,1.4) .. (10.93,3.29)   ;
\draw (211.99,145.9) node [anchor=north west][inner sep=0.75pt]  [font=\scriptsize,rotate=-0.03]  {$-k_{0}$};
\draw (430.99,144.05) node [anchor=north west][inner sep=0.75pt]  [font=\scriptsize,rotate=-0.03]  {$k_{0}$};
\draw (525,80.4) node [anchor=north west][inner sep=0.75pt]  [font=\scriptsize]  {$\Gamma _{1}$};
\draw (526,186.4) node [anchor=north west][inner sep=0.75pt]  [font=\scriptsize]  {$\Gamma _{1}^{*}$};
\draw (361,90.4) node [anchor=north west][inner sep=0.75pt]  [font=\scriptsize]  {$\Gamma _{2}$};
\draw (372.33,170.6) node [anchor=north west][inner sep=0.75pt]  [font=\scriptsize]  {$\Gamma _{2}^{*}$};
\draw (278,92.4) node [anchor=north west][inner sep=0.75pt]  [font=\scriptsize]  {$\Gamma _{3}$};
\draw (275,174.4) node [anchor=north west][inner sep=0.75pt]  [font=\scriptsize]  {$\Gamma _{3}^{*}$};
\draw (140,91.4) node [anchor=north west][inner sep=0.75pt]  [font=\scriptsize]  {$\Gamma _{4}$};
\draw (144.64,173.66) node [anchor=north west][inner sep=0.75pt]  [font=\scriptsize]  {$\Gamma _{4}^{*}$};
\end{tikzpicture}
\caption{\small The jump contours $\Gamma_E:=\Gamma\cup\partial U_{\epsilon}(-k_0)\cup\partial U_{\epsilon}(k_0)$ of $E(x,t,k)$}\label{FigJumpE}
\end{center}
\end{figure}
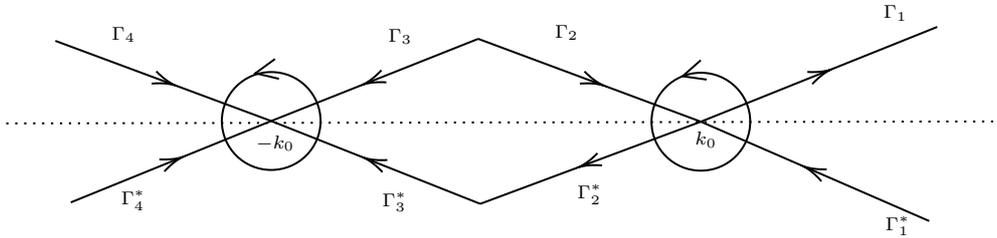

And $E(x,t,k)$ satisfies the RH problem as follows
\begin{RHP}\label{RHPE}
    Find a $2\times2$ matrix-valued function $E(x,t,k)$ such that
    \begin{itemize}
        \item[(i)] $E(x,t,k)$ is analytic for $k\in\mathbb{C}\backslash\Gamma_E$, where $\Gamma_E:=\Gamma\cup\partial U_{\epsilon}(-k_0)\cup\partial U_{\epsilon}(k_0)$.
        \item[(ii)] $E(x,t,k)$ takes continuous boundary conditions $E_{+}(x,t,k)=E_{-}(x,t,k)J_E(x,t,k)$ with
        \begin{equation}\label{jumpforE}
            J_E(x,t,k)=\left\{
                \begin{aligned}
                &\breve{M}^{r}_{-k_0}(x,t,k)\breve{J}^r(x,t,k)\left(\breve{M}^{r}_{-k_0}\right)^{-1}(x,t,k), &k\in\Gamma\cap U_{\epsilon}(-k_0)=\Gamma_{-k_0,\epsilon},, \\
                &\breve{M}^{r}_{k_0}(x,t,k)\breve{J}^r(x,t,k)\left(\breve{M}^{r}_{k_0}\right)^{-1}(x,t,k), &k\in\Gamma\cap U_{\epsilon}(k_0)=\Gamma_{k_0,\epsilon}, \\
                &\left(\breve{M}^{r}_{-k_0}\right)^{-1}(x,t,k),  &k\in\partial U_{\epsilon}(-k_0),\\
                &\left(\breve{M}^{r}_{k_0}\right)^{-1}(x,t,k),  &k\in\partial U_{\epsilon}(k_0), \\
                &\breve{J}^{r}(x,t,k), &\Gamma\backslash\Gamma_{\epsilon}.
                \end{aligned}
                \right.
        \end{equation}
        \item[(iii)] $E(x,t,k)=I+O(k^{-1})$, as $k\rightarrow\infty$.
    \end{itemize}
\end{RHP}

Define $w(x,t,k):=J_E(x,t,k)-I$. By \eqref{breveJrsym} and \eqref{breveMsym}, we know that $w(x,t,k)$ admits the symmetry
\begin{equation}\label{wsymmetry}
    w(x,t,k)=\overline{w(x,t,-\bar{k})}.
\end{equation}
Then next lemma shows that some estimates for $k\in\Gamma\backslash\Gamma_{\epsilon}$, $k\in\Gamma_{\epsilon}$ and $k\in\partial U_{\epsilon}(\pm k_0)$ respectively.
\begin{lemma}\label{lemwest}
    As $\tau\rightarrow\infty$, $w(x,t,k)$ admits some estimates as follows
   \begin{itemize}
    \item[(i)] For $k\in\Gamma\backslash\Gamma_{\epsilon}$, as $\tau\rightarrow\infty$
    \begin{subequations}\label{westlemest1}
        \begin{align}
            &\Vert w(x,t,\cdot) \Vert_{L^p(\Gamma\backslash\Gamma_{\epsilon})}=O(\epsilon^{\frac{1}{p}}\tau^{-1}), \quad p=1,2, \label{westlemest1a}\\
            &\Vert w(x,t,\cdot) \Vert_{L^\infty(\Gamma\backslash\Gamma_{\epsilon})}=O(\tau^{-1}). \label{westlemest1b}
        \end{align}
    \end{subequations}
    \item[(ii)] For $k\in\Gamma_{\epsilon}$, as $\tau\rightarrow\infty$,
        \begin{align}\label{westlemest2}
             w(x,t,\cdot)=O\left(\tau^{-\frac{\alpha}{2}+\vert \im\nu \vert}e^{(-\tau/24\eta^2)\vert k\pm k_0 \vert^2}\right).
        \end{align}
        where $\alpha\in(\lambda,1)$ with $\lambda:=\max(1/2, \ \underset{\xi<0, |\xi|=O(1)}{\rm sup}\ 2|\im\nu(k(\xi))|)$.
    \item[(iii)] For $k\in\partial U_{\epsilon}(-k_0)$, as $\tau\rightarrow\infty$
    \begin{align}\label{westlemest3}
        w(x,t,k)=(M^{r}_{-k_0})^{-1}(x,t,k)-I=\frac{\mathfrak{B}_0(\xi,t)}{\sqrt{\tau}(k+k_0)}+\hat{R}_{1}(\xi,t),
    \end{align}
    where
    \begin{subequations}
    \begin{align}\label{hatBexpression}
        &\mathfrak{B}_0(\xi,t)=-i\eta\begin{pmatrix}
            0 & \beta^r(\xi)e^{-t\varphi(\xi,0)}\tau^{-i\nu} \\
            \gamma^r(\xi)e^{t\varphi(\xi,0)}\tau^{i\nu} & 0
            \end{pmatrix}\\
        &\hat{R}_{1}(\xi,t)=\begin{pmatrix}
                O(\tau^{-1-\im \nu}) & O(\tau^{-1+\im \nu})
            \end{pmatrix}
    \end{align}
    \end{subequations}
   \end{itemize}
\end{lemma}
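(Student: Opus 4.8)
The plan is to estimate $w=J_E-I$ separately on the three pieces of $\Gamma_E=\Gamma\cup\partial U_\epsilon(-k_0)\cup\partial U_\epsilon(k_0)$, since on each of them $J_E$ has the explicit form prescribed by \eqref{jumpforE}. On all three pieces the two ingredients are the same: the exponential factor $e^{\pm 2it\theta}$, whose decay is read off from the signature table (Fig.~\ref{FigCasexi<0}) together with the local expansion $\theta(k,\xi)=8k_0^3-12k_0(k+k_0)^2+4(k+k_0)^3$ near $-k_0$ and its mirror image near $k_0$; and the amplitude, built from $r_1^r$, $r_2^r$, $\delta^{\pm2}$ and, on $\Gamma_\epsilon$ and $\partial U_\epsilon(\pm k_0)$, the conjugating matrix $\breve M^r_{\pm k_0}=\Xi\, m^{pc}_{\pm k_0}\,\Xi^{-1}$. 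Throughout one uses that, $u_0$ being a compact perturbation of the pure step, $r_1(k)$ and $r_2(k)$ are meromorphic with poles only at the zeros of $a_1,a_2$; that the replacement $r_j\mapsto r_j^r$ in \eqref{rtor^r} cancels the pole of $a_2$ at $k=0$ (Case II) and the trivial zero of $a_1$ at $k=0$ (Case I), so that $r_1^r,r_2^r$ are uniformly bounded, indeed Lipschitz, on a neighbourhood of $\Gamma$; and that by \eqref{deltahatchi} the factors $\delta^{\pm2}$ are bounded on any region staying a fixed distance away from $\pm k_0$, the singularities at $\pm k_0$ being square-integrable because $\im\nu(-k_0)\in(-\tfrac12,\tfrac12)$.

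The items (i) and (iii) are the mechanical ones. For (i), on $\Gamma\setminus\Gamma_\epsilon$ we have $J_E=\breve J^r$, so $w$ reduces to a single off-diagonal entry of the form $(\text{bounded})\cdot e^{\pm 2it\theta}$; on the part of the lens at distance $\geqslant\epsilon$ from the saddles the phase estimate gives $|e^{\pm 2it\theta}|\leqslant e^{-c\tau|k\mp k_0|^2}\leqslant e^{-c\epsilon^2\tau}$, and integrating this absolutely convergent bound along the rays $\Gamma_j,\Gamma_j^*$ (finite segments for $j=2,3$, and rays with cubically growing $-\re(\pm2it\theta)$ for $j=1,4$) yields at once \eqref{westlemest1a}--\eqref{westlemest1b}; the bound is in fact exponentially small in $\tau$, of which the stated $O(\epsilon^{1/p}\tau^{-1})$ and $O(\tau^{-1})$ are the weaker uniform forms. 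For (iii), on $\partial U_\epsilon(-k_0)$ one has $|k+k_0|=\epsilon$, hence $\zeta=\sqrt{-48tk_0}\,(k+k_0)$ is of order $\sqrt\tau$, and inserting the large-$\zeta$ expansion \eqref{mpc-k0biaodashi} into \eqref{LCPara-k_0} gives
\begin{equation*}
\bigl(\breve M^r_{-k_0}\bigr)^{-1}(x,t,k)-I=-\frac{i}{\zeta}\,\Xi\begin{pmatrix}0 & -\beta^r(\xi)\\ \gamma^r(\xi) & 0\end{pmatrix}\Xi^{-1}+O(\zeta^{-2}).
\end{equation*}
Since $\varphi(\xi,0)=16ik_0^3$ is purely imaginary, conjugation by $\Xi=e^{-\frac t2\varphi(\xi,0)\sigma_3}\tau^{-i\nu\sigma_3/2}$ multiplies the $(1,2)$ and $(2,1)$ entries by $e^{-t\varphi(\xi,0)}\tau^{-i\nu}$ and $e^{t\varphi(\xi,0)}\tau^{i\nu}$, of moduli $\tau^{-\im\nu}$ and $\tau^{\im\nu}$; using $\zeta=\tfrac{\sqrt\tau}{\eta}(k+k_0)$, i.e. $\tfrac1\zeta=\tfrac{\eta}{\sqrt\tau(k+k_0)}$, the leading term is exactly $\tfrac{1}{\sqrt\tau(k+k_0)}\mathfrak B_0(\xi,t)$ with $\mathfrak B_0$ as in \eqref{hatBexpression}, while the $O(\zeta^{-2})$ remainder carried through the conjugation is $\hat R_1(\xi,t)$ with the column estimates recorded there. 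The case $\partial U_\epsilon(k_0)$ follows from the symmetry $w(x,t,k)=\overline{w(x,t,-\bar k)}$ of \eqref{wsymmetry}.

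The heart of the lemma, and the main obstacle, is item (ii). On $\Gamma_\epsilon$ one has $J_E=\breve M^r_{\pm k_0}\,\breve J^r\,(\breve M^r_{\pm k_0})^{-1}$, and $m^{pc}_{\pm k_0}$ was constructed in \eqref{LCPara-k_0}--\eqref{betargammar} precisely so that its jump in the rescaled variable $\zeta$ matches $\breve J^r$ up to three approximation errors, each of size $O(|k\mp k_0|^{\mu})$ for any $\mu<1$: (a) differences of the type $r_j^r(k)e^{\mp2\chi(\xi,k)}-q_j^r(\mp k_0)$, controlled by the Hölder regularity of $r_j^r$ and $\chi$; (b) the cubic term $\tfrac{i\zeta^3}{12\rho}$ in $\varphi(\xi;\zeta)$ perturbing the exponent; (c) the difference between $\delta^{\pm2}(k,\xi)$ and the pure-power local model $\bigl(\tfrac{k\pm k_0}{k\mp k_0}\bigr)^{\pm2i\nu}$. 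Multiplying these by the Gaussian $|e^{\pm2it\theta}|\leqslant Ce^{-(\tau/24\eta^2)|k\mp k_0|^2}$, by the growth $|\zeta|^{\mp2\im\nu}$ produced by the power model of $\delta$, and by the $\tau^{\pm\im\nu}$ inflation coming from conjugation by $\Xi$, and converting $|k\mp k_0|=\tfrac{\eta}{\sqrt\tau}|\zeta|$, one is left with $|w|\leqslant C\,\tau^{-\mu/2+|\im\nu|}\,e^{-(\tau/24\eta^2)|k\mp k_0|^2}$; taking $\mu=\alpha\in(\lambda,1)$ with $\lambda=\max\!\bigl(1/2,\ \sup_{\xi<0,\,|\xi|=O(1)}2|\im\nu(k(\xi))|\bigr)$ records \eqref{westlemest2}. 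The delicate point is exactly this balancing: one must track simultaneously the Hölder-approximation error of $r_j^r$ near the saddles, the cubic phase correction, the non-integer powers hidden in $\delta$, and the $\tau^{\pm\im\nu}$ growth from $\Xi$, and the definition of $\lambda$ (hence of $\alpha$) is engineered so that the available Hölder exponent dominates the rate $2|\im\nu|$ uniformly in $\xi$ for $|\xi|=O(1)$, leaving a workable power of $\tau$ for the subsequent $L^2$-based error analysis.
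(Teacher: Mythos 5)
Your proposal is correct and takes essentially the same route as the paper: for items (i) and (ii) the paper simply cites Lemmas 2 and 4 of \cite{NMKdVzeroback}, and your sketches (exponential phase decay off $\Gamma_\epsilon$ from the signature table; local-model matching on $\Gamma_\epsilon$ balancing the H\"older error, the Gaussian factor $e^{-(\tau/24\eta^2)|k\mp k_0|^2}$, and the $\tau^{\pm\im\nu}$ inflation from $\Xi$) reproduce exactly the arguments those references contain, while your item (iii) is the same conjugated large-$\zeta$ expansion of $m^{pc}_{-k_0}$ that the paper writes out. One parenthetical slip: since $\tau>0$, the moduli are $|\tau^{-i\nu}|=\tau^{+\im\nu}$ and $|\tau^{i\nu}|=\tau^{-\im\nu}$ (the opposite of what you state), but this does not affect your argument because you defer to the column estimates of $\hat R_1$ as recorded in the lemma, which are consistent with the correct moduli.
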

\begin{proof}
    For \eqref{westlemest1} and \eqref{westlemest2}, we can  follow the proof of \cite[Lemma 2]{NMKdVzeroback} and \cite[Lemma 4]{NMKdVzeroback} respectively. Now we turn to
    the proof of \eqref{westlemest3}.

    Recall the $\Xi(\xi,t)$ defined by \eqref{Xibiaodashi}, we obtain that
    \begin{equation}\label{Xiasy}
        \Xi(\xi,t)=O\begin{pmatrix} \tau^{-\frac{\im\nu}{2}} & \tau^{\frac{\im\nu}{2}}\end{pmatrix}, \quad \tau\rightarrow\infty.
    \end{equation}
For $k\in\partial U_{\epsilon}(-k_0)$, as $\tau\rightarrow\infty$
    \begin{align}
        w&=\left(\breve{M}^{r}_{-k_0}\right)^{-1}(x,t,k)-I\nonumber\\
        &\overset{\eqref{LCPara-k_0}}{=}\Xi(\xi,t)\left(\left(m^{pc}_{-k_0}\right)^{-1}\left(\xi,\frac{\sqrt{\tau}}{\eta}\left(k+k_0\right)\right)-I\right)\Xi^{-1}(\xi,t)\nonumber\\
        &\overset{\eqref{mpc-k0biaodashi}}{=}\Xi(\xi,t)\left(I-\frac{i\eta}{\sqrt{\tau}(k+k_0)}\begin{pmatrix}
            0 & \beta^r(\xi) \\
            \gamma^r(\xi) & 0
            \end{pmatrix}+O(\tau^{-1})-I\right)\Xi^{-1}(\xi,t).
    \end{align}
    Then by \eqref{Xibiaodashi} and \eqref{Xiasy}, we arrive at \eqref{westlemest3}.
\end{proof}

Then we use Lemma \ref{lemwest} and closely follow the \cite[Lemma 2.6]{lenellsmkdv} or \cite[Eq.(99)]{NMKdVzeroback}, we have the proposition as follows
\begin{proposition}As $\tau\rightarrow\infty$,
    \begin{subequations}\label{westprop1}
        \begin{align}
            &\Vert w(x,t,\cdot) \Vert_{L^2(\Gamma_E)}=O\left(\epsilon^{\frac{1}{2}}\tau^{-\frac{\alpha}{2}+\vert \im \nu\vert}\right), \label{westprop1a}\\
            &\Vert w(x,t,\cdot) \Vert_{L^\infty(\Gamma_E)}=O\left(\tau^{-\frac{\alpha}{2}+\vert \im \nu\vert}\right), \label{westprop1b}\\
            &\Vert w(x,t,\cdot) \Vert_{L^p(\Gamma_\epsilon)}=O\left(\epsilon^{\frac{1}{p}}\tau^{-\frac{1}{2p}-\frac{\alpha}{2}+\vert \im \nu\vert}\right).\label{westprop1c}
        \end{align}
    \end{subequations}
where $p\in[1,\infty)$. Moreover, recall that $w^{(j)}$ denotes the $j$-th of matrix $w$, we have
\begin{subequations}\label{westprop2}
    \begin{align}
        &\Vert w^{(j)}(x,t,\cdot) \Vert_{L^2(\Gamma_E)}=O\left(\epsilon^{\frac{1}{2}}\tau^{-\frac{\alpha}{2}+(-1)^j\im\nu}\right), \label{westprop2a}\\
        &\Vert w^{(j)}(x,t,\cdot) \Vert_{L^\infty(\Gamma_E)}=O\left(\tau^{-\frac{\alpha}{2}+(-1)^j\im\nu}\right), \label{westprop2b}\\
        &\Vert w^{(j)}(x,t,\cdot) \Vert_{L^p(\Gamma_\epsilon)}=O\left(\epsilon^{\frac{1}{p}}\tau^{-\frac{1}{2p}-\frac{\alpha}{2}+(-1)^j\im\nu}\right).\label{westprop2c}
    \end{align}
\end{subequations}
for $j=1,2$.
\end{proposition}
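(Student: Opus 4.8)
The plan is to estimate $w(x,t,k)=J_E(x,t,k)-I$ separately on the four pieces of $\Gamma_E=(\Gamma\setminus\Gamma_\epsilon)\cup\Gamma_\epsilon\cup\partial U_\epsilon(-k_0)\cup\partial U_\epsilon(k_0)$ with Lemma \ref{lemwest}, and then to add the contributions. The symmetry \eqref{wsymmetry} guarantees that $\partial U_\epsilon(k_0)$ and the rays emanating from $k_0$ contribute exactly as much in every norm as their counterparts at $-k_0$, so it is enough to work on the side of $-k_0$. Two elementary facts will be used repeatedly: $\alpha/2-|\im\nu|>0$ by the choice $\alpha\in(\lambda,1)$ (which makes all the asserted bounds decaying), and $\alpha<1$, so that $\tau^{-1}$ and $\tau^{-1/2}$ are $o(\tau^{-\alpha/2})$ as $\tau\to\infty$.

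First I would dispatch $\Gamma\setminus\Gamma_\epsilon$: Lemma \ref{lemwest}(i) gives $\|w\|_{L^p(\Gamma\setminus\Gamma_\epsilon)}=O(\epsilon^{1/p}\tau^{-1})$ and $\|w\|_{L^\infty(\Gamma\setminus\Gamma_\epsilon)}=O(\tau^{-1})$, which, since $\alpha/2-|\im\nu|<1$, are negligible against the claimed $\tau^{-\alpha/2+|\im\nu|}$. Next, on $\Gamma_\epsilon$, Lemma \ref{lemwest}(ii) provides the pointwise bound $|w|=O\!\left(\tau^{-\alpha/2+|\im\nu|}e^{-(\tau/24\eta^2)|k\pm k_0|^2}\right)$; taking the supremum gives the $L^\infty(\Gamma_\epsilon)$ estimate, and parametrizing each ray of $\Gamma_\epsilon$ by its distance $s$ to $\pm k_0$ and rescaling $s\mapsto\sqrt\tau\,s$ shows $\int_{\Gamma_\epsilon}|w|^p|dk|=O\!\left(\tau^{-p\alpha/2+p|\im\nu|}\,\tau^{-1/2}\right)$, whence \eqref{westprop1c}. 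Finally, on $\partial U_\epsilon(-k_0)$ I invoke Lemma \ref{lemwest}(iii): there $|k+k_0|=\epsilon$, and since $\varphi(\xi,0)=16ik_0^3\in i\mathbb{R}$ gives $|e^{\pm t\varphi(\xi,0)}|=1$ while $|\tau^{\mp i\nu}|=\tau^{\pm\im\nu}$, the matrix $\mathfrak{B}_0$ of \eqref{hatBexpression} has entries of size $O(\eta\,\tau^{\pm\im\nu})$, so by \eqref{westlemest3} $w=\mathfrak{B}_0/(\sqrt\tau(k+k_0))+\hat R_1=O(\tau^{-1/2+|\im\nu|})$ on $\partial U_\epsilon(-k_0)$; multiplying by $|\partial U_\epsilon(-k_0)|^{1/2}=O(\epsilon^{1/2})$ gives the $L^2$ bound $O(\epsilon^{1/2}\tau^{-1/2+|\im\nu|})$, which is $o(\epsilon^{1/2}\tau^{-\alpha/2+|\im\nu|})$. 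Adding the four pieces produces \eqref{westprop1a} and \eqref{westprop1b} with the $\Gamma_\epsilon$-term dominant, and \eqref{westprop1c} is the $\Gamma_\epsilon$-estimate itself. This is the route of \cite[Lemma 2.6]{lenellsmkdv} and \cite[Eq.(99)]{NMKdVzeroback}.

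For the column-wise refinement \eqref{westprop2} I would rerun the same decomposition, keeping track of which column is being measured. On $\partial U_\epsilon(\pm k_0)$ this is immediate from \eqref{westlemest3}: $\mathfrak{B}_0$ is off-diagonal with $(\mathfrak{B}_0)_{12}=O(\tau^{\im\nu})$, $(\mathfrak{B}_0)_{21}=O(\tau^{-\im\nu})$, and the remainder $\hat R_1$ in \eqref{westlemest3} has first column $O(\tau^{-1-\im\nu})$ and second column $O(\tau^{-1+\im\nu})$, so $w^{(1)}=O(\tau^{-1/2-\im\nu})$ and $w^{(2)}=O(\tau^{-1/2+\im\nu})$, i.e. $w^{(j)}=O(\tau^{-\alpha/2+(-1)^j\im\nu})$ after using $\alpha<1$. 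On $\Gamma_\epsilon$ the same mechanism applies one level deeper: writing $w=\breve{M}^{r}_{\pm k_0}(\breve{J}^{r}-I)(\breve{M}^{r}_{\pm k_0})^{-1}$ with $\breve{M}^{r}_{\pm k_0}=\Xi\,m^{pc}_{\pm k_0}\,\Xi^{-1}$ as in \eqref{LCPara-k_0}, noting that $\breve{J}^{r}-I$ is triangular on each ray while $\Xi(\xi,t)$ has columns of size $O(\tau^{\mp\im\nu/2})$ (see \eqref{Xiasy}), one checks that the $j$-th column of the conjugated, oscillatory matrix carries precisely $\tau^{(-1)^j\im\nu}$ rather than the symmetric $\tau^{|\im\nu|}$, with the Gaussian $e^{-c\tau|k\pm k_0|^2}$ intact; integrating as before gives \eqref{westprop2c}. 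On $\Gamma\setminus\Gamma_\epsilon$ both columns are $O(\tau^{-1})$ and drop out. Summing yields \eqref{westprop2a} and \eqref{westprop2b}.

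I expect the only genuine obstacle to be this last column-tracking on $\Gamma_\epsilon$ (and the analogous inspection of $\hat R_1$ on $\partial U_\epsilon(\pm k_0)$): one must verify that conjugation by $\Xi$, together with the scalar factor $(1+r_1^r r_2^r)^{-1}$ present on $\Gamma_1\cup\Gamma_4$ and $\Gamma_1^*\cup\Gamma_4^*$ and the exact structure of the parabolic-cylinder parametrix $m^{pc}_{\pm k_0}$, does not mix the two columns in a way that would degrade $\tau^{(-1)^j\im\nu}$ back to $\tau^{|\im\nu|}$. Everything else reduces to bookkeeping of powers of $\tau$ and a single Gaussian integral, controlled by the inequalities $\alpha/2-|\im\nu|>0$ and $\alpha<1$.
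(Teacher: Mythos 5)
Your argument is correct and follows essentially the same route as the paper, whose own proof is a one-line appeal to Lemma \ref{lemwest} together with \cite[Lemma 2.6]{lenellsmkdv} and \cite[Eq.(99)]{NMKdVzeroback}: piecewise estimates on $\Gamma\setminus\Gamma_\epsilon$, on $\Gamma_\epsilon$ (via the Gaussian rescaling $s\mapsto\sqrt{\tau}\,s$), and on $\partial U_{\epsilon}(\pm k_0)$ (via \eqref{westlemest3}), followed by tracking the powers $\tau^{\pm\im\nu}$ column by column through the conjugation by $\Xi$. You supply more detail than the paper does and correctly isolate the column-tracking on $\Gamma_\epsilon$ as the one step requiring genuine verification.
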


Define the Cauchy-type operator $C_{w}:L^2(\Gamma_E)+L^{\infty}(\Gamma_E)\rightarrow L^2(\Gamma_E)$ by $C_{w}f=C_{-}(fw)$, where $(C_{-}f)(k)$, $k\in\Gamma_E$ are the negative (according
to the orientation of $\Gamma_E$) non-tangential boundary values of
\begin{equation}
    \left(Cf\right)(k'):=\frac{1}{2\pi i}\int_{\Gamma_E}\frac{f(s)}{s-k'}ds, \quad k'\in\mathbb{C}\backslash\Gamma_E.
\end{equation}
Owe to $C_{-}$ is an operator of $L^2(\Gamma_E)\rightarrow L^2(\Gamma_E)$, we obtain that
\begin{align}
    \Vert C_{w} \Vert\leqslant Const.\Vert w\Vert_{L^{\infty}(\Gamma_E)}\overset{\eqref{westprop1b}}{=}O\left(\tau^{-\frac{\alpha}{2}+\vert \im \nu\vert}\right), \quad \tau\rightarrow\infty.
\end{align}
Then, by $\alpha\in(\lambda,1)$, $\Vert C_{w} \Vert$ decays to zero as $\tau\rightarrow\infty$. Hence we conclude that $I-C_w$ is invertible for large $\tau$.

By classical Beals-Cofiman theory \cite{BCdecom}, the $E(x,t,k)$ could be represented in terms of a singular integral equation by $w:=J_E-I$ and the standard normalization condition
$E(x,t,k)\rightarrow I$ as $k\rightarrow\infty$, i.e.,
\begin{equation}\label{Efredholmexp}
    E(x,t,k)=I+C(\mu w)=I+\frac{1}{2\pi i}\int_{\Gamma_E}\mu(x,t,s)w(x,t,s)\frac{ds}{s-k},
\end{equation}
where $\mu$ is the solution of the Fredholm-type equation $(I-C_{w})\mu=I$. Moreover, by \eqref{westprop1a} and $\mu-I=(I-C_w)^{-1}C_wI$, we have
\begin{equation}\label{mu-Iest}
    \Vert \mu-I \Vert_{L^2(\Gamma_E)}=O\left(\epsilon^{\frac{1}{2}}\tau^{-\frac{\alpha}{2}+\vert \im \nu\vert}\right), \quad {\rm} \quad \tau\rightarrow\infty.
\end{equation}

\begin{lemma}\label{lemmak(E-I)}
    As $\tau\rightarrow\infty$,
    \begin{equation}
        \lim_{k\rightarrow\infty}k(E(x,t,k)-I)=\mathfrak{B}^r(\xi,t)-\overline{\mathfrak{B}^r(\xi,t)}+R(\xi,t).
    \end{equation}
    where
    \begin{align}\label{mathfrakB^rdefine}
        \mathfrak{B}^r(\xi,t)=\begin{pmatrix}
            0 & i\eta\beta^r(\xi)e^{-t\varphi(\xi,0)}\tau^{-i\nu-\frac{1}{2}} \\
            -i\eta\gamma^r(\xi)e^{t\varphi(\xi,0)}\tau^{i\nu-\frac{1}{2}} & 0
            \end{pmatrix},
    \end{align}
    and $R(\xi,t)=\hat{R}_1(\xi,t)+\hat{R}_2(\xi,t)+\hat{R}_3(\xi,t)$,
    \begin{subequations}
        \begin{align}
            &\hat{R}_{1}(\xi,t)=\begin{pmatrix}
                O(\tau^{-1-\im \nu}) & O(\tau^{-1+\im \nu})
            \end{pmatrix},  &\tau\rightarrow\infty,\\
            &\hat{R}_{2}(\xi,t)=\begin{pmatrix}O\left(\epsilon\tau^{-\frac{1+\alpha}{2}+\vert\im\nu\vert-\im\nu}\right) & O\left(\epsilon\tau^{-\frac{1+\alpha}{2}+\vert\im\nu\vert+\im\nu}\right)\end{pmatrix}, & \tau\rightarrow\infty,\\
            &\hat{R}_{3}(\xi,t)=\begin{pmatrix}O\left(\epsilon\tau^{-\frac{1+\alpha}{2}+\vert\im\nu\vert-\im\nu}\right) & O\left(\epsilon\tau^{-\frac{1+\alpha}{2}+\vert\im\nu\vert+\im\nu}\right)\end{pmatrix}, & \tau\rightarrow\infty.
        \end{align}
    \end{subequations}
    Moreover, recall $\alpha<1$, we have
    \begin{align}
        R(\xi,t):=\begin{pmatrix}O\left(\epsilon\tau^{-\frac{1+\alpha}{2}+\vert\im\nu\vert-\im\nu}\right) & O\left(\epsilon\tau^{-\frac{1+\alpha}{2}+\vert\im\nu\vert+\im\nu}\right)\end{pmatrix}, \ \tau\rightarrow\infty.
    \end{align}
In detail, we set $R(\xi,t)=\begin{pmatrix}
    R_1(\xi,t) & R_2(\xi,t)\\
    R_1(\xi,t) & R_2(\xi,t)
\end{pmatrix}$,
and
\begin{subequations}
    \begin{align}
        &R_1(\xi,t)=\left\{
            \begin{aligned}
            &O(\epsilon\tau^{-\frac{1+\alpha}{2}}), &\im\nu\geqslant 0 \\
            &O(\epsilon\tau^{-\frac{1+\alpha}{2}+2\vert\im\nu\vert}), &\im\nu<0
            \end{aligned}
            \right. \\
        &R_2(\xi,t)=\left\{
            \begin{aligned}
            &O(\epsilon\tau^{-\frac{1+\alpha}{2}+2\vert\im\nu\vert}), &\im\nu\geqslant 0 \\
            &O(\epsilon\tau^{-\frac{1+\alpha}{2}}), &\im\nu<0
            \end{aligned}
            \right.
    \end{align}
\end{subequations}
which also hold true for $\hat{R}_2^{(j)}(\xi,t)$ and $\hat{R}_3^{(j)}(\xi,t)$, $j=1,2$.
\end{lemma}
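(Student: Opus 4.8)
The plan is to run the classical Beals--Coifman argument starting from the integral representation \eqref{Efredholmexp} of $E$, extract the coefficient of $k^{-1}$ as $k\to\infty$, and then split the resulting contour integral over $\Gamma_E$ into the pieces coming from the two circles $\partial U_\epsilon(\pm k_0)$, from $\Gamma_\epsilon$, from $\Gamma\setminus\Gamma_\epsilon$, and from the Neumann correction $\mu-I$. Concretely, from \eqref{Efredholmexp} and $\frac{1}{s-k}=-\frac{1}{k}+O(k^{-2})$ one gets $\lim_{k\to\infty}k(E(x,t,k)-I)=-\frac{1}{2\pi i}\int_{\Gamma_E}\mu(x,t,s)w(x,t,s)\,ds$, after which I would write $\mu w=w+(\mu-I)w$ and decompose $\Gamma_E=\partial U_\epsilon(-k_0)\cup\partial U_\epsilon(k_0)\cup\Gamma_\epsilon\cup(\Gamma\setminus\Gamma_\epsilon)$. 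As a consistency check, \eqref{breveMsym} together with the symmetric definition of the local parametrices gives $E(x,t,k)=\overline{E(x,t,-\bar k)}$, which already forces the limit to have the shape $\mathfrak B-\overline{\mathfrak B}$ plus lower-order terms.

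The leading term comes from the two circles. On $\partial U_\epsilon(-k_0)$ I would insert the expansion \eqref{westlemest3}, $w=\frac{\mathfrak B_0(\xi,t)}{\sqrt\tau(k+k_0)}+\hat R_1(\xi,t)$; since $-k_0$ is the only singularity of the first term inside the circle, a residue computation gives $-\frac{1}{2\pi i}\int_{\partial U_\epsilon(-k_0)}\frac{\mathfrak B_0(\xi,t)}{\sqrt\tau(k+k_0)}\,dk=\mathfrak B^r(\xi,t)$ by the definitions \eqref{hatBexpression} and \eqref{mathfrakB^rdefine}, while the integral of $\hat R_1$ over a circle of fixed radius contributes a term of the order recorded for $\hat R_1$, namely $O(\tau^{-1\mp\im\nu})$ entrywise. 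For the circle at $k_0$ I would use the symmetry \eqref{wsymmetry}, $w(x,t,k)=\overline{w(x,t,-\bar k)}$: the reflection $k\mapsto-\bar k$ carries $\partial U_\epsilon(-k_0)$ onto $\partial U_\epsilon(k_0)$, reverses its orientation and conjugates $w$, so that $-\frac{1}{2\pi i}\int_{\partial U_\epsilon(k_0)}w\,dk=-\overline{\mathfrak B^r(\xi,t)}$. Adding the two circle contributions gives exactly $\mathfrak B^r(\xi,t)-\overline{\mathfrak B^r(\xi,t)}$.

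It remains to show that everything else is absorbed in $R=\hat R_1+\hat R_2+\hat R_3$ with the claimed order. The piece $\hat R_2=-\frac{1}{2\pi i}\bigl(\int_{\Gamma\setminus\Gamma_\epsilon}+\int_{\Gamma_\epsilon}\bigr)w\,ds$ is bounded entrywise by the $L^1$-estimate \eqref{westlemest1a}, giving $O(\epsilon\tau^{-1})$ off the saddle discs, and by the column-wise estimate \eqref{westprop2c} with $p=1$, giving $O(\epsilon\tau^{-1/2-\alpha/2+(-1)^j\im\nu})$ in column $j$ on $\Gamma_\epsilon$; since $\alpha<1$ both are dominated by $O(\epsilon\tau^{-(1+\alpha)/2+|\im\nu|+(-1)^j\im\nu})$. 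The piece $\hat R_3=-\frac{1}{2\pi i}\int_{\Gamma_E}(\mu-I)w\,ds$ I would bound by Cauchy--Schwarz, using $\|\mu-I\|_{L^2(\Gamma_E)}=O(\epsilon^{1/2}\tau^{-\alpha/2+|\im\nu|})$ from \eqref{mu-Iest} against the column $L^2$-bounds \eqref{westprop2a}, splitting $\Gamma_E$ once more: on $\partial U_\epsilon(\pm k_0)$ the factor $w$ is $O(\tau^{-1/2})$ pointwise because of the $1/\zeta$ term in \eqref{mpc-k0biaodashi}, on $\Gamma_\epsilon$ the exponential factor yields an extra $\tau^{-1/4}$ in the $L^2$-norm, and on $\Gamma\setminus\Gamma_\epsilon$ one has the factor $\tau^{-1}$; with $\alpha>\lambda\ge\frac12$ each contribution is $O(\epsilon\tau^{-(1+\alpha)/2+|\im\nu|+(-1)^j\im\nu})$ in column $j$. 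Collecting the circle residues and the three error pieces gives $\lim_{k\to\infty}k(E-I)=\mathfrak B^r-\overline{\mathfrak B^r}+R$; finally, distinguishing $\im\nu\ge0$ from $\im\nu<0$ in the exponent $|\im\nu|+(-1)^j\im\nu$ turns the column orders into the stated component orders $R_1,R_2$, and likewise for the individual $\hat R_2^{(j)},\hat R_3^{(j)}$.

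The main obstacle is the $\tau$-power bookkeeping in $\hat R_3$: a naive global Cauchy--Schwarz only produces $O(\epsilon\tau^{-\alpha+\cdots})$, which is weaker than the required $\tau^{-(1+\alpha)/2}$ precisely because $\alpha<1$, so one is forced to exploit the region-by-region decay of $w$ --- the $\tau^{-1/2}$ on the circles, the Gaussian decay on $\Gamma_\epsilon$ --- together with the admissible range $\alpha\in(\lambda,1)$, $\lambda\ge\frac12$, which is exactly what makes $-\frac14-\alpha<-\frac{1+\alpha}{2}$. A secondary delicate point is the sign and orientation accounting on the two saddle discs: one must track how $\Xi$, the residue orientation, and the reflection symmetry \eqref{wsymmetry} interact so that the two disc contributions assemble into $\mathfrak B^r-\overline{\mathfrak B^r}$ and not some other combination.
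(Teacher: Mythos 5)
Your proposal is correct and follows essentially the same route as the paper: the Beals--Coifman representation \eqref{Efredholmexp}, extraction of the $k^{-1}$ coefficient, residue computation of the leading term from the expansion \eqref{westlemest3} on $\partial U_{\epsilon}(-k_0)$ together with the symmetry \eqref{wsymmetry} for the second disc, and $L^1$/$L^2$ estimates of $w$ and $\mu-I$ for the remainder. The only difference is a harmless relabeling of the error pieces (the paper puts $(\mu-I)w$ on the circles into $\hat R_2$ and all of $\int_{\Gamma}\mu w$ into $\hat R_3$, whereas you split into $\int_{\Gamma}w$ and $\int_{\Gamma_E}(\mu-I)w$), which yields the same total $R$ with the same orders.
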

\begin{proof} Start from \eqref{Efredholmexp}
    \begin{align}\label{k(E-I)firststep}
    &\lim_{k\rightarrow\infty}k(E(x,t,k)-I)=-\frac{1}{2\pi i}\int_{\Gamma_E}\mu(x,t,s)w(x,t,s)ds \nonumber\\
    &=-\frac{1}{2\pi i}\left(\oint_{|s+k_0|=\epsilon}+\oint_{|s-k_0|=\epsilon}\right)\mu(x,t,s)w(x,t,s)ds-\frac{1}{2\pi i}\int_{\Gamma}\mu(x,t,s)w(x,t,s)ds
    \end{align}

Secondly, for the first term of \eqref{k(E-I)firststep}, as $\tau\rightarrow\infty$,
\begin{align}
    &\oint_{|s+k_0|=\epsilon}\mu(x,t,s)w(x,t,s)ds=\oint_{|s+k_0|=\epsilon}w(x,t,s)ds+\oint_{|s+k_0|=\epsilon}(\mu(x,t,s)-I)w(x,t,s)ds \nonumber\\
    &\overset{\eqref{westlemest3}}{=}\frac{\mathfrak{B}_0(\xi,t)}{\sqrt{\tau}}\oint_{|s+k_0|=\epsilon}\frac{1}{s+k_0}ds+\hat{R}_1(\xi,t)+\hat{R}_2(\xi,t)\nonumber\\
    &=-2\pi i\mathfrak{B}^r(\xi,t)+\hat{R}_1(\xi,t)+\hat{R}_2(\xi,t).
\end{align}
and
\begin{align}
    &\oint_{|s-k_0|=\epsilon}\mu(x,t,s)w(x,t,s)ds=\oint_{|s-k_0|=\epsilon}w(x,t,s)ds+\oint_{|s-k_0|=\epsilon}(\mu(x,t,s)-I)w(x,t,s)ds \nonumber\\
    &=\oint_{|s-k_0|=\epsilon}\overline{w(x,t,-\bar{s})}ds+\oint_{|s-k_0|=\epsilon}(\mu(x,t,s)-I)w(x,t,s)ds \nonumber\\
    &=\overline{\oint_{|s+k_0|=\epsilon}w(x,t,s)ds}+\oint_{|s-k_0|=\epsilon}(\mu(x,t,s)-I)w(x,t,s)ds \nonumber\\
    &=\overline{\frac{\mathfrak{B}_0(\xi,t)}{\sqrt{\tau}}\oint_{|s+k_0|=\epsilon}\frac{1}{s+k_0}ds}+\hat{R}_1(\xi,t)+\hat{R}_2(\xi,t)\nonumber\\
    &=2\pi i\overline{\mathfrak{B}^r(\xi,t)}+\hat{R}_1(\xi,t)+\hat{R}_2(\xi,t).
\end{align}
where
\begin{align*}
    \mathfrak{B}^r(\xi,t)=-\frac{\mathfrak{B_0}}{\sqrt{\tau}}=\begin{pmatrix}
        0 & -i\eta\beta^r(\xi)e^{-t\varphi(\xi,0)}\tau^{-i\nu-\frac{1}{2}} \\
        i\eta\gamma^r(\xi)e^{t\varphi(\xi,0)}\tau^{i\nu-\frac{1}{2}} & 0
        \end{pmatrix}
\end{align*}
and
\begin{align}
    \hat{R}_2(\xi,t)&=\Vert \mu-I \Vert_{L^2(\partial U_{\epsilon}(-k_0))}O\left((\mathfrak{B}^r(\xi,t))\right)\nonumber\\
    &=O\left(\epsilon^{\frac{1}{2}}\tau^{-\frac{\alpha}{2}+\vert \im \nu\vert}\right)
    \cdot O\begin{pmatrix}\epsilon^{\frac{1}{2}}\tau^{-\frac{1}{2}-\im\nu} & \epsilon^{\frac{1}{2}}\tau^{-\frac{1}{2}+\im\nu}\end{pmatrix}\nonumber\\
    &=O\begin{pmatrix}\epsilon\tau^{-\frac{1+\alpha}{2}+\vert\im\nu\vert-\im\nu} & \epsilon\tau^{-\frac{1+\alpha}{2}+\vert\im\nu\vert+\im\nu}\end{pmatrix}
\end{align}

Thirdly, for the $j$-th column of second term of \eqref{k(E-I)firststep}, for instance, we consider the $2$-nd column
\begin{align}
    \left\vert\left(\int_{\Gamma}\mu(x,t,s)w(x,t,s)\right)^{(2)}ds\right\vert&\leqslant\left\vert\int_{\Gamma}\left(\mu(x,t,s)-I\right)w^{(2)}(x,t,s)ds\right\vert+\left\vert\int_{\Gamma}w^{(2)}(x,t,s)ds \right\vert\nonumber\\
    &\leqslant \Vert \mu-I \Vert_{L^2(\Gamma)}\Vert w^{(2)} \Vert_{L^2(\Gamma)}+\Vert w^{(2)} \Vert_{L^1(\Gamma)}.
\end{align}
Recall \eqref{westlemest1a}, \eqref{westprop2c}, we have
\begin{subequations}
    \begin{align}
        &\Vert w^{(2)} \Vert_{L^1(\Gamma)}=O\left(\epsilon\tau^{-1}+\epsilon\tau^{-\frac{1+\alpha}{2}+\im\nu}\right),\\
        &\Vert w^{(2)} \Vert_{L^2(\Gamma)}=O\left(\epsilon^{\frac{1}{2}}\tau^{-1}+\epsilon^{\frac{1}{2}}\tau^{-\frac{1}{4}-\frac{\alpha}{2}+\im\nu}\right),
    \end{align}
\end{subequations}
then take into account \eqref{mu-Iest} and obtain
\begin{subequations}
\begin{align}
\left\vert\int_{\Gamma}\left(\mu(x,t,s)w(x,t,s)\right)^{(2)}ds\right\vert=O\left(\epsilon\tau^{-\frac{1+\alpha}{2}+\vert \im\nu\vert+\im\nu}\right).
\end{align}
Similarly, we have
\begin{align}
    \left\vert\int_{\Gamma}\left(\mu(x,t,s)w(x,t,s)\right)^{(1)}ds\right\vert=O\left(\epsilon\tau^{-\frac{1+\alpha}{2}+\vert \im\nu\vert-\im\nu}\right).
\end{align}
\end{subequations}
Then we make
\begin{align}\label{hatR3inproof}
\int_{\Gamma}\mu(x,t,s)w(x,t,s)ds=\begin{pmatrix}O\left(\epsilon\tau^{-\frac{1+\alpha}{2}+\vert\im\nu\vert-\im\nu}\right) & O\left(\epsilon\tau^{-\frac{1+\alpha}{2}+\vert\im\nu\vert+\im\nu}\right)\end{pmatrix}
=:\hat{R}_3(\xi,t), \quad \tau\rightarrow\infty.
\end{align}
Summarize the \eqref{k(E-I)firststep} to \eqref{hatR3inproof}, we complete the proof of this lemma.
\end{proof}

\subsection{Long-time asymptotics}
To recover the potential $u(x,t)$, we mainly refer \cite[pp.722]{RDJDE2021}. From Lemma \ref{lemmak(E-I)}, we see that the main term in the large-negative-$t$ development of $E$ in \eqref{Efredholmexp}
is given by the the integral along the circle $|s+k_0|=\epsilon$ and $|s-k_0|=\epsilon$, which in turn gives
\begin{equation}
    E(x,t,k)=I-\frac{1}{2\pi i}\oint_{|s+k_0|=\epsilon}\frac{\mathfrak{B}^r(\xi,t)}{(s+k_0)(s-k)}ds+\frac{1}{2\pi i}\oint_{|s-k_0|=\epsilon}\frac{\overline{\mathfrak{B}^r(\xi,t)}}{(s-k_0)(s-k)}ds+R(\xi,t),
    \quad |k\pm k_0|>\epsilon,
\end{equation}
By \eqref{defineE}, $E(x,t,k)=\breve{M}^{r}(x,t,k)$ for all $k:\vert k\pm k_0\vert>\epsilon$, thus we have
\begin{equation}\label{k(breveMr-I)}
    \lim_{k\rightarrow\infty}k\left(\breve{M}^r(x,t,k)-I\right)=\lim_{k\rightarrow\infty}k(E(x,t,k)-I)=\mathfrak{B}^r(\xi,t)-\overline{\mathfrak{B}^r(\xi,t)}+R(\xi,t).
\end{equation}
By residue theorem, we additionally have
\begin{subequations}\label{breveMdeyigebiaoshi}
    \begin{align}
        &\breve{M}^r(x,t,0)=I+\frac{\mathfrak{B}^r(\xi,t)}{k_0}+\frac{\overline{\mathfrak{B}^r(\xi,t)}}{k_0}+R(\xi,t), \\
        &\breve{M}^r(x,t,i\kappa)=I+\frac{\mathfrak{B}^r(\xi,t)}{k_0+i\kappa}+\frac{\overline{\mathfrak{B}^r(\xi,t)}}{k_0-i\kappa}+R(\xi,t).
    \end{align}
\end{subequations}
Now we aim at evaluating $P_{12}(x,t)$ as well as $P_{21}(x,t)$ appeared in \eqref{uintermofregularrhp}. Firstly, we consider $g(x,t)=(g_1(x,t),g_2(x,t))^{\rm T}$ and
$h(x,t)=(h_1(x,t), h_2(x,t))^{\rm T}$ defined by \eqref{gandhdefine}, and using \eqref{breveMdeyigebiaoshi}, we obtain that
\begin{subequations}\label{evaluategandh}
    \begin{align}
    \left\{
        \begin{aligned}
        &g_1(x,t)=i\kappa+R_1(\xi,t),\\
        &g_2(x,t)=i\kappa\left(\frac{\mathfrak{B}^r_{21}(\xi,t)}{k_0+i\kappa}+\frac{\overline{\mathfrak{B}^r_{21}(\xi,t)}}{k_0-i\kappa}\right)+R_1(\xi,t),
        \end{aligned}
        \right.
    \end{align}
and
    \begin{align}
        \left\{
        \begin{aligned}
        &h_1(x,t)=c_0(\xi)+\frac{i\kappa}{k_0}\left(\mathfrak{B}_{12}^r(\xi,t)+\overline{\mathfrak{B}_{12}^r(\xi,t)}\right)+R_3(\xi,t),\\
        &h_2(x,t)=i\kappa+\frac{c_0(\xi)}{k_0}\left(\mathfrak{B}_{21}^r(\xi,t)+\overline{\mathfrak{B}_{21}^r(\xi,t)}\right)+R_3(\xi,t),
        \end{aligned}
        \right.
    \end{align}
\end{subequations}
where we have used the facts that $\mathfrak{B}_{11}^r(\xi,t)=\mathfrak{B}_{22}^r(\xi,t)=0$,  algebraic decay term taking up more dominant than exponential decay term when $\tau\rightarrow\infty$,
as well as $R_3(\xi,t)=R_1(\xi,t)+R_2(\xi,t)$ with
\begin{equation}
    R_3(\xi,t)=\left\{
            \begin{aligned}
            &O(\epsilon\tau^{-\frac{1+\alpha}{2}}), &\im\nu=0\\
            &O(\epsilon\tau^{-\frac{1+\alpha}{2}+2\vert\im\nu\vert}), &\im\nu\neq 0
            \end{aligned}
            \right.
\end{equation}
Furthermore, we obtain that
\begin{subequations}\label{gihj}
    \begin{align}
        &g_1h_1=i\kappa c_0(\xi)-\frac{\kappa^2}{k_0}\left(\mathfrak{B}_{12}^r(\xi,t)+\overline{\mathfrak{B}_{12}^r(\xi,t)}\right)+R_3(\xi,t),\\
        &g_1h_2=-\kappa^2+\frac{i\kappa c_0(\xi)}{k_0}\left(\mathfrak{B}_{12}^r(\xi,t)+\overline{\mathfrak{B}_{12}^r(\xi,t)}\right)+R_3(\xi,t),\\
        &g_2h_1=i\kappa c_0(\xi)\left(\frac{\mathfrak{B}^r_{21}(\xi,t)}{k_0+i\kappa}+\frac{\overline{\mathfrak{B}^r_{21}(\xi,t)}}{k_0-i\kappa}\right)+R_1(\xi,t),\\
        &g_2h_2=-\kappa^2\left(\frac{\mathfrak{B}^r_{21}(\xi,t)}{k_0+i\kappa}+\frac{\overline{\mathfrak{B}^r_{21}(\xi,t)}}{k_0-i\kappa}\right)+R_1(\xi,t).
    \end{align}
\end{subequations}
Substitute \eqref{gihj} into \eqref{P12andP21}, we have that
\begin{subequations}\label{P12andP21evaluate}
\begin{align}
&P_{12}(x,t)=-\frac{ic_0}{\kappa}+\frac{1}{k_0}\left(\mathfrak{B}_{12}^r(\xi,t)+\overline{\mathfrak{B}_{12}^r(\xi,t)}\right)+\frac{ic_0^2(\xi)}{k_0\kappa}\left(\frac{\mathfrak{B}^r_{21}(\xi,t)}{k_0+i\kappa}+\frac{\overline{\mathfrak{B}^r_{21}(\xi,t)}}{k_0-i\kappa}\right)+R_3(\xi,t),\\
&P_{21}(x,t)=-\left(\frac{\mathfrak{B}^r_{21}(\xi,t)}{k_0+i\kappa}+\frac{\overline{\mathfrak{B}^r_{21}(\xi,t)}}{k_0-i\kappa}\right)+R_1(\xi,t).
\end{align}
\end{subequations}

We can see that \eqref{P12andP21evaluate} involve $\kappa$, then we should introduce $\mathfrak{B}$ which will make those terms which are dependent on $\kappa$ vanish
in the main asymptotic terms. Introduce
\begin{equation}
    \mathfrak{B}(\xi,t)=\begin{pmatrix}
        0 & -i\eta\beta(\xi)e^{-t\varphi(\xi,0)}\tau^{-i\nu-\frac{1}{2}} \\
        i\eta\gamma(\xi)e^{t\varphi(\xi,0)}\tau^{i\nu-\frac{1}{2}} & 0
        \end{pmatrix},
\end{equation}
and we can see that $\mathfrak{B}(\xi,t)$ is defined similarly to $\mathfrak{B}^r(\xi,t)$, see \eqref{mathfrakB^rdefine}, with
$q^r_j(-k_0)$ is replaced by $q_j(-k_0)$ (i.e.,$r^r_j(-k_0)$ is replaced by $r_j(-k_0)$). In detail,
\begin{subequations}\label{betaandgammaevaluate}
    \begin{align}
        &\beta(\xi)=\frac{\sqrt{2\pi}e^{\frac{i\pi}{4}}e^{-\frac{\pi \nu(-k_0)}{2}}}{{q}_{1}(-k_0)\Gamma(-i\nu(-k_0))}, \quad q_1(-k_0)=e^{-2\chi(\xi,-k_0)}r_1(-k_0)e^{2i\nu\log 4}, \\
        &\gamma(\xi)=\frac{\sqrt{2\pi}e^{-\frac{i\pi}{4}}e^{-\frac{\pi \nu(-k_0)}{2}}}{{q}_{2}(-k_0)\Gamma(i\nu(-k_0))},\quad q_2(-k_0)=e^{2\chi(\xi,-k_0)}r_2(-k_0)e^{-2i\nu\log 4}.
    \end{align}
\end{subequations}
Compare to \eqref{betargammar}, and take into account \eqref{rtor^r}, we obtain that
\begin{align}
    \beta^r(\xi)=\frac{k_0}{k_0+i\kappa}\beta(\xi), \quad \gamma^r(\xi)=\frac{k_0+i\kappa}{k_0}\gamma(\xi),
\end{align}
which imply that
\begin{align}\label{mathfrakBremovekappa}
    \mathfrak{B}^r_{12}(\xi,t)=\frac{k_0}{k_0+i\kappa}\mathfrak{B}_{12}(\xi,t), \quad \mathfrak{B}^r_{21}(\xi,t)=\frac{k_0+i\kappa}{k_0}\mathfrak{B}_{21}(\xi,t).
\end{align}
Substitute \eqref{P12andP21evaluate}, \eqref{mathfrakBremovekappa} and \eqref{k(breveMr-I)} into \eqref{uintermofregularrhp}, we finally obtain the asymptotics
which described by Theorem \ref{mainthm1}.

\section{Asymptotic behavior for $\xi:=x/(12t)>0$}\label{sectionAANO2}
In this section, we investigate the long-time asymptotics under the condition $\xi>0$, and the signature
table of this case corresponds to Fig \ref{FigCasexi>0} with
\begin{equation}\label{imthetaforsectionAANO2}
    \im\theta(k,\xi)=4\im k\left(3\left(\re k\right)^2-\left(\im k\right)^2+3\xi\right).
\end{equation}
Comparing to Section \ref{sectionAANO1}, we point out that the analysis for $t\rightarrow+\infty$
is more convenient to deal with the singularities of RH problem at $k=0$ in this case of $\xi>0$.

\subsection{Asymptotic analysis for $0<\xi<\kappa^2$}\label{subsectionAANO2middle+upkappa}
In this case, $i\sqrt{\xi}<i\sqrt{3\xi}<i\kappa$ for $\xi\in(0,\frac{\kappa^2}{3})$ and $i\sqrt{\xi}<i\kappa<i\sqrt{3\xi}$ for $\xi\in(\frac{\kappa^2}{3},\kappa^2)$.
Our technique is to deform the information of jump contour to other contours. Define two contours $\Gamma_1=\left\{k: k=k_1+i\sqrt{\xi}\right\}$, $\Gamma_1^*=\left\{k: k=k_1-i\sqrt{\xi}\right\}$,
which is parallel to $\mathbb{R}$ and the following domains $U_1=\{k: 0<\im k<i\sqrt{\xi}\}$, $U_1^{*}=\{k: -i\sqrt{\xi}<\im k<0\}$ as Fig \ref{FigCase>0kappaupdeform}, Fig \ref{FigCase>0kappamiddledeform}.
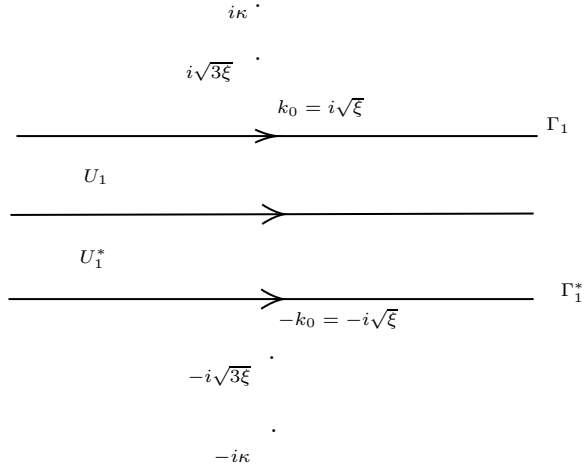
\begin{figure}[htbp]
\begin{center}
    \tikzset{every picture/.style={line width=0.75pt}} 
    \begin{tikzpicture}[x=0.75pt,y=0.75pt,yscale=-1,xscale=1]
    \draw    (199,152) -- (460.01,151.3) ;
    \draw [shift={(335.5,151.63)}, rotate = 179.85] [color={rgb, 255:red, 0; green, 0; blue, 0 }  ][line width=0.75]    (10.93,-4.9) .. controls (6.95,-2.3) and (3.31,-0.67) .. (0,0) .. controls (3.31,0.67) and (6.95,2.3) .. (10.93,4.9)   ;
    \draw    (202,112.33) -- (462,112.33) ;
    \draw [shift={(332,112.33)}, rotate = 180] [color={rgb, 255:red, 0; green, 0; blue, 0 }  ][line width=0.75]    (10.93,-3.29) .. controls (6.95,-1.4) and (3.31,-0.3) .. (0,0) .. controls (3.31,0.3) and (6.95,1.4) .. (10.93,3.29)   ;
    \draw    (198,194.33) -- (460,194.33) ;
    \draw [shift={(335,194.33)}, rotate = 180] [color={rgb, 255:red, 0; green, 0; blue, 0 }  ][line width=0.75]    (10.93,-4.9) .. controls (6.95,-2.3) and (3.31,-0.67) .. (0,0) .. controls (3.31,0.67) and (6.95,2.3) .. (10.93,4.9)   ;
    \draw (330.62,91.48) node [anchor=north west][inner sep=0.75pt]  [font=\scriptsize]  {$k_{0} =i\sqrt{\xi }$};
    \draw (331,197.73) node [anchor=north west][inner sep=0.75pt]  [font=\scriptsize]  {$-k_{0} =-i\sqrt{\xi }$};
    \draw (319,71.4) node [anchor=north west][inner sep=0.75pt]    {$.$};
    \draw (306,45.4) node [anchor=north west][inner sep=0.75pt]  [font=\scriptsize]  {$i\kappa $};
    \draw (326,221.4) node [anchor=north west][inner sep=0.75pt]    {$.$};
    \draw (299,268.4) node [anchor=north west][inner sep=0.75pt]  [font=\scriptsize]  {$-i\kappa $};
    \draw (285,73.4) node [anchor=north west][inner sep=0.75pt]  [font=\scriptsize]  {$i\sqrt{3\xi }$};
    \draw (319,44.4) node [anchor=north west][inner sep=0.75pt]    {$.$};
    \draw (286,226.4) node [anchor=north west][inner sep=0.75pt]  [font=\scriptsize]  {$-i\sqrt{3\xi }$};
    \draw (327,258.4) node [anchor=north west][inner sep=0.75pt]    {$.$};
    \draw (465,101.4) node [anchor=north west][inner sep=0.75pt]  [font=\scriptsize]  {$\Gamma _{1}$};
    \draw (472,184.4) node [anchor=north west][inner sep=0.75pt]  [font=\scriptsize]  {$\Gamma _{1}^{*}$};
    \draw (234,127.4) node [anchor=north west][inner sep=0.75pt]  [font=\scriptsize]  {$U_{1}$};
    \draw (232,167.4) node [anchor=north west][inner sep=0.75pt]  [font=\scriptsize]  {$U_{1}^{*}$};
    \end{tikzpicture}
\caption{\small Contours of RH problem $\breve{N}(x,t,k)$ for $0<\xi<\frac{\kappa^2}{3}$.}\label{FigCase>0kappaupdeform}
\end{center}
\end{figure}

\begin{figure}[htbp]
    \begin{center}
    \tikzset{every picture/.style={line width=0.75pt}} 
    \begin{tikzpicture}[x=0.75pt,y=0.75pt,yscale=-1,xscale=1]
    \draw    (199,152) -- (460.01,151.3) ;
    \draw [shift={(335.5,151.63)}, rotate = 179.85] [color={rgb, 255:red, 0; green, 0; blue, 0 }  ][line width=0.75]    (10.93,-4.9) .. controls (6.95,-2.3) and (3.31,-0.67) .. (0,0) .. controls (3.31,0.67) and (6.95,2.3) .. (10.93,4.9)   ;
    \draw    (202,112.33) -- (462,112.33) ;
    \draw [shift={(332,112.33)}, rotate = 180] [color={rgb, 255:red, 0; green, 0; blue, 0 }  ][line width=0.75]    (10.93,-3.29) .. controls (6.95,-1.4) and (3.31,-0.3) .. (0,0) .. controls (3.31,0.3) and (6.95,1.4) .. (10.93,3.29)   ;
    \draw    (198,194.33) -- (460,194.33) ;
    \draw [shift={(335,194.33)}, rotate = 180] [color={rgb, 255:red, 0; green, 0; blue, 0 }  ][line width=0.75]    (10.93,-4.9) .. controls (6.95,-2.3) and (3.31,-0.67) .. (0,0) .. controls (3.31,0.67) and (6.95,2.3) .. (10.93,4.9)   ;
    \draw (331.62,93.48) node [anchor=north west][inner sep=0.75pt]  [font=\scriptsize]  {$k_{0} =i\sqrt{\xi }$};
    \draw (331,197.73) node [anchor=north west][inner sep=0.75pt]  [font=\scriptsize]  {$-k_{0} =-i\sqrt{\xi }$};
    \draw (319,71.4) node [anchor=north west][inner sep=0.75pt]    {$.$};
    \draw (304,70.4) node [anchor=north west][inner sep=0.75pt]  [font=\scriptsize]  {$i\kappa $};
    \draw (326,221.4) node [anchor=north west][inner sep=0.75pt]    {$.$};
    \draw (301,221.4) node [anchor=north west][inner sep=0.75pt]  [font=\scriptsize]  {$-i\kappa $};
    \draw (289,33.4) node [anchor=north west][inner sep=0.75pt]  [font=\scriptsize]  {$i\sqrt{3\xi }$};
    \draw (319,44.4) node [anchor=north west][inner sep=0.75pt]    {$.$};
    \draw (286,256.4) node [anchor=north west][inner sep=0.75pt]  [font=\scriptsize]  {$-i\sqrt{3\xi }$};
    \draw (327,258.4) node [anchor=north west][inner sep=0.75pt]    {$.$};
    \draw (465,101.4) node [anchor=north west][inner sep=0.75pt]  [font=\scriptsize]  {$\Gamma _{1}$};
    \draw (472,184.4) node [anchor=north west][inner sep=0.75pt]  [font=\scriptsize]  {$\Gamma _{1}^{*}$};
    \draw (234,127.4) node [anchor=north west][inner sep=0.75pt]  [font=\scriptsize]  {$U_{1}$};
    \draw (232,167.4) node [anchor=north west][inner sep=0.75pt]  [font=\scriptsize]  {$U_{1}^{*}$};
    \end{tikzpicture}
    \caption{\small Contours of RH problem $\breve{N}(x,t,k)$ for $\frac{\kappa^2}{3}<\xi<\kappa^2$.}\label{FigCase>0kappamiddledeform}
    \end{center}
    \end{figure}
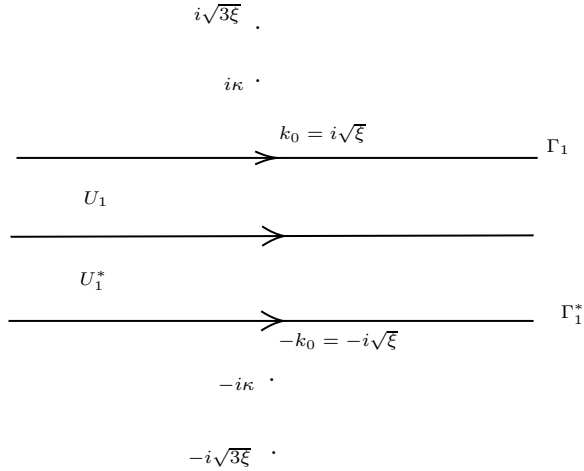
Next, we deform contours (see Fig \ref{FigCase>0kappaupdeform} and Fig \ref{FigCase>0kappamiddledeform}) via defining $\breve{N}(x,t,k)$ by
\begin{equation}\label{breveN}
    \breve{N}(x,t,k)=\left\{
        \begin{aligned}
        &M(x,t,k)
        \begin{pmatrix}
            1 & 0\\
            -r_1(k)e^{2it\theta} & 1
        \end{pmatrix}, & k\in U_1, \\
        &M(x,t,k)
        \begin{pmatrix}
            1 & r_2(k)e^{-2it\theta}\\
            0 & 1
        \end{pmatrix}, & k\in U^*_1, \\
        &M(x,t,k), & {\rm elsewhere}
        \end{aligned}
        \right.
\end{equation}
Then $\breve{N}(x,t,k)$ satisfies the following RH problem
\begin{RHP}\label{RHPbreveN}
    Find a $2\times2$ matrix-valued function $\breve{N}(x,t,k)$ such that
    \begin{itemize}
        \item[(i)] $\breve{N}(x,t,k)$ is meromorphic for $k\in\mathbb{C}\backslash\left\{\left\{0\right\}\cup\Gamma_1\cup\Gamma_1^*\right\}$ and has a simple pole located at $k=i\kappa$, $\kappa>0$.
        \item[(ii)] Jump conditions. The  The non-tangential limits $\breve{N}_{\pm}(x,t,k)=\underset{k'\rightarrow k, k'\in\mathbb{C}_{\pm}}{\lim}\breve{N}(x,t,k')$ exist for
        $k\in\Gamma_1\cup\Gamma_1^*$ and $\breve{N}_{\pm}(x,t,k)$ satisfy
        the jump condition $\breve{N}_{+}(x,t,k)=\breve{N}_{-}(x,t,k){J}_{\breve{N}}(x,t,k)$ for $k\in\Gamma_1\cup\Gamma_1^*$, where
        \begin{equation}\label{jumpforbreveN}
            {J}_{\breve{N}}(x,t,k)=\left\{
                \begin{aligned}
                &\begin{pmatrix}
                    1 & 0\\
                    r_1(k)e^{2it\theta} & 1
                \end{pmatrix}, & k\in\Gamma_1, \\
                &\begin{pmatrix}
                    1 & r_2(k)e^{-2it\theta}\\
                    0 & 1
                \end{pmatrix}, & k\in\Gamma^*_1, \\
                &I, &k\in\mathbb{R}\backslash\left\{0\right\}.
                \end{aligned}
                \right.
        \end{equation}
        \item[(iii)] Normalization condition at $k=\infty$. $\breve{N}(x,t,k)=I+O(k^{-1})$ as $k\rightarrow\infty$.
        \item[(iv)] Residue condition at $k=i\kappa$.
        \begin{equation}\label{rescondition of breveN}
            \underset{k=i\kappa}{\rm Res}\breve{N}^{(1)}(x,t,k)=c_2(x,t)\breve{N}^{(2)}(x,t,i\kappa),
        \end{equation}
        with $c_2(x,t)=\frac{\gamma_0}{a_{1}'(i\kappa)}e^{-2\kappa x+8\kappa^3t}=\frac{\gamma_0}{a_{1}'(i\kappa)}e^{t(-24\kappa\xi+8\kappa^3)}$, $\gamma_0^2=1$.
        \item[(v)] Singularities at the origin. In both cases (Case I and II), as $k\rightarrow 0$, $\breve{N}(x,t,k)$ satisfies

    \begin{subequations}\label{singularity of breveN at k=0 BothCases}
        \begin{align}
            &\breve{N}_{+}(x,t,k)=\left(
                    \begin{array}{cc}
                    -\frac{2i}{A}\bar{v}_2(-x,-t)+O(k) & -\frac{1}{k}\bar{v}_2(-x,-t)+O(1) \\
                    -\frac{2i}{A}\bar{v}_1(-x,-t)+O(k) & -\frac{1}{k}\bar{v}_1(-x,-t)+O(1)
                    \end{array}
                \right), &\mathbb{C}_{+}\ni k\rightarrow 0,\label{singularity of breveN at k=+i0 BothCases}\\
            &\breve{N}_{-}(x,t,k)=\frac{2i}{A}\left(
                \begin{array}{cc}
                    -\bar{v}_2(-x,-t)+O(k) &  -\frac{A}{2ik}\bar{v}_2(-x,-t)+O(k)\\
                    -\bar{v}_1(-x,-t)+O(k) &  -\frac{A}{2ik}\bar{v}_2(-x,-t)+O(k)
                \end{array}
                \right), &\mathbb{C}_{-}\ni k\rightarrow 0.\label{singularity of breveN at k=-i0 BothCases}
        \end{align}
    Furthermore, we can see that the singularity conditions at origin can be reduced to
    \begin{align}\label{singularity of breveN as ResCondition BothCases}
        \underset{k=0}{\rm Res}\breve{N}^{(2)}(x,t,k)=\frac{A}{2i}\breve{N}^{(1)}(x,t,0),
    \end{align}
\end{subequations}
\end{itemize}
\end{RHP}
\begin{proof}
    The proof is similar to RH problem \ref{RHPbreveM}.
\end{proof}
Still using the so-called Blaschke-Potapov factors and following the similar steps of Subsection \ref{subsectionregularRHP}, we can reduce $\breve{N}(x,t,k)$
to an regular RH problem $\breve{N}^r(x,t,k)$ as follows
\begin{RHP}\label{RHPregularN}
    Find a $2\times2$ matrix-valued function $\breve{N}^{r}(x,t,k)$ such that
\begin{itemize}
    \item[(i)] $\breve{N}^{r}(x,t,k)$ is analytic for $k\in\mathbb{C}\backslash(\Gamma_1\cup\Gamma_1^*)$.
    \item[(ii)]  $\breve{N}^{r}_{+}(x,t,k)=\breve{N}^{r}_{-}(x,t,k)J_{\breve{N}^r}(x,t,k)$ with
    \begin{equation}
        J_{\breve{N}^r}(x,t,k)=
        \begin{pmatrix}
            1 & 0\\
            0 & \frac{k-i\kappa}{k}
        \end{pmatrix}
        J_{\breve{N}}(x,t,k)
        \begin{pmatrix}
            1 & 0\\
            0 & \frac{k}{k-i\kappa}
        \end{pmatrix}.
    \end{equation}
    More specifically,
    \begin{equation}\label{jumpforbreveNr}
        J_{\breve{N}^r}(x,t,k)=\left\{
            \begin{aligned}
            &\begin{pmatrix}
                1 & 0\\
                {r}^{r}_1(k)e^{2it\theta} & 1
            \end{pmatrix}, & k\in\Gamma_1, \\
            &\begin{pmatrix}
                1 & {r}^{r}_2(k)e^{-2it\theta}\\
                0 & 1
            \end{pmatrix} , & k\in\Gamma^*_1.
            \end{aligned}
            \right.
    \end{equation}
    with
    \begin{equation*}
        {r}^{r}_1(k)=\frac{k-i\kappa}{k}r_1(k), \quad {r}^{r}_2(k)=\frac{k}{k-i\kappa}r_2(k).
    \end{equation*}
    \item[(iii)] Normalization at $k=\infty$. $\breve{N}^{r}(x,t,k)\rightarrow I$ as $k\rightarrow\infty$.
    \item[(iv)] Matrix-valued factor $P(x,t)$ are determined in terms of $\breve{N}^{r}(x,t,k)$ as \eqref{P12andP21} and \eqref{gandhdefine} via replacing $c_1(x,t)$, $c_0(\xi)$
    by $c_2(x,t)$, $\frac{A}{2i}$ respectively.
\end{itemize}
\end{RHP}

Then we have the potential recovering formulae
\begin{subequations}\label{uintermofregularNrhp}
    \begin{align}
        &u(x,t)=-2\kappa P_{12}(x,t)+2i\lim_{k\rightarrow\infty}k\breve{N}^{r}_{12}(x,t,k), \quad x>0, \ t>0, \label{uintermofregularNrhp1}\\
        &u(x,t)=-2\kappa P_{21}(-x,-t)+2i\lim_{k\rightarrow\infty}k\breve{N}^{r}_{21}(-x,-t,k), \quad x<0, \ t<0, \label{uintermofregularNrhp2}
    \end{align}
\end{subequations}
where $P_{12}(x,t)$, $P_{21}(x,t)$ defined by the item (iv) of RH problem \ref{RHPregular}. For $0<\xi<\frac{\kappa^2}{3}$, and $\frac{\kappa^2}{3}<\xi<\kappa^2$,
different asymptotics are presented.

\subsubsection{Asymptotics for $\frac{\kappa^2}{3}<\xi<\kappa^2$}\label{subsubsectionAANO2middlekappa}
Consider $\breve{N}^r(x,t,k)\approx I$ for $t\rightarrow\infty$, we have
\begin{equation}
    u(x,t)\approx-2\kappa P_{12}(x,t)\approx\frac{2i\kappa^2\cdot\frac{A}{2i}}{\kappa^2-c_2(x,t)\cdot\frac{A}{2i}}\underset{t\rightarrow+\infty}{\approx} A,
\end{equation}
where we used the fact that $c_2(x,t)=\frac{\gamma_0}{a_{1}'(i\kappa)}e^{t(-24\kappa\xi+8\kappa^3)}\rightarrow 0$ as $t\rightarrow\infty$ for $x>0$, $\xi\in(\frac{\kappa^2}{3},\kappa^2)$.
And $A$ is the leading term of $u(x,t)$ for $x>0$, $t\rightarrow+\infty$.

Similarly, we also have
\begin{equation}
    u(x,t)\approx-2\kappa P_{21}(-x,-t)\approx 2\kappa\frac{-c_2(-x,-t)(-i\kappa)}{-\kappa^2+\frac{A}{2i}c_2(-x,-t)}\approx 0.
\end{equation}
which is the leading term for $x<0$, $t\rightarrow-\infty$.

Then as $t\rightarrow+\infty$, use \eqref{imthetaforsectionAANO2}, we easily find  that
\begin{equation}
    J_{\breve{N}^r}(x,t,k)=I+O(e^{-16t\xi^{3/2}}), \quad k\in\Gamma_1\cup\Gamma_1^*, \quad t\rightarrow\infty
\end{equation}
Obey the error analysis similar to subsection \ref{subsectionAANO1erroranalysis} and use \eqref{uintermofregularNrhp}, we obtain the
\begin{subequations}
    \begin{align}
        &u(x,t)=A+O\left(t^{-\frac{1}{2}}e^{-16t\xi^{3/2}}\right), \quad x>0, \ t>0, \ \frac{\kappa^2}{3}<\xi<\kappa^2.\\
        &u(x,t)=O\left((-t)^{-\frac{1}{2}}e^{16t\xi^{3/2}}\right), \quad x<0, \ t<0, \ \frac{\kappa^2}{3}<\xi<\kappa^2,
    \end{align}
\end{subequations}
Which are the \eqref{asyRIM}, \eqref{asyRIIIM} in Theorem \ref{mainthm2} respectively.

\subsubsection{Asymptotics for $0<\xi<\frac{\kappa^2}{3}$}\label{subsubsectionAANO2upkappa}
In this case, $c_2(x,t)$ can not decay $0$ as $t\rightarrow\infty$. Owe to this reason, we make our asymptotics is well-defined in so-called solitonic region.
Still noticing $\breve{N}^r(x,t,k)\approx I$ for $t\rightarrow\infty$, we have
\begin{equation}
    u(x,t)\approx-2\kappa P_{12}(x,t)\approx\frac{2i\kappa^2\cdot\frac{A}{2i}}{\kappa^2-c_2(x,t)\cdot\frac{A}{2i}}\approx\frac{A}{1-C_1(\kappa)e^{-2\kappa x+8\kappa^3t}},
\end{equation}
where $C_1(\kappa)=\frac{A\gamma_0}{2ia_1'(i\kappa)\kappa^2}$ with $\gamma_0^2=1$.

Similarly, we also have
\begin{equation}
    u(x,t)\approx-2\kappa P_{21}(-x,-t)\approx 2\kappa\frac{-c_2(-x,-t)(-i\kappa)}{-\kappa^2+\frac{A}{2i}c_2(-x,-t)}\approx \frac{4}{C_2(\kappa)e^{-2\kappa x+8\kappa^3t}-A\kappa^{-2}},
\end{equation}
where $C_2(\kappa)=\frac{2ia_1'(i\kappa)}{\gamma_0}$ with $\gamma_0^2=1$. And the radiation term is similar to the previous subsubsection \ref{subsubsectionAANO2middlekappa}.
Final asymptotics for this case are described by \eqref{asyRIL} and \eqref{asyRIIIR} in Theorem \ref{mainthm2} respectively.

\subsection{Asymptotic analysis for $\xi>\kappa^2$}\label{subsectionAANO2downkappa}
In this case, $i\kappa<i\sqrt{\xi}<i\sqrt{3\xi}$, we import a new variable $\kappa_{\delta}\in(0,\kappa)$ and redefine contours
$\Gamma_1=\left\{k: k=k_1+i\kappa_{\delta}\right\}$, $\Gamma_1^*=\left\{k: k=k_1-i\kappa_{\delta}\right\}$ and two domains
$U_1=\{k: 0<\im k<i\kappa_{\delta}\}$, $U_1^{*}=\{k: -i\kappa_{\delta}<\im k<0\}$ as Fig \ref{FigCase>0kappadowndeform}.
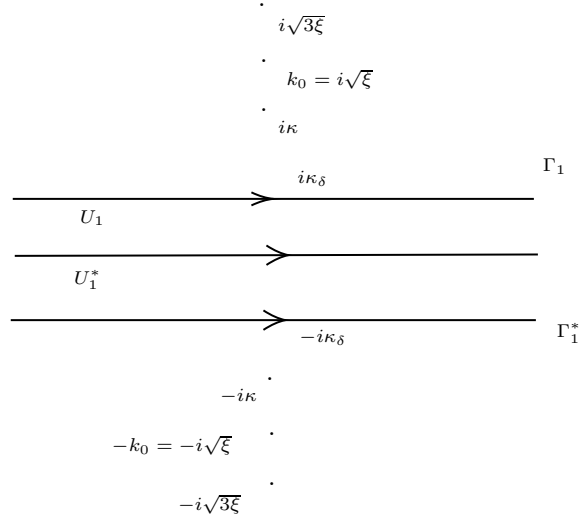
\begin{figure}[htbp]
\begin{center}
\tikzset{every picture/.style={line width=0.75pt}} 
\begin{tikzpicture}[x=0.75pt,y=0.75pt,yscale=-1,xscale=1]
\draw    (174,135) -- (435.01,134.3) ;
\draw [shift={(310.5,134.63)}, rotate = 179.85] [color={rgb, 255:red, 0; green, 0; blue, 0 }  ][line width=0.75]    (10.93,-4.9) .. controls (6.95,-2.3) and (3.31,-0.67) .. (0,0) .. controls (3.31,0.67) and (6.95,2.3) .. (10.93,4.9)   ;
\draw    (173,106.33) -- (433,106.33) ;
\draw [shift={(303,106.33)}, rotate = 180] [color={rgb, 255:red, 0; green, 0; blue, 0 }  ][line width=0.75]    (10.93,-3.29) .. controls (6.95,-1.4) and (3.31,-0.3) .. (0,0) .. controls (3.31,0.3) and (6.95,1.4) .. (10.93,3.29)   ;
\draw    (172,167.33) -- (434,167.33) ;
\draw [shift={(309,167.33)}, rotate = 180] [color={rgb, 255:red, 0; green, 0; blue, 0 }  ][line width=0.75]    (10.93,-4.9) .. controls (6.95,-2.3) and (3.31,-0.67) .. (0,0) .. controls (3.31,0.67) and (6.95,2.3) .. (10.93,4.9)   ;
\draw (313.62,90.48) node [anchor=north west][inner sep=0.75pt]  [font=\scriptsize]  {$i\kappa _{\delta }$};
\draw (221,223.73) node [anchor=north west][inner sep=0.75pt]  [font=\scriptsize]  {$-k_{0} =-i\sqrt{\xi }$};
\draw (298,194.4) node [anchor=north west][inner sep=0.75pt]    {$.$};
\draw (275,200.4) node [anchor=north west][inner sep=0.75pt]  [font=\scriptsize]  {$-i\kappa $};
\draw (254,253.4) node [anchor=north west][inner sep=0.75pt]  [font=\scriptsize]  {$-i\sqrt{3\xi }$};
\draw (299,247.4) node [anchor=north west][inner sep=0.75pt]    {$.$};
\draw (436,84.4) node [anchor=north west][inner sep=0.75pt]  [font=\scriptsize]  {$\Gamma _{1}$};
\draw (443,167.4) node [anchor=north west][inner sep=0.75pt]  [font=\scriptsize]  {$\Gamma _{1}^{*}$};
\draw (205,110.4) node [anchor=north west][inner sep=0.75pt]  [font=\scriptsize]  {$U_{1}$};
\draw (202,140.4) node [anchor=north west][inner sep=0.75pt]  [font=\scriptsize]  {$U_{1}^{*}$};
\draw (314.62,170.48) node [anchor=north west][inner sep=0.75pt]  [font=\scriptsize]  {$-i\kappa _{\delta }$};
\draw (299,222.4) node [anchor=north west][inner sep=0.75pt]    {$.$};
\draw (308,39.73) node [anchor=north west][inner sep=0.75pt]  [font=\scriptsize]  {$k_{0} =i\sqrt{\xi }$};
\draw (294,6.4) node [anchor=north west][inner sep=0.75pt]    {$.$};
\draw (304,65.4) node [anchor=north west][inner sep=0.75pt]  [font=\scriptsize]  {$i\kappa $};
\draw (304,12.4) node [anchor=north west][inner sep=0.75pt]  [font=\scriptsize]  {$i\sqrt{3\xi }$};
\draw (295,59.4) node [anchor=north west][inner sep=0.75pt]    {$.$};
\draw (295,34.4) node [anchor=north west][inner sep=0.75pt]    {$.$};
\end{tikzpicture}
\caption{\small Contours of RH problem $\breve{N}(x,t,k)$ for $\xi>\kappa^2$.}\label{FigCase>0kappadowndeform}
\end{center}
\end{figure}

As the considerations in subsection \ref{subsectionAANO2middle+upkappa}, using \eqref{imthetaforsectionAANO2}, we can find that
\begin{equation}
    J_{\breve{N}^r}(x,t,k)=I+O\left(e^{-8t\kappa_{\delta}(3\xi-\kappa_{\delta}^2)}\right), \quad k\in\Gamma_1\cup\Gamma_1^*, \quad t\rightarrow\infty,
\end{equation}
and derive
\begin{subequations}
    \begin{align}
        &u(x,t)=A+O\left(t^{-\frac{1}{2}}e^{-8t\kappa_{\delta}(3\xi-\kappa_{\delta}^2)}\right), \quad x>0, \ t>0, \ \xi>\kappa^2. \\
        &u(x,t)=O\left((-t)^{-\frac{1}{2}}e^{8t\kappa_{\delta}(3\xi-\kappa_{\delta}^2)}\right), \quad x<0, \ t<0, \ \xi>\kappa^2,
    \end{align}
\end{subequations}
which are presented as \eqref{asyRIR} and \eqref{asyRIIIL} of Theorem \ref{mainthm2} respectively.

\section{Conclusions and further discussions}\label{sectionFinalRemark}
In the present work, we mainly investigate the large-time asymptotics of solution for the Cauchy problem of the nonlinear focusing nonlocal MKdV equation with step-like initial data, i.e.,
$u_0(x)\rightarrow 0$ as $x\rightarrow-\infty$, $u_0(x)\rightarrow A$ as $x\rightarrow+\infty$,
where $A$ is an arbitrary positive real number. We firstly develop the direct scattering theory
to establish the Riemann-Hilbert (RH) problem associated with step-like initial data.
Thanks to the symmetries $x\rightarrow-x$, $t\rightarrow-t$ of nonlocal integrable systems, we investigate the asymptotics for $t\rightarrow-\infty$ and $t\rightarrow+\infty$ respectively.
Our main technique is to use the steepest descent analysis to deform the original matrix-valued RH problem to
corresponded regular RH problem, which could be explicitly solved. Finally we obtain the different long-time asymptotic behaviors
of the solution of the Cauchy problem \eqref{Cauchy Problem} for focusing nonlocal MKdV equation in different
space-time regions $\mathcal{R}_{I}$,  $\mathcal{R}_{II}$,  $\mathcal{R}_{III}$ and $\mathcal{R}_{IV}$ on the whole $(x,t)$-plane.
The asymptotics of $\mathcal{R}_{II}$ and $\mathcal{R}_{IV}$ admit typical Zakhrov-Manakov form, and the subleading term of
the asymptotics in region $\mathcal{R}_{IV}$ depends on the value of $\im\nu(-k_0)$ (see Theorem \ref{mainthm1}). $\mathcal{R}_{I}$ can be divide into three regions, which are
$\mathcal{R}_{I,\mathcal{R}}$, $\mathcal{R}_{I,\mathcal{M}}$ and $\mathcal{R}_{I,\mathcal{L}}$, the former one is the solitonic region and all three take the
exponential orders as error term. Correspondingly, $\mathcal{R}_{III}$ can be also divide into three regions, which are $\mathcal{R}_{III,\mathcal{R}}$,
$\mathcal{R}_{III,\mathcal{M}}$, $\mathcal{R}_{III,\mathcal{L}}$, which admit similar asymptotic types as three regions in $\mathcal{R}_{I}$ (see Theorem \ref{mainthm2}).

In the present work, the symmetry of integrable nonlocal MKdV is the great help to us. When we face the obstacles as $t\rightarrow+\infty$, we turn to
analyze the asymptotics as $t\rightarrow-\infty$, see Section \ref{sectionAANO1}. In the same way, analysis as $t\rightarrow+\infty$ is more convenient than $t\rightarrow-\infty$
in Section \ref{sectionAANO2}.

Finally, we list some problems which remain to be discussed in our forthcoming paper.
\begin{itemize}
    \item[(i)] How to directly analyze for $t\rightarrow+\infty$ in Section \ref{sectionAANO1} ? In this case, main difficulties have been described in Remark \ref{whywedot-infty}.
    Besides, after opening lens, we have to deal with the singularities at $k=0$ under the second jump factorization with $\frac{r_{j}}{1+r_1r_2}$ instead of $r_j$ when we directly investigate $t\rightarrow+\infty$,
    which can not be easily to convert the singularity conditions of $k=0$ into the residue condition as \eqref{singularity of breveM as ResCondition BothCases}.

    \item[(ii)]Asymptotics of transitions regions between $x>0$ and $x<0$ are still open. From the point of the RH problem formalism, the transition region between corresponds to
    the saddle points $\pm k_0$ and singularities $k=0$. I hold the view that the main difficult comes from the origin ponint. For the case of $x>0$, $\xi<0$, we see that the leading term
    $A\delta^2(\xi,0)$ increasing oscillations as $-k_0\rightarrow 0^{-}$, $k_0\rightarrow 0^+$ respectively. We notice that in \cite{RDStudies}, Rybalko and Shepelsky give a good solvable
    results along the curved wedges in the long-time asymptotics for the integrable nonlinear nonlocal Schr\"odinger equation for this problem. We look forward to extending similar results to our
    forthcoming work for nonlocal MKdV equation.

    \item[(iii)]Does Painlav\'e-type asymptotics exist or not? Different from the local MKdV with step-like initial data, there exists a singularity at $k=0$ in $\mathbb{C}$ rather than a
    cut $[-ic,ic]$ in $\mathbb{C}$ as in \cite{KotMinakovJMP}. To some extent, besides the singular point $k=0$, I think the properties of nonlocal MKdV with step-like intial values are
    more close to the local MKdV equation with decay boundary condtions such as in \cite{DZAnn}, which take Painlev\'e-type asymptotics in some transition regions. That's our motivation to
    ask this quesiton.
\end{itemize}

\section*{Acknowledgements}
This work is strongly supported by the National Science Foundation of China (Grant No. 11671095, 51879045).
\hspace*{\parindent}

\appendix
\renewcommand\thefigure{\Alph{section}\arabic{figure}}
\setcounter{figure}{0}
\renewcommand{\theequation}{\thesection.\arabic{equation}}
\section{Proof of Proposition \ref{kappaexpression}}\label{kappaCalculation}
{\bf Case I: generic case}

Define functions $\tilde{a}_1(k)$ and $\tilde{a}_2(k)$ by
\begin{equation}
\tilde{a}_1(k)=a_1(k)\frac{k^2}{(k-i\kappa)(k+i)}, \quad \tilde{a}_2(k)=a_2(k)\frac{k-i\kappa}{k-i}.
\end{equation}
Via the determinant relation (item (iv) in Proposition \ref{propspecfuncs}) and asymptotic of $a_{j}(k)$ as $k\rightarrow\infty$ (item (iii) in Proposition \ref{propspecfuncs}), we can construct
the following scalar RH problem with respect to $\tilde{a}_j(k)$, $j=1,2$.
\begin{itemize}
    \item $\tilde{a}_1(k)$ and $\tilde{a}_2(k)$ are analytic and have no zeros in $\overline{\mathbb{C}_{+}}$ and $\overline{\mathbb{C}_{-}}$ respectively.
    \item $\tilde{a}_j(k)$, $j=1,2$ admit the jump condition
    \begin{equation}
        \tilde{a}_1(k)\tilde{a}_2(k)=\frac{k^2}{1+k^2}\left(1-b^2(k)\right), \quad k\in\mathbb{R}.
    \end{equation}
    \item $\tilde{a}_j(k)\rightarrow 1$ as $k\rightarrow\infty$.
\end{itemize}
The unique solution of the scalar RH problem is given by
\begin{equation*}
    \tilde{a}_1(k)=e^{\chi(k)}, \quad \tilde{a}_2(k)=e^{-\chi(k)}
\end{equation*}
where
\begin{equation}
    \chi(k)=\frac{1}{2\pi i}\int_{-\infty}^{\infty}\frac{{\rm log}\frac{s^2}{1+s^2}\left(1-b^2(s)\right)}{s-k}ds.
\end{equation}
Moreover, we have
\begin{equation}\label{a1a2CaseI}
    a_1(k)=\frac{(k+i)(k-i\kappa)}{k^2}e^{\chi(k)}, \quad a_2(k)=\frac{k-i}{k-i\kappa}e^{-\chi(k)}.
\end{equation}
Near $k=0$, we obtain
\begin{equation}\label{CaseIchi0}
    a_1(k)=\frac{\kappa}{k^2}e^{\chi(+i0)}\left(1+o(k)\right), \quad a_2(0)=\frac{1}{k}e^{-\chi(-i0)},
\end{equation}
where $\pm i0$ represent that $k$ tends to $0$ from the positive side and negative side respectively. By Sokhotski-Plemelj formulas, we have that
\begin{equation}\label{CaseIPlemelj}
    \chi(+i0)+\chi(-i0)=\frac{1}{i\pi}{\rm p.v.}\int_{-\infty}^{\infty}\frac{{\rm log}\frac{s^2}{1+s^2}\left(1-b^2(s)\right)}{s}ds.
\end{equation}

Recall the item (v) in the Proposition \ref{propspecfuncs},
\begin{equation}\label{CaseIa1k=0}
    a_1(k)=\frac{A^2}{4k^2}a_2(0)(1+o(k)), \quad k\rightarrow 0.
\end{equation}
Compare \eqref{CaseIchi0} and \eqref{CaseIa1k=0}, also use \eqref{CaseIPlemelj}, we reach \eqref{kappageneric}.

{\bf Case II: non-generic case}

Define functions $\breve{a}_1(k)$ and $\breve{a}_2(k)$ by
\begin{equation}
\breve{a}_1(k)=a_1(k)\frac{k}{k-i\kappa}, \quad \breve{a}_2(k)=a_2(k)\frac{k-i\kappa}{k}.
\end{equation}
Similar to the claims in Case I, $\breve{a}_j(k)$, $j=1,2$ satisfy the following scalar RH problem
\begin{itemize}
    \item $\breve{a}_1(k)$ and $\breve{a}_2(k)$ are analytic and have no zeros in $\overline{\mathbb{C}_{+}}$ and $\overline{\mathbb{C}_{-}}$ respectively.
    \item $\breve{a}_j(k)$, $j=1,2$ admit the jump condition
    \begin{equation}
        \breve{a}_1(k)\breve{a}_2(k)=1-b^2(k), \quad k\in\mathbb{R}.
    \end{equation}
    \item $\breve{a}_j(k)\rightarrow 1$ as $k\rightarrow\infty$.
\end{itemize}
We can solve that
\begin{subequations}\label{a1a2CaseII}
    \begin{align}
        & a_1(k)=\frac{k-i\kappa}{k}\exp\left\{\frac{1}{2\pi i}\int_{-\infty}^{\infty}\frac{{\rm log}\left(1-b^2(s)\right)}{s-k}ds\right\}, \\
        & a_2(k)=\frac{k}{k-i\kappa}\exp\left\{-\frac{1}{2\pi i}\int_{-\infty}^{\infty}\frac{{\rm log}\left(1-b^2(s)\right)}{s-k}ds\right\}.
    \end{align}
\end{subequations}
Apply Sokhotski-Plemelj formulae to \eqref{a1a2CaseII}, we find
\begin{equation}\label{a11,a2'atk=0}
    a_{11}=-i\kappa I_1I_2, \quad a_2'(0)=i\kappa^{-1}I^{-1}_1I_2,
\end{equation}
where $I_1$, $I_2$ are defined by \eqref{I1I2}.
Additionally, we can see that $a_{11}a_2'(0)=I_2^2=1-b^2(0)\neq 0$.

On the other hand, due to the symmetry relation \eqref{symb} and the item (v) of Proposition \ref{propofpsi}, the behavior of $\psi_j(x,t,k)$, $j=1,2$ as $k\rightarrow 0$ could be
exhibited as follows
\begin{subequations}
        \begin{align}
            &\psi_{1}^{(1)}(x,t,k)=\frac{1}{k}\left(
                \begin{array}{cc}
                v_{1}(x,t)\\
                v_{2}(x,t)
                \end{array}
            \right)+
            \left(
                \begin{array}{cc}
                s_{1}(x,t)\\
                s_{2}(x,t)
                \end{array}
            \right)+O(k), & k\rightarrow 0,\\
            &\psi_{1}^{(2)}(x,t,k)=\frac{2i}{A}\left(
                \begin{array}{cc}
                v_{1}(x,t)\\
                v_{2}(x,t)
                \end{array}
            \right)+k
            \left(
                \begin{array}{cc}
                h_{1}(x,t)\\
                h_{2}(x,t)
                \end{array}
            \right)+O(k^2), & k\rightarrow 0, \\
            &\psi_{2}^{(1)}(x,t,k)=-\frac{2i}{A}\left(
                \begin{array}{cc}
                \overline{v_{2}(-x,-t)}\\
                \overline{v_{1}(-x,-t)}
                \end{array}
            \right)-k
            \left(
                \begin{array}{cc}
                \overline{h_{2}(-x,-t)}\\
                \overline{h_{1}(-x,-t)}
                \end{array}
            \right)+O(k^2), & k\rightarrow 0,\\
            &\psi_{2}^{(2)}(x,t,k)=-\frac{1}{k}\left(
                \begin{array}{cc}
                \overline{v_{2}(-x,-t)}\\
                \overline{v_{1}(-x,-t)}
                \end{array}
            \right)+
            \left(
                \begin{array}{cc}
                \overline{s_{2}(-x,-t)}\\
                \overline{s_{1}(-x,-t)}
                \end{array}
            \right)+O(k), & k\rightarrow 0
        \end{align}
\end{subequations}
with some functions $v_j$, $h_j$ and $s_j$, $j=1,2$. Use determinant representation for spectral functions (see \eqref{detrepforspecfunc}) as $k\rightarrow 0$ and take into account that
$\vert v_2(0,0) \vert^2-\vert v_1(0,0) \vert^2=0$ under the Case II (see Remark \ref{remarkv200-v100}), we have
\begin{subequations}\label{accuratespecfuncatk=0}
    \begin{align}
        &a_1(k)=\frac{1}{k}(v_1\bar{s}_1-\bar{v}_1s_1-v_2\bar{s}_2+\bar{v}_2s_2)|_{x=0,t=0}+O(1), &k\rightarrow 0,\\
        &a_2(k)=k\frac{2i}{A}(v_1\bar{h}_1+\bar{v}_1h_1-v_2\bar{h}_2-\bar{v}_2h_2)|_{x=0,t=0}+O(k^2), &k\rightarrow 0,\\
        &b(k)=v_1\bar{h}_1-v_2\bar{h}_2+\frac{2i}{A}(v_1\bar{s}_1-v_2\bar{s}_2)|_{x=0,t=0}+O(k), &k\rightarrow 0.
    \end{align}
\end{subequations}
From \eqref{accuratespecfuncatk=0}, we find that (Recall that $a_{11}=\lim_{k\rightarrow 0}ka_1(k)$)
\begin{equation*}
    b(0)+\overline{b(0)}=\frac{A}{2i}a_2'(0)-\frac{2i}{A}a_{11},
\end{equation*}
which is equivalent to (notice $b(k)$=$\bar{b}(-k)$)
\begin{equation}\label{AppendixCala_{11}CaseII}
    a_{11}=iAb(0)-\frac{A^2}{4}a_2'(0).
\end{equation}
Combine \eqref{a11,a2'atk=0}, we arrive at \eqref{kappanongeneric}.

\section{Parabolic Cylinder Model}\label{Appendixpcmodel}
    Find a matrix-valued function $m^{pc}_{-k_0}(\zeta):=m^{pc}_{-k_0}(\xi;\zeta)$ with following properties:
    \begin{itemize}
    \item[*] $m^{pc}_{-k_0}(\zeta)$ is analytical  in $\mathbb{C}\backslash \Sigma^{pc} $ with $\Sigma^{pc}$ shown in Fig \ref{FigSigma-k_0};
    \item[*] $m^{pc}_{-k_0}$ has continuous boundary values $m^{pc}_{-k_0, \pm}$ on $\Sigma^{pc}$ and
    \begin{equation}
    m^{pc}_{-k_0, +}(\zeta)=m^{pc}_{-k_0,-}(\zeta)J^{pc}(\zeta),\quad \zeta \in \Sigma^{pc},
    \end{equation}
    where
    \begin{align}\label{jumppc}
    J^{pc}(\zeta)=\left\{\begin{array}{ll}
    \zeta^{i\nu \hat{\sigma}_3}e^{-\frac{i\zeta^2}{4}\hat{\sigma}_3}\left(\begin{array}{cc}
    1 & 0\\
    q^r_1(-k_0) & 1
    \end{array}\right),  & \zeta\in\Sigma _{3}^{pc},\\[10pt]
    \zeta^{i\nu \hat{\sigma}_3}e^{-\frac{i\zeta^2}{4}\hat{\sigma}_3}\left(\begin{array}{cc}
    1 & -q^r_2(-k_0)\\
    0 & 1
    \end{array}\right),   & \zeta\in\Sigma _{3}^{pc*} ,\\[10pt]
    \zeta^{i\nu \hat{\sigma}_3}e^{-\frac{i\zeta^2}{4}\hat{\sigma}_3}\left(\begin{array}{cc}
    1& \frac{q^r_2(-k_0)}{1+q^r_1(-k_0)q^r_2(-k_0)}\\
    0 & 1
    \end{array}\right),  &\zeta\in\Sigma _{4}^{pc} \\[10pt]
    \zeta^{i\nu \hat{\sigma}_3}e^{-\frac{i\zeta^2}{4}\hat{\sigma}_3}\left(\begin{array}{cc}
    1 & 0 \\
    -\frac{q^r_1(-k_0)}{1+q^r_1(-k_0)q^r_2(-k_0)} & 1
    \end{array}\right),   & \zeta\in\Sigma _{4}^{pc*}.
    \end{array}\right.
    \end{align}
    and $\nu=\nu(-k_0)$.
    \item[*] $m^{pc}_{-k_0}(\zeta)= I+m^{pc}_{-k_0,1}\zeta^{-1}+O(\zeta^{-2})$, $\zeta \rightarrow \infty$.
    \end{itemize}

    The RH problem $m^{pc}_{-k_0}(\zeta)$ has an explicit solution, which can be expressed in terms of Webber equation
    $(\frac{\partial^2}{\partial z^2}+(\frac{1}{2}-\frac{z^2}{2}+a))D_{a}(z)=0$.

    \begin{figure}[htbp]
    \centering
    \tikzset{every picture/.style={line width=0.75pt}} 
    \begin{tikzpicture}[x=0.75pt,y=0.75pt,yscale=-1,xscale=1]
    \draw    (306.01,137.16) -- (380.29,60.02) ;
    \draw [shift={(347.31,94.27)}, rotate = 133.92] [color={rgb, 255:red, 0; green, 0; blue, 0 }  ][line width=0.75]    (10.93,-3.29) .. controls (6.95,-1.4) and (3.31,-0.3) .. (0,0) .. controls (3.31,0.3) and (6.95,1.4) .. (10.93,3.29)   ;
    \draw    (379.29,212.02) -- (306.01,137.16) ;
    \draw [shift={(338.45,170.3)}, rotate = 45.61] [color={rgb, 255:red, 0; green, 0; blue, 0 }  ][line width=0.75]    (10.93,-3.29) .. controls (6.95,-1.4) and (3.31,-0.3) .. (0,0) .. controls (3.31,0.3) and (6.95,1.4) .. (10.93,3.29)   ;
    \draw    (228.29,61.02) -- (306.01,137.16) ;
    \draw [shift={(271.43,103.29)}, rotate = 224.41] [color={rgb, 255:red, 0; green, 0; blue, 0 }  ][line width=0.75]    (10.93,-3.29) .. controls (6.95,-1.4) and (3.31,-0.3) .. (0,0) .. controls (3.31,0.3) and (6.95,1.4) .. (10.93,3.29)   ;
    \draw    (306.01,137.16) -- (232.29,215.02) ;
    \draw [shift={(265.02,180.45)}, rotate = 313.43] [color={rgb, 255:red, 0; green, 0; blue, 0 }  ][line width=0.75]    (10.93,-3.29) .. controls (6.95,-1.4) and (3.31,-0.3) .. (0,0) .. controls (3.31,0.3) and (6.95,1.4) .. (10.93,3.29)   ;
    \draw    (215.29,138.02) -- (405.29,136.05) ;
    \draw [shift={(408.29,136.02)}, rotate = 179.41] [fill={rgb, 255:red, 0; green, 0; blue, 0 }  ][line width=0.08]  [draw opacity=0] (10.72,-5.15) -- (0,0) -- (10.72,5.15) -- (7.12,0) -- cycle    ;
    \draw (300.99,145.9) node [anchor=north west][inner sep=0.75pt]  [font=\scriptsize,rotate=-0.03]  {$0$};
    \draw (386,50.4) node [anchor=north west][inner sep=0.75pt]  [font=\scriptsize]  {$\Sigma _{3}^{pc}$};
    \draw (384,206.4) node [anchor=north west][inner sep=0.75pt]  [font=\scriptsize]  {$\Sigma _{3}^{pc*}$};
    \draw (204,212.4) node [anchor=north west][inner sep=0.75pt]  [font=\scriptsize]  {$\Sigma _{4}^{pc*}$};
    \draw (201,50.4) node [anchor=north west][inner sep=0.75pt]  [font=\scriptsize]  {$\Sigma _{4}^{pc}$};
    \draw (357,99.4) node [anchor=north west][inner sep=0.75pt]  [font=\scriptsize]  {$U_{2}^{pc}$};
    \draw (363,156.4) node [anchor=north west][inner sep=0.75pt]  [font=\scriptsize]  {$U_{2}^{pc*}$};
    \draw (233,99.4) node [anchor=north west][inner sep=0.75pt]  [font=\scriptsize]  {$U_{3}^{pc}$};
    \draw (234,160.4) node [anchor=north west][inner sep=0.75pt]  [font=\scriptsize]  {$U_{3}^{pc*}$};
    \end{tikzpicture}
    \caption{\small Contours of RH problem $m^{pc}_{-k_0}$.}\label{FigSigma-k_0}
    \end{figure}

    Taking the transformation
    \begin{equation}\label{importm_0}
        m^{pc}_{-k_0}=m_0(\zeta)\mathcal{P}\zeta^{-i\nu\sigma_3}e^{\frac{i}{4}\zeta^2\sigma_3},
    \end{equation}
    where
    \begin{align}
        \mathcal{P}(\xi)=\left\{\begin{array}{ll}
        \left(\begin{array}{cc}
        1 & 0\\
        -q^r_1(-k_0) & 1
        \end{array}\right),  & \zeta\in U_{2}^{pc} ,\\[10pt]
        \left(\begin{array}{cc}
        1 & q^r_2(-k_0)\\
        0 & 1
        \end{array}\right),   & \zeta\in U_{2}^{pc*} ,\\[10pt]
        \left(\begin{array}{cc}
        1 & -\frac{q^r_2(-k_0)}{1+q^r_1(-k_0)q^r_2(-k_0)}\\
        0 & 1
        \end{array}\right),   & \zeta\in U_{3}^{pc} ,\\[10pt]
        \left(\begin{array}{cc}
        1& 0\\
        \frac{q^r_1(-k_0)}{1+q^r_1(-k_0)q^r_2(-k_0)} & 1
        \end{array}\right),   & \zeta\in U_{3}^{pc*} ,\\[10pt]
        I,   & elsewhere.
        \end{array}\right.
    \end{align}
    The  matrix-valued function $m_0(\zeta)$ satisfies the following properties
        \begin{itemize}
            \item[*] $m_0(\zeta)$ is analytical in $\mathbb{C}\backslash\mathbb{R}$;
            \item[*] $m_0(\zeta)$ takes continuous boundary values $m_{0,\pm}(\zeta)$ on $\mathbb{R}$ and
            \begin{equation}\label{m_0jump}
                m_{0,+}(\zeta)=m_{0,-}(\zeta)J_0, \quad \zeta\in\mathbb{R},
            \end{equation}
            where
            \begin{equation}
                J_0(\xi)=\left(\begin{array}{cc}
                1+q^r_1(-k_0)q^r_2(-k_0) & q^r_2(-k_0)\\
                q^r_1(-k_0) & 1
                \end{array}\right).
            \end{equation}
            \item[*] asymptotic behavior:
            \begin{equation}\label{pcpsiasy}
                m_0(\zeta)=\left(I+m^{pc}_{-k_0,1}\zeta^{-1}+\mathcal{O}(\zeta^{-2})\right)\zeta^{i\nu\sigma_3}e^{-\frac{i}{4}\zeta^2\sigma_3}, \quad {\rm as} \quad \zeta\rightarrow\infty.
            \end{equation}
        \end{itemize}

        Differentiating \eqref{m_0jump} with respect to $\zeta$,
        and combining $\frac{i\zeta}{2}\sigma_3m_{0,+}=\frac{i\zeta}{2}\sigma_3m_{0,-}J_0$, we obtain
        \begin{equation}
            \left(\frac{dm_0}{d\zeta}+\frac{i\zeta}{2}\sigma_3m_0\right)_{+}=\left(\frac{dm_0}{d\zeta}+\frac{i\zeta}{2}\sigma_3m_0\right)_{-}J_0.
        \end{equation}
        It's not difficult to verify the matrix function
        $\left(\frac{dm_0}{d\zeta}+\frac{i\zeta}{2}\sigma_3m_0\right)m_0^{-1}$ has no jump along the real axis and is an entire function
        with respect to $\zeta$. Combine \eqref{importm_0}, we can directly calculate that
        \begin{equation}\label{dpsiequeta}
            \left(\frac{dm_0}{d\zeta}+\frac{i\zeta}{2}\sigma_3m_0\right)m_0^{-1}=
            \left[\frac{dm^{pc}_{-k_0}}{d\zeta}+m^{pc}_{-k_0}\frac{i\nu}{\zeta}\sigma_3\right]\left(m^{pc}_{-k_0}\right)^{-1}+\frac{i\zeta}{2}\left[\sigma_3,m^{pc}_{-k_0}\right]\left(m^{pc}_{-k_0}\right)^{-1},
        \end{equation}
        The first term in the R.H.S of \eqref{dpsiequeta} tends to zero as $\zeta\rightarrow\infty$. We use $m^{pc}_{-k_0}=I+m^{pc}_{-k_0,1}\zeta^{-1}+O(\zeta^{-2})$
        as well as Liouville theorem to obtain that there exists a constant matrix $\beta^{mat}$ such that
        \begin{equation}
            \beta^{mat}:=\left(\begin{array}{cc}
                0 & {\beta}^r(\xi)\\
                {\gamma}^r(\xi) & 0
                \end{array}\right)
                =\frac{i}{2}\left[\sigma_3, m^{pc}_{-k_0,1}\right]=\left(\begin{array}{cc}
            0 & i[m^{pc}_{-k_0,1}]_{12}\\
            -i[m^{pc}_{-k_0,1}]_{21} & 0
            \end{array}\right).
        \end{equation}
        which implies that $[m^{pc}_{-k_0,1}]_{12}=-i{\beta}^r(\xi)$, $[m^{pc}_{-k_0,1}]_{21}=i{\gamma}^r(\xi)$.
        Use Liouville theorem again, we have
        \begin{equation}
            \left(\frac{dm_0}{d\zeta}+\frac{i\zeta}{2}\sigma_3m_0\right)=\beta^{mat}m_0.
        \end{equation}
        We rewrite the above equality to the following ODE systems
        \begin{align}
            \frac{dm_{0,11}}{d\zeta}+\frac{i\zeta}{2}m_{0,11}=\beta^r(\xi)m_{0,21},\label{pcODE1}\\
            \frac{dm_{0,21}}{d\zeta}-\frac{i\zeta}{2}m_{0,21}={\gamma}^r(\xi)m_{0,11},\label{pcODE2}
        \end{align}
        as well as
        \begin{align}
            \frac{dm_{0,12}}{d\zeta}+\frac{i\zeta}{2}m_{0,12}={\beta}^r(\xi)m_{0,22},\label{pcODE3}\\
            \frac{dm_{0,22}}{d\zeta}-\frac{i\zeta}{2}m_{0,22}={\gamma}^r(\xi)m_{0,12},\label{pcODE4}.
        \end{align}
        From \eqref{pcODE1} to \eqref{pcODE4}, we can solve that
        \begin{align}
            \frac{d^2m_{0,11}}{d\zeta^2}+\left(\frac{i}{2}+\frac{\zeta^2}{4}-{\beta}^r(\xi){\gamma}^r(\xi)\right)m_{0,11}=0,\label{pcpsi11}\quad
            \frac{d^2m_{0,21}}{d\zeta^2}+\left(-\frac{i}{2}+\frac{\zeta^2}{4}-{\beta}^r(\xi){\gamma}^r(\xi)\right)m_{0,21}=0,\\
            \frac{d^2m_{0,12}}{d\zeta^2}+\left(\frac{i}{2}+\frac{\zeta^2}{4}-{\beta}^r(\xi){\gamma}^r(\xi)\right)m_{0,12}=0,\quad
            \frac{d^2m_{0,22}}{d\zeta^2}+\left(-\frac{i}{2}+\frac{\zeta^2}{4}-{\beta}^r(\xi){\gamma}^r(\xi)\right)m_{0,22}=0.
        \end{align}
        We set $\nu={\beta}^r(\xi){\gamma}^r(\xi)$. For $m_{0,11}$, $\im\zeta>0$ we introduce the new variable $\tilde{\eta}=\zeta e^{-\frac{3i\pi}{4}}$, and
        the first equation of \eqref{pcpsi11} becomes
        \begin{equation}
            \frac{d^2m_{0,11}}{d\tilde{\eta}^2}+\left(\frac{1}{2}-\frac{\tilde{\eta}^2}{4}+i\nu\right)m_{0,11}=0.
        \end{equation}
        For $\zeta\in\mathbb{C}^{+}$, $0<{\rm Arg}\zeta<\pi$, $-\frac{3\pi}{4}<{\rm Arg}\tilde{\eta}<\frac{\pi}{4}$.
        We have $m_{0,11}=e^{-\frac{3\pi}{4}\nu(-k_0)}D_{i\nu(-k_0)}(e^{-\frac{3\pi}{4} i}\zeta)\sim\zeta^{i\nu}e^{-\frac{i}{4}\zeta^2}$
        corresponding to the $(1,1)$-entry of \eqref{pcpsiasy}. To save the space, we present the other results for $m_0$ below.

        The unique solution to  is\\
        when $\zeta\in\mathbb{C}^+$,
        \begin{align}
        m_0(\zeta)=\left(\begin{array}{cc}
        e^{-\frac{3\pi}{4}\nu(-k_0)}D_{i\nu(-k_0)}(e^{-\frac{3\pi}{4} i}\zeta) & -\frac{i\nu(-k_0)}{{\gamma^r}(\xi)}e^{\frac{\pi}{4}(\nu(-k_0)-i)}D_{-i\nu(-k_0)-1}(e^{-\frac{\pi i}{4} }\zeta)\\
        \frac{i\nu(-k_0)}{{\beta^r}(\xi)}e^{-\frac{3\pi}{4}(\nu(-k_0)+i)}D_{i\nu(-k_0)-1}(e^{-\frac{3\pi i}{4}}\zeta) & e^{\frac{\pi}{4}\nu(-k_0)}D_{-i\nu(-k_0)}(e^{-\frac{\pi}{4} i}\zeta)
        \end{array}\right).
        \end{align}
        when $\zeta\in\mathbb{C}^-$,
        \begin{align}
        m_0(\zeta)=\left(\begin{array}{cc}
        e^{\frac{\pi\nu(-k_0)}{4}}D_{i\nu(-k_0)}(e^{\frac{\pi}{4} i}\zeta) & -\frac{i\nu(-k_0)}{{\gamma^r}(\xi)}e^{-\frac{3\pi}{4}(\nu(-k_0)-i)}D_{-i\nu(-k_0)-1}(e^{\frac{3\pi i}{4} }\zeta)\\
        \frac{i\nu(-k_0)}{{\beta^r}(\xi)}e^{\frac{\pi}{4}(\nu(-k_0)+i)}D_{i\nu(-k_0)-1}(e^{\frac{\pi i}{4} }\zeta) & e^{-\frac{3\pi}{4}\nu(-k_0)}D_{-i\nu(-k_0)}(e^{\frac{3\pi}{4} i}\zeta)
        \end{array}\right),
        \end{align}
        Which is derived in \cite{lenellsmkdv}.

        From \eqref{m_0jump}, we know that $(m_{0,-})^{-1}m_{0,+}=J_0$ and
        \begin{align}
        q^r_1(-k_0)&=m_{0,-,11}m_{0,+,21}-m_{0,-,21}m_{0,+,11}\nonumber\\
        &=e^{\frac{\pi}{4}\nu(-k_0)}D_{i\nu(-k_0)}(e^{\frac{\pi}{4} i}\zeta)\cdot \frac{e^{-\frac{3\pi\nu(-k_0)}{4}}}{\beta^r(\xi)}\left[\partial_{\zeta}(D_{i\nu(-k_0)}(e^{-\frac{3\pi i}{4}}\zeta))+\frac{i\zeta}{2}D_{i\nu(-k_0)}(e^{-\frac{3\pi i}{4}}\zeta)\right]\nonumber\\
        &\hspace{1em}-e^{-\frac{3\pi}{4}\nu(-k_0)}D_{i\nu(-k_0)}(e^{-\frac{3\pi}{4} i}\zeta)\cdot \frac{e^{\frac{\pi\nu(-k_0)}{4}}}{{\beta^r}(\xi)}\left[\partial_{\zeta}(D_{i\nu(-k_0)}(e^{\frac{\pi i}{4}}\zeta))+\frac{i\zeta}{2}D_{i\nu(-k_0)}(e^{\frac{\pi i}{4}}\zeta)\right]\nonumber\\
        &=\frac{e^{-\frac{\pi}{2}\nu(-k_0)}}{{\beta^r}(\xi)}{\rm Wr}\left(D_{i\nu(-k_0)}(e^{\frac{\pi}{4} i}\zeta), D_{i\nu(-k_0)}(e^{-\frac{3\pi}{4} i}\zeta)\right)\nonumber\\
        &=\frac{e^{-\frac{\pi}{2}\nu(-k_0)}}{{\beta^r}(\xi)}\cdot\frac{\sqrt{2\pi}e^{\frac{\pi}{4}i}}{\Gamma(-i\nu(-k_0))}.
        \end{align}

        And
        \begin{align}
            {\beta^r}(\xi)=\frac{\sqrt{2\pi}e^{\frac{i\pi}{4}}e^{-\frac{\pi \nu(-k_0)}{2}}}{{q}^r_{1}(-k_0)\Gamma(-i\nu(-k_0))}, \quad
            {\gamma^r}(\xi)=\frac{\sqrt{2\pi}e^{-\frac{i\pi}{4}}e^{-\frac{\pi \nu(-k_0)}{2}}}{{q}^r_{2}(-k_0)\Gamma(i\nu(-k_0))}
        \end{align}
        Finally we have
        \begin{equation}\label{mpcxieta0}
            m^{pc}_{-k_0}=I+\frac{i}{\zeta}
                \left(\begin{array}{cc}
                0 & -{\beta^r}(\xi)\\
                {\gamma^r}(\xi) & 0
                \end{array}\right)+\mathcal{O}(\zeta^{-2}).
        \end{equation}

\end{document}